\newif\ifsubmission\submissionfalse
\newcommand\vldbavailabilityurl{URL_TO_YOUR_ARTIFACTS}
\newcommand\vldbpagestyle{plain} 
\pgfplotsset{compat=1.17}
\definecolor{mypurple}{HTML}{f875aa}
\definecolor{mypink}{HTML}{c6ffc1}
\definecolor{myred}{HTML}{3edbf0}
\definecolor{myorange}{HTML}{77acf1}
\definecolor{myblue}{HTML}{04009a}
\definecolor{mydarkpurple}{HTML}{E75480}
\definecolor{mydarkpink}{HTML}{22bc22}
\definecolor{mydarkred}{HTML}{8B0000}
\definecolor{mydarkorange}{HTML}{FF8C00}
\definecolor{mydarkblue}{HTML}{04009a}
\definecolor{mybluegreen}{HTML}{30BFBF}
\definecolor{mydarkgreen}{HTML}{0D47A1}
\pgfplotsset{
    compat=1.3,
    legend image code/.code={
        \draw [#1] (0cm,-0.1cm) rectangle (0.6cm,0.1cm);
    },
}
\setlist{noitemsep,topsep=0pt,parsep=0pt,partopsep=0pt}
\def\thm@space@setup{\thm@preskip=2pt
\thm@postskip=2pt}
\theoremstyle{plain}
\newtheorem{theorem}{Theorem}[section]
\newtheorem{lemma}[theorem]{Lemma}
\newtheorem{definition}[theorem]{Definition}
\newtheorem{example}[theorem]{Example}
\newcommand{\defn}[1]{\textbf{\textit{#1}}}
\algrenewcommand\algorithmicindent{1em}%
\newcommand{\var}[1]{\mathrm{Var} \left[ #1 \right]}
\DeclarePairedDelimiter\ceil{\lceil}{\rceil}
\DeclarePairedDelimiter\floor{\lfloor}{\rfloor}
\definecolor{mygreen}{RGB}{20,140,80}
\definecolor{linkcolor}{RGB}{0,0,230}
\definecolor{mylightgray}{RGB}{230,230,230}
\definecolor{verylightgray}{RGB}{245,245,245}
\newcommand{\etal}[0]{et al.\xspace}
\newcounter{myalgctr}
\newtcolorbox{OuterBox}[1][]{%
    breakable,
    enhanced,
    frame hidden,
    interior hidden,
    left=-5pt,
    right=-5pt,
    top=-5pt,
    float=p,
    boxsep=0pt,
    arc=0pt
#1}%
\newtcolorbox{InnerBox}[1][]{%
    enforce breakable,
    enhanced,
    colback=gray,
    colframe=white,
#1}%
\newenvironment{tbox}{
\vspace{0.2cm}
\begin{tcolorbox}[width=\columnwidth,
                  enhanced,
                  boxsep=2pt,
                  left=1pt,
                  right=1pt,
                  top=4pt,
                  boxrule=1pt,
                  arc=0pt,
                  colback=white,
                  colframe=black,
	              breakable
                  ]
}{
\end{tcolorbox}
}
\newcommand{\tboxhrule}[0]{\vspace{0.1cm} {\color{black} \hrule} \vspace{0.2cm}}
\newenvironment{titledtbox}[1]{\begin{tbox}#1 \tboxhrule}{\end{tbox}}
\newcommand{\core}{k}
\newcommand{\eps}{\varepsilon}
\newcommand{\gn}{\mathcal{F}}
\newcommand{\kest}{\hat{\core}}
\newcommand{\whp}{whp\xspace}
\newcommand{\prob}{\mathsf{Pr}}
\newcommand{\geom}{\mathsf{Geom}}
\newcommand{\lf}{\eta/5}
\newcommand{\upexp}{(1+\lf)}
\newcommand{\upexpold}{(2+\lambda)(1+\lf)}
\newcommand{\hnup}{\widehat{\nup}}
\newcommand{\expect}{\mathbb{E}}
\newcommand{\nparam}{\nparaminside}
\newcommand{\lcur}{r}
\newcommand{\nparaminside}{\eps/(8\log^2n)}
\newcommand{\domain}{\mathcal{D}}
\newcommand{\dedge}{X}
\newcommand{\degen}{d}
\newcommand{\multfactor}{\phi}
\newcommand{\addfactor}{\zeta}
\newcommand{\order}{D}
\newcommand{\df}{\gs_f}
\newcommand{\gs}{GS}
\newcommand{\return}{\textbf{Return}\xspace}
\newcommand{\draw}{\geom(\nparam)}
\newcommand{\reals}{\mathbb{R}}
\newcommand{\norm}[1]{\left\Vert#1\right\Vert}
\newcommand{\range}{Range}
\newcommand{\rangeout}{\mathcal{Y}}
\newcommand{\adj}{\mathbf{a}}
\newcommand{\lap}{\text{Lap}}
\newcommand{\rout}{S}
\newcommand{\integers}{\mathbb{Z}}
\newcommand{\numgrouplevels}{2\log n}
\newcommand{\mech}{\mathcal{M}}
\newcommand{\nup}{\mathcal{U}}
\newcommand{\alg}{\mathcal{A}}
\newcommand{\tcountalgo}{$\text{EdgeOrient}_\Delta$}
\newcommand{\kcorealgo}{$k$-CoreD}
\newcommand{\julian}[1]{}
\newcommand{\pranay}[1]{}
\newcommand{\qq}[1]{}
\newcommand{\babis}[1]{}
\newcommand{\julian}[1]{{\color{cyan} Julian: #1}}
\newcommand{\pranay}[1]{{\color{brown}#1}}
\newcommand{\qq}[1]{{\color{blue} Quanquan: #1}}
\newcommand{\babis}[1]{{\color{red} Babis: #1}}
\newcommand{\revision}[1]{#1}
\newcommand{\privatefraction}{f}
\newcommand{\myparagraph}[1]{\vspace{1pt}\noindent {\bf #1.}}
\newcommand{\eat}[1]{}
\newcommand{\squishlist}{
 \begin{list}{$\bullet$}
  { \setlength{\itemsep}{0pt}
     \setlength{\parsep}{1pt}
     \setlength{\topsep}{1pt}
     \setlength{\partopsep}{0pt}
     \setlength{\leftmargin}{1em}
     \setlength{\labelwidth}{1em}
     \setlength{\labelsep}{0.5em} } }
\newcommand{\squishend}{
  \end{list}
}
\crefname{theorem}{Theorem}{Theorems}
\Crefname{lemma}{Lemma}{Lemmas}
\Crefname{claim}{Claim}{Claims}
\Crefname{observation}{Observation}{Observations}
\Crefname{algorithm}{Algorithm}{Algorithms}
\Crefname{myalgctr}{Algorithm}{Algorithms}
\Crefname{invariant}{Invariant}{Invariant}
\Crefname{challenge}{Challenge}{Challenges}
\renewcommand\theHALG@line{\thealgorithm.\arabic{ALG@line}}
\algnewcommand\algorithmicinput{\textbf{Input:}}
\algnewcommand\algorithmicoutput{\textbf{Output:}}
\algnewcommand\algorithmicreduceadd{\textbf{ReduceAdd}}
\algnewcommand\algorithmicpardo{\textbf{do}}
\patchcmd{\maketitle}{\@copyrightpermission}{
   \begin{minipage}{0.3\columnwidth}
     \href{https://creativecommons.org/licenses/by-nc-nd/4.0/}{\includegraphics[width=0.90\textwidth]{figures/cc-by-nc-nd4acm.png}}
   \end{minipage}\hfill
   \begin{minipage}{0.7\columnwidth}
     \href{https://creativecommons.org/licenses/by-nc-nd/4.0/}{}
   \end{minipage}

   \vspace{5pt}
}{}{}
\begin{document}
\begin{abstract}
The rise of massive networks across diverse domains necessitates sophisticated graph analytics, often involving sensitive data and raising privacy concerns. This paper addresses these challenges using \emph{local differential privacy (LDP)}, which enforces privacy at the individual level, where \emph{no third-party entity is trusted}, unlike centralized models that assume a trusted curator.

We introduce novel LDP algorithms for two fundamental graph statistics: $k$-core decomposition and triangle counting. Our approach leverages previously unexplored input-dependent private graph properties, specifically the degeneracy and maximum degree of the graph, to improve theoretical utility. Unlike prior methods, our error bounds are determined by the maximum degree rather than the total number of edges, resulting in significantly tighter theoretical guarantees. For triangle counting, we improve upon the previous work of Imola, Murakami, and Chaudhury~\cite{IMC21locally, IMC21communication},
which bounds error in terms of the total number of edges. Instead, our algorithm achieves error bounds based on the graph's degeneracy by leveraging a differentially private out-degree orientation, a refined variant of Eden et al.'s \eat{[ICALP `23]} randomized response technique~\cite{ELRS23}, and a novel, intricate analysis, yielding improved theoretical guarantees over prior state-of-the-art.

Beyond theoretical improvements, we are the first to evaluate the practicality of local DP algorithms in a distributed simulation environment,
unlike previous works that tested on a single processor. Our experiments on real-world datasets demonstrate substantial accuracy improvements, with our $k$-core decomposition achieving errors within \(\mathbf{3}\)\textbf{x} the exact values—far outperforming the \(\mathbf{131}\)\textbf{x} error in the baseline of Dhulipala et al.~\cite{DLRSSY22} \eat{[FOCS `22]}. Additionally, our triangle counting algorithm reduces multiplicative approximation errors by up to \textbf{six orders of magnitude} compared to prior methods, all while maintaining competitive runtime performance.
\end{abstract}

\title{Practical and Accurate Local Edge Differentially Private Graph Algorithms}

\author{Pranay Mundra}
\affiliation{\institution{Yale University} \city{New Haven} \state{CT}\country{USA}}
\email{pranay.mundra@yale.edu}
\author{Charalampos Papamanthou} 
\affiliation{\institution{Yale University} \city{New Haven} \state{CT}\country{USA}}
\email{charalampos.papamanthou@yale.edu}
\author{Julian Shun} 
\affiliation{\institution{MIT CSAIL}\city{Cambridge} \state{MA}\country{USA}}
\email{jshun@mit.edu}
\author{Quanquan C. Liu} 
\affiliation{\institution{Yale University} \city{New Haven} \state{CT}\country{USA}}
\email{quanquan.liu@yale.edu}

\renewcommand{\shortauthors}{Pranay Mundra, Charalampos Papamanthou, Julian Shun, Quanquan C. Liu}



\settopmatter{printfolios=true}
\maketitle

\pagestyle{\vldbpagestyle}

\ifdefempty{\vldbavailabilityurl}{}{
\begingroup\small\noindent\raggedright\textbf{Artifact Availability:}\\
The source code, data, and/or other artifacts have been made available at \url{https://doi.org/10.5281/zenodo.15741879}. 
\endgroup
}


\section{Introduction}
\begin{figure}[!t]
    \centering
    \includegraphics[width=0.65\linewidth]{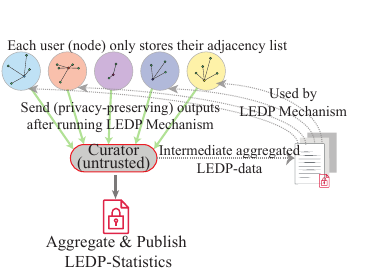}
    \caption{Local edge differential privacy (LEDP) Model}
    \label{fig:ledp-model}
\end{figure}
Graph statistics such as the $k$-core decomposition and triangle count provide important characteristics about the underlying graph, such as its well-connected communities.
These analytics, often performed on sensitive and private graphs, are regularly shared with a wide audience, including researchers, companies, governments, and the public. As such, it is vital to investigate techniques that can safeguard these published graph statistics from privacy attacks.

\eat{The \emph{$k$-core decomposition} assigns an ``importance'' value to each individual in a graph, representing roughly 
its influence on the rest of the graph.; such graphs
include social, disease transmission, email, or 
any other type of real-world network. The $k$-core decomposition is a powerful tool 
for informing researchers about the composition of 
real-world graphs, such as social, disease transmission, or email graphs. It is widely used to analyze the structure of real-world graphs, including social, email, and disease transmission networks.}
The \emph{$k$-core decomposition} assigns an ``importance'' value to each node, roughly representing its influence within the graph. It is widely used to analyze the structure of real-world graphs, including social, email, and disease transmission networks. Formally, a $k$-core of a graph is a maximal subgraph
where the induced degree of every vertex in the subgraph is at least $k$. 
The $k$-core decomposition (see~\Cref{def:kcore} and~\Cref{fig:kcore}) assigns a number, denoted as $core(v)$, to each vertex $v$. 
This number, $core(v)$, represents the largest value of $k$ for which the $k$-core still includes vertex $v$.
Unfortunately, these values pose 
privacy risks. 


Consider the application of $k$-core decomposition to COVID transmission data~\cite{qin2020analysis,s2021superspreading,serafino2022digital,guo2022integrative,chang2022comparative}
and other disease transmission networks~\cite{ciaperoni2020relevance} such as HIV~\cite{fujimoto2022integrated,holme2017cost}.
The \eat{$k$-core decomposition is} core numbers are generated and published, sometimes even with location data~\cite{covid19kcore}.
Revealing the precise core numbers of every individual can lead to privacy breaches.
Consider a scenario where exactly $c$ individuals have a core number of $c-1$. 
This implies they form a clique of $c$ vertices, all connected. 
In disease transmission graphs, this directly exposes a cluster of sensitive disease transmissions! 
Therefore, it's essential to release privacy-preserving core numbers. 

\eat{Like $k$-core decomposition, \emph{triangle counting} also has broad applications in various problems that use sensitive data.
The triangle count of a graph is the number of cycles with exactly three vertices.
Applications include community detection in social networks~\cite{palla2005uncovering,prat2012shaping,ST21,MSEW18},
recommendation systems~\cite{ZZLSZ14,LZZY18}, 
social and network science measurements~\cite{granovetter1977strength,
watts1998collective, newman2004finding,farkas2001spectra,foucault2010friend,leskovec2008microscopic}, 
and clustering~\cite{tseng2021parallel}.
There is much recent interest in the security community~\cite{IMC21communication,IMC21locally,liu2024edge,liu2022collecting,liu2023cargo}
in protecting user privacy when outputting triangle counts because of the potential of privacy attacks.}

Similarly, \emph{triangle counting} is widely used in applications that process sensitive data. The triangle count, which measures the number of three-node cycles in a graph, is a fundamental metric in community detection~\cite{palla2005uncovering,prat2012shaping,ST21,MSEW18}, recommendation systems~\cite{ZZLSZ14,LZZY18}, and clustering~\cite{tseng2021parallel}. In databases, triangle counting is essential for graph analytics frameworks~\cite{noauthororeditorneo4j} and is leveraged in query optimization~\cite{al2018triangle, bishnu2025arboricityrandomedgequeries} and fraud detection~\cite{computers10100121}. However, exposing triangle counts without privacy guarantees can lead to inference attacks that compromise user confidentiality. Recent works in security and privacy~\cite{IMC21communication,IMC21locally,liu2024edge,liu2022collecting,liu2023cargo} have highlighted the risks associated with publishing triangle counts, reinforcing the importance of privacy-preserving graph analytics.

\eat{As databases increasingly incorporate graph-based queries and analytics, ensuring privacy in fundamental graph statistics such as $k$-core decomposition and triangle counting becomes critical. This motivates the need for rigorous privacy mechanisms that protect structural graph information while preserving utility.}
\revision{

Approximate values for such statistics are widely used to improve scalability and efficiency with minimal utility impact. In graph databases and uncertain networks, approximate cores enable analysis under probabilistic edges~\cite{Bonchi2014Core}; in social and recommendation systems, they support influence detection and improve accuracy~\cite{kitsak2010,doi:10.1142/S012918312150087X}. Often, approximate core numbers are used for preprocessing other algorithms such as clustering~\cite{MTXP21,giatsidis2016k}. Triangle counts are used in clustering, fraud detection, and query optimization, where small errors are tolerable~\cite{Alon1997Triangle,liu2023cargo,DBLP:journals/corr/abs-2502-15379}. In dynamic domains like cybersecurity and biology, approximations allow timely insights from noisy data~\cite{SL13,WLT25,liu2023some}. Approximate statistics with strong privacy guarantees are often practical and effective when these applications use sensitive data~\cite{liu2023cargo,DLRSSY22,IMC21locally,IMC21communication,ELRS23,liu2023some}.
}



\revision{
Differential privacy (DP)~\cite{DMNS06} is often considered the ``gold standard'' in protecting individual privacy. Traditionally, DP has been studied in the \textbf{central model}, where a trusted curator has access to raw data and applies DP mechanisms before releasing the results. However, this assumption is often impractical, especially in modern systems that rely on decentralized or federated architectures. This motivates the \textbf{local model} of differential privacy, introduced by the seminal works of 
Evfimievski \etal~\cite{evfimievski2003limiting} and Kasiviswanathan \etal~\cite{kasiviswanathan2011can}, which
recently gained much attention in the theoretical computer science~\cite{DLRSSY22,ELRS23,dinitz2024tight,HSZ24},
cryptography and security~\cite{IMC21communication,IMC21locally,fu2023gc,li2023fine,liu2022crypto,wang2023mago,zhong2023disparate,liu2023cargo}, 
data mining~\cite{hillebrand2023communication,hillebrand2023unbiased,liu2024edge,liu2022collecting,
murakami2023automatic,brito2023global,nguyen2024faster}, query answering~\cite{10.1145/3299869.3300102, 10415644, 10.1145/3299869.3319891,farias2020local, roth2021mycelium},
and machine learning~\cite{wang2023differential,lin2022towards,mueller2022sok,hidano2022degree,zhu2023blink,xiang2023preserving,
seif2022differentially,guo2023community,hehir2022consistent,jiang2023personalized} communities. In this model, each user independently applies DP mechanisms before sharing their privacy-preserving outputs with an untrusted curator. It offers stronger privacy guarantees by never exposing raw data and is naturally suited to distributed settings; it has been used in prominent cases including federated learning~\cite{mahawaga2022local,NHC22,KGS21}, 
the 2020 U.S.\ Census \cite{abowd20222020}, and by companies such as Apple~\cite{appleLearningWith}.

While DP in the traditional database setting focuses on protecting individual records, many real-world datasets are inherently relational, represented as graphs. In these cases, the sensitive information are the edges, i.e., the connections between entities. This motivates the need for \textit{local differentially private (LDP) graph algorithms}. The \emph{local edge differential privacy} model (LEDP), as introduced in recent works~\cite{Qin17,IMC21locally,DLRSSY22}, represents a novel approach designed to ensure local privacy for \emph{graph outputs} (\cref{fig:ledp-model}). Graph data is increasingly integral to modern database systems, underpinning applications in knowledge graphs, social networks, cybersecurity, and financial fraud detection. Many relational databases now support graph extensions (e.g., SQL-based graph queries), while specialized graph databases~\cite{noauthororeditorneo4j, bebee2018amazon} are widely deployed in industry. However, applying LEDP algorithms in these settings is challenging: unlike in tabular data, where individual data points can be perturbed, perturbing edges within a graph introduces structural dependencies and high computational cost. 

All previous implementations of local differentially private graph algorithms~\cite{Qin17,imola2023differentially,IMC21communication}
use \emph{Randomized Response (RR)}~\cite{warner1965randomized}, which independently flips the presence or absence of each edge with a probability based on the privacy parameter $\eps$. While simple and composable, RR introduces substantial noise, especially for small $\eps \in (0, 1]$, increasing the \emph{density}\footnote{The density is the ratio of the 
number of edges to the number of nodes.} of the input graph and limiting both the accuracy and scalability of the algorithm. Dhulipala et al.~\cite{DLRSSY22} proposed the first LEDP algorithm that goes beyond RR, leveraging the geometric mechanism for $k$-core decomposition. However, their algorithm is purely theoretical, and their error bounds scale poorly with graph size. In particular, they give additive error bounds $\geq \frac{\log_2^3(n)}{\eps}$ (where $n$ is the number of vertices);
on a graph with $n = 10^5$ nodes and $\eps = 0.5$, this translates to an additive error of $9164$. Most real-world graphs of this size 
have max core numbers of $10^2$ magnitude, so the additive error itself leads to a $\geq 91$-multiplicative
approximation factor---much too large for any practical use. 

This work simultaneously develops \emph{both} new theoretical and implementation techniques that, together with Randomized Response, enable provably private, accurate, and computationally efficient LEDP algorithms. We make the following contributions:

\squishlist
    \item We design a \emph{novel} LEDP $k$-core decomposition algorithm that doesn't use Randomized Response and provides provable privacy and error guarantees. Leveraging the input-dependent maximum degree property of the graph, we achieve improved theoretical bounds over the LEDP $k$-core decomposition algorithm of Dhulipala et al.~\cite{DLRSSY22}~(see~\cref{table:additive_error}). Two key innovations lie in thresholding the maximum number of levels a node can move up based on its noisy (private) degree
    and the use of bias factors to reduce the impact of noise. Since a node's core number is upper bounded by its degree, our algorithm offers stronger theoretical guarantees for most real-world graphs, where the maximum degree is significantly smaller than the number of nodes.  

    \item We present the \emph{first} implementation of a private $k$-core decomposition algorithm and demonstrate through empirical evaluation that it achieves an average error of $\mathbf{3}$\textbf{x} the exact values, markedly improving upon prior approaches. Furthermore, our LEDP implementation attains error rates that closely align with the theoretical approximation bounds of the best \emph{non-private} algorithms, underscoring its practical efficiency and accuracy.  

    \item We design a \emph{novel} LEDP triangle counting algorithm that modifies our $k$-core decomposition to construct a low out-degree ordering, minimizing each node’s out-degree. Leveraging this ordering, our algorithm achieves improved theoretical error bounds over the best-known methods of Imola et al.~\cite{IMC21communication,IMC21locally} and Eden et al.~\cite{ELRS23} for bounded degeneracy graphs, common in real-world networks~(see~\cref{table:additive_error}). We present a novel analysis to analyze the non-trivial error bounds based on the Law of Total Expectation/Variance. Our implementation reduces relative error by up to \textbf{89x} and improves the multiplicative approximation by up to \textbf{six orders of magnitude} over the best previous implementation~\cite{IMC21communication}, while maintaining competitive runtime.

    \item Recognizing that the LEDP model (see~\cref{sec:ledp-dist}) is inherently decentralized, we present the first evaluation of LEDP graph algorithms in a simulated distributed environment with actual message passing. Unlike prior studies that relied on a single processor, we simulate a distributed environment by partitioning the graph across multiple processors. This approach provides a more realistic assessment of both computational and communication overhead in large-scale distributed scenarios. We demonstrate the practicality of this evaluation by applying it to our $k$-core decomposition and triangle counting algorithms.

    \item We present the first LEDP graph algorithm implementations that scale to \defn{billion}-edge graphs, whereas prior implementations were tested on graphs with millions of edges~\cite{IMC21locally,IMC21communication}. Our evaluation framework serves as a valuable tool for designing and testing other LEDP algorithms. Our source code is available at~\cite{githubCode}.   
\squishend
}


\begin{table}[!t]
    \centering
    \resizebox{\columnwidth}{!}{%
    \begin{tabular}{|c|c|c|}
    \hline
    & Previous Work & This Work \\
    \hline 
    $k$-Core & $O\left(\frac{\log^3(n)}{\eps}\right)$~\cite{DLRSSY22} & $O\left(\frac{\log(D_{\max})\log^2(n)}{\epsilon}\right)$ \\
    \hline 
    Triangle Counting & $O\left(\frac{n^{3/2}}{\eps^3} + \frac{\sqrt{C_4}}{\eps^2}\right)$~\cite{ELRS23} & $O\left(\frac{\sqrt{n}d\log^3 n}{\eps^{2}} + \sqrt{\overrightarrow{C}_4}\right)$ \\
    \hline 
    \end{tabular}
    }
    \caption{\revision{\footnotesize{Additive error bounds compared to previous work. Here $n:$ number of nodes, $D_{max}:$ maximum degree, $\eps:$ privacy parameter, $d:$ degeneracy, and $C_4$ and $\overrightarrow{C}_4$ is the number of $4$-cycles and oriented $4$-cycles, respectively.}}}
    \label{table:additive_error}
\end{table}

\eat{
\eat{\smallskip}\noindent\textbf{Central and local differential privacy.}
Differential privacy~\cite{DMNS06} is often considered the ``gold standard'' in protecting the privacy of 
individuals. Differential privacy has traditionally been studied in the \emph{central}
model where \emph{all} of the private information is revealed to a \emph{trusted curator} who is 
responsible for parsing the data and producing privacy-preserving outputs. But such an assumption is often unrealistic.
Today's world requires computation over \emph{decentralized}
networks involving the transfer of sensitive personal information; hence,
preserving \emph{local} privacy is crucial. 

The \emph{local model of differential privacy}, introduced by the seminal works of 
Evfimievski \etal~\cite{evfimievski2003limiting} and Kasiviswanathan \etal~\cite{kasiviswanathan2011can},
has recently gained much attention in the theoretical computer science~\cite{DLRSSY22,ELRS23,dinitz2024tight,HSZ24},
cryptography and security~\cite{IMC21communication,IMC21locally,fu2023gc,li2023fine,liu2022crypto,wang2023mago,zhong2023disparate,liu2023cargo}, 
data mining~\cite{hillebrand2023communication,hillebrand2023unbiased,liu2024edge,liu2022collecting,
murakami2023automatic,brito2023global,nguyen2024faster}, query answering~\cite{10.1145/3299869.3300102, 10415644, 10.1145/3299869.3319891,farias2020local, roth2021mycelium},
and machine learning~\cite{wang2023differential,lin2022towards,mueller2022sok,hidano2022degree,zhu2023blink,xiang2023preserving,
seif2022differentially,guo2023community,hehir2022consistent,jiang2023personalized} communities. 
\eat{In this model, each user independently applies differential privacy mechanisms to their data before sharing their (privacy-preserving)
outputs with an \emph{untrusted} curator. The curator's sole responsibility is to aggregate these outputs and compute relevant statistics, without
accessing the original, sensitive, private data. Unlike the central model, where a trusted curator collects raw and sensitive data, the local model offers the 
\emph{strongest privacy guarantees} by preventing data sharing with any third party.}
In this model, users independently apply differential privacy mechanisms before sharing their (privacy-preserving) outputs with an \emph{untrusted} curator, who aggregates the data to compute statistics without accessing the original sensitive information. \eat{Unlike the central model, where a trusted curator collects raw data, the} The local model ensures \emph{stronger privacy guarantees} by preventing direct data sharing.  
Because of both its strong privacy guarantees
and its easy adaptation to distributed applications, the local model of differential privacy
has been used in a variety 
of prominent cases including federated learning~\cite{mahawaga2022local,NHC22,KGS21}, 
the 2020 U.S.\ Census \cite{abowd20222020}, and by companies such as Apple~\cite{appleLearningWith}.

\eat{\noindent\textbf{Local Edge Differential Privacy and Randomized Response.} Despite interest in local privacy, local differentially private \emph{graph algorithms} have only recently received much interest. The \emph{local edge differential privacy} model (LEDP), as introduced in recent works~\cite{Qin17,IMC21locally,DLRSSY22}, represents a novel approach designed to ensure local privacy for \emph{graph outputs} (\cref{fig:ledp-model}). In contrast to traditional databases, where the focus is on protecting individual entries, the primary concern in graph data is protecting the sensitive relationships between individuals. Graph data is increasingly central to modern database systems, powering applications in knowledge graphs, social network analytics, cybersecurity, and financial fraud detection. Many relational databases now support graph extensions (e.g., SQL-based graph queries), and specialized graph databases (e.g., Neo4j, Amazon Neptune) are widely used in industry. Implementing LEDP graph algorithms presents significant challenges, primarily due to the large errors introduced by the strong privacy guarantees and the considerable computational resources required. Unlike the setting where individual data points can be perturbed using noise, LEDP graph algorithms involve perturbing the edges within a graph, which is inherently more complex.}

\noindent\textbf{Local Edge Differential Privacy and Randomized Response.} Despite growing interest in local privacy, \emph{local differentially private (LDP) graph algorithms} have only recently gained significant attention. The \emph{local edge differential privacy (LEDP) model}, introduced in recent works~\cite{Qin17,IMC21locally,DLRSSY22}, provides a novel framework for ensuring local privacy in \emph{graph outputs} (\cref{fig:ledp-model}). Unlike traditional databases, where privacy mechanisms focus on protecting individual records, graph data introduces the additional challenge of safeguarding \emph{sensitive relationships} between entities. Graph data is increasingly integral to modern database systems, underpinning applications in knowledge graphs, social network analysis, cybersecurity, and financial fraud detection. Many relational databases now support graph extensions (e.g., SQL-based graph queries), while specialized graph databases~\cite{noauthororeditorneo4j, bebee2018amazon} are widely deployed in industry. These systems rely on efficient graph analytics, yet integrating LEDP mechanisms remains challenging due to the high computational overhead and large errors induced by strict privacy constraints. Unlike the setting where individual data points can be perturbed using noise, LEDP graph algorithms involve perturbing the edges within a graph, which is inherently more complex.

All previous implementations of local differentially private graph algorithms~\cite{Qin17,imola2023differentially,IMC21communication}
use \emph{Randomized Response (RR)}~\cite{warner1965randomized} to perturb graph edges. 
This mechanism systematically examines every pair of nodes in the input graph: for each existing edge, it deletes it with probability $1/(e^{\eps} + 1)$, and for each non-existent edge, it inserts an edge with the same probability, where $\eps$ is the privacy parameter.
For small $\eps$ (in particular $0 < \eps \leq 1$), the \emph{density}\footnote{The density is the ratio of the 
number of edges to the number of nodes.}
of the input graph increases 
by a significant amount. 
Thus, while Randomized Response is an easy and convenient method for obtaining LEDP, it comes at a cost of large error 
and computation time. \eat{A major goal of our paper is to introduce additional new theoretical and implementation 
techniques that together with Randomized Response helps us obtain provably private, accurate, and computationally efficient LEDP algorithms.}

\eat{\smallskip}\noindent\textbf{Beyond Randomized Response.} 
Dhulipala et al.~\cite{DLRSSY22} were the first to study LEDP algorithms that used
privacy mechanisms beyond Randomized Response by 
using the geometric mechanism on small-round parallel algorithms
for $k$-core decomposition. 
However, their results are purely theoretical
and they do not discuss how to translate their theoretical guarantees to practice. 
In particular, they give additive error bounds $\geq \frac{\log_2^3(n)}{\eps}$ (where $n$ is the number of vertices);
on a graph with $n = 10^5$ nodes and $\eps = 0.5$, this translates to an additive error of $9164$. Most real-world graphs of this size 
have max core numbers of $10^2$ magnitude, so the additive error itself leads to a $\geq 91$-multiplicative
approximation factor---much too large for any practical use.

\eat{In this paper, we introduce a novel algorithm
to obtain LEDP $k$-core decompositions, by introducing a novel degree-thresholding technique to~\cite{DLRSSY22}, with better error guarantees 
in terms of the \emph{maximum degree} instead of the total number of nodes in the graph. 
\cref{table:kcore} (in~\cref{sec:experiments}) 
shows that the max degree is often much smaller than the number of nodes/edges.
Using our $k$-core decomposition algorithm, we develop a novel LEDP edge orientation algorithm where 
the out-degree is approximately bounded by the maximum core number (also known as the degeneracy) of the graph. 
We then formulate a novel variant of the RR algorithm of Eden \etal~\cite{ELRS23} and combine it with our 
LEDP out-degree orientation algorithm. This novel
triangle counting algorithm has error proportional to the maximum core number, improving
the best previous error bounds~\cite{IMC21communication,IMC21locally} for broad classes of graphs. \cref{table:kcore}
shows that the max core number is even smaller than the maximum degree. 
Since the maximum core number is an input-dependent private property of the graph, previously unexplored 
in local graph algorithms, accurately computing it while preserving privacy presents significant 
challenges. We present a novel, intricate analysis of our triangle counting algorithm's utility.}

\eat{A major goal of our paper is to introduce additional new theoretical and implementation 
techniques that together with Randomized Response helps us obtain provably private, accurate, and computationally efficient LEDP algorithms. We propose a novel LEDP $k$-core decomposition algorithm that incorporates a degree-thresholding technique to~\cite{DLRSSY22}, improving error guarantees by bounding them in terms of the \emph{maximum degree} rather than the total number of nodes. As demonstrated in~\cref{table:kcore} (in~\cref{sec:experiments}), the maximum degree is often significantly smaller than the number of nodes or edges, leading to improved accuracy in practical settings.

Building on our $k$-core decomposition algorithm, we develop a novel LEDP edge orientation algorithm, ensuring that the out-degree is approximately bounded by the graph's maximum core number (degeneracy). We further refine this approach by formulating a variant of the Randomized Response (RR) algorithm of Eden \etal~\cite{ELRS23} and integrating it with our LEDP out-degree orientation method. This results in a new triangle counting algorithm with error proportional to the maximum core number, improving upon the best-known error bounds~\cite{IMC21communication,IMC21locally} for a broad class of graphs. As shown in~\cref{table:kcore}, the maximum core number is typically even smaller than the maximum degree, further enhancing the algorithm’s accuracy. Since the maximum core number is an input-dependent private property of the graph, previously unexplored  in local graph algorithms, accurately computing it while preserving privacy presents significant challenges. We present a novel, intricate analysis of our triangle counting algorithm's utility.



\eat{\smallskip\noindent\textbf{Evaluating LEDP in a Distributed Setting}}
To bridge the gap between theory and practice, we conduct the first evaluation of LEDP algorithms in a \eat{realistic} distributed simulation environment \eat{with actual message passing}. While the privacy model remains consistent with prior works~\cite{IMC21locally, IMC21communication, Qin17, DLRSSY22}, our evaluation diverges from previous studies, which typically tested LEDP algorithms on a single processor. Instead, we simulate a distributed execution by partitioning the graph across multiple processors, each independently running LEDP algorithms on its assigned subgraph. These processors communicate privacy-preserving outputs via message passing over multiple rounds, closely modeling real-world distributed environments. We apply this evaluation framework to $k$-core decomposition and triangle counting, demonstrating the scalability and practicality of our algorithms in distributed settings.}

\eat{\subsection{Summary of Contributions}}

A major goal of this work is to develop new theoretical and implementation techniques that, together with Randomized Response, enable provably private, accurate, and computationally efficient LEDP algorithms. We propose a novel LEDP $k$-core decomposition algorithm that incorporates a degree-thresholding technique to~\cite{DLRSSY22}, improving error guarantees by bounding them in terms of the \emph{maximum degree} rather than the total number of nodes. As demonstrated in~\cref{table:kcore} (in~\cref{sec:experiments}), the maximum degree is often significantly smaller than the number of nodes or edges, leading to improved accuracy in practical settings.  

Building on our $k$-core decomposition algorithm, we introduce a novel LEDP edge orientation technique, ensuring that the out-degree is approximately bounded by the graph's maximum core number (degeneracy). We further refine this approach by formulating a variant of the Randomized Response (RR) algorithm of Eden et al.~\cite{ELRS23} and integrating it with our LEDP out-degree orientation method. This results in a new triangle counting algorithm with error proportional to the maximum core number, improving upon the best-known error bounds~\cite{IMC21communication,IMC21locally} for a broad class of graphs. As shown in~\cref{table:kcore}, the maximum core number is typically even smaller than the maximum degree, further enhancing accuracy. Since the maximum core number is an input-dependent private property of the graph, previously unexplored in local graph algorithms, accurately computing it while preserving privacy presents significant challenges. \eat{We present a detailed analysis of our triangle counting algorithm's utility 
in~\cref{subsec:tcount-theory}.} {We present our triangle counting algorithm's utility in~\cref{subsec:tcount-theory}, and refer the reader to the full paper~\cite{githubFullPaper} for a detailed analysis.}  

To bridge the gap between theory and practice, we conduct the first evaluation of LEDP algorithms in a distributed simulation environment. While maintaining the same privacy model as prior works~\cite{IMC21locally, IMC21communication, Qin17, DLRSSY22}, our evaluation differs by simulating a distributed execution rather than relying on a single-processor implementation. We partition the graph across multiple processors, each independently running LEDP algorithms on its assigned subgraph while communicating privacy-preserving outputs via message passing over multiple rounds. This closely models real-world distributed environments. We apply this evaluation to both $k$-core decomposition and triangle counting, demonstrating the scalability and practicality of our algorithms in large-scale distributed settings.  

Our approach integrates techniques from parallel and distributed algorithms, combining algorithmic and engineering insights to enhance practicality. Our algorithms scale to significantly larger graphs than those considered in prior private algorithms, handling up to billions of edges. We implement all our algorithms in Golang~\cite{DK15}. \eat{and, unlike previous works that simulated LEDP algorithms on a single core while ignoring communication overhead, we fully account for communication costs in our distributed simulation.} In summary, we make the following contributions:  



\eat{In this paper, we present several contributions addressing the challenge of \emph{provably practical and accurate local edge differentially private (LEDP) graph algorithms}. Our approach integrates techniques from parallel and distributed algorithms, combining algorithmic and engineering insights to enhance practicality.  Our algorithms scale to graphs significantly larger than those considered in prior private algorithms, handling up to billions of edges. We implement all our algorithms in Golang~\cite{DK15} and, unlike previous works that simulated LEDP algorithms on a single core while ignoring communication overhead, we fully account for communication costs in our distributed simulation. 

In summary, we make the following contributions:}

\eat{We implement and test all of our algorithms in our framework
using Golang~\cite{DK15}. 
Our implementations simulate \emph{all} communication costs associated with our LEDP algorithms. 
Previous work simulated LEDP algorithms via a central implementation on one core
without taking into account all communication costs and 
overhead. 
Using our framework, we give state-of-the-art LEDP algorithms for the $k$-core decomposition
and triangle counting problems. Our algorithms scale to graphs with billions of edges and give up to \emph{2 
orders of magnitude} improvements
in accuracy over previous work for triangle counting, and comparable accuracy on average to the
best non-private $k$-core decomposition algorithm. 
Moreover, our LEDP distributed simulation (LEDP-DS) framework implementations
do not exceed the runtimes of central implementations by more than a $1.1$x factor, 
demonstrating that despite incurring additional communication costs, our implementations still maintain efficiency.
Finally, we are the first to scale LEDP algorithms \eat{up} \pranay{to massive graphs} with \emph{over a billion} edges; 
previous implementations tested
\pranay{on} graphs \eat{up to}\pranay{with} millions of edges~\cite{IMC21locally,IMC21communication}.
Our source code is available at~\cite{githubCode}. To summarize, we make the following contributions:


\squishlist
    \item We design a new LEDP $k$-core decomposition algorithm that does not use Randomized Response
    and has provable privacy and error bounds. Our
    algorithm is based on the previously best-known level data structure 
    LEDP $k$-core decomposition algorithm of Dhulipala et al.~\cite{DLRSSY22}, 
    \eat{but we give better theoretical error bounds.} improving on their theoretical bounds.
    Our main contribution lies in thresholding the maximum number of levels a node can pass through based on its (private) degree, rather than the total number of nodes in the graph as was done before. It is well-established that a node's core number is upper bounded by its degree; therefore, in graph classes where the maximum degree is smaller than the total number of nodes, our algorithm offers improved theoretical error guarantees.
    \item We give the \emph{first} implementation of any private $k$-core decomposition algorithm in both the central 
    and LEDP models. We implement our 
    $k$-core decomposition algorithm in our framework and compare our LEDP implementation with our central implementation.
    We demonstrate low errors in our experiments, matching the theoretical approximation bounds of the best \emph{non-private}
    $k$-core decomposition algorithms, and show that our LEDP implementation does not incur too much additional 
    communication overhead compared to our central implementation.
    \item We design a \emph{novel} triangle counting algorithm that combines our LEDP $k$-core decomposition algorithm with Randomized Response.
    Our $k$-core decomposition algorithm also gives a LEDP low out-degree ordering that minimizes the out-degree
    of any node. Using this low out-degree ordering, we can obtain better theoretical error bounds than the best-known 
    triangle counting algorithms of Imola et al.~\cite{IMC21communication,IMC21locally} and Eden et al.~\cite{ELRS23} for bounded 
    degeneracy graphs, often seen in the real world. 
    We implement our algorithm and show that it \eat{improves on the error bounds of the best previous implementation~\cite{IMC21communication}
    by up to $89$x} achieves error reductions of up to \textbf{six orders of magnitude} compared to previous implementations~\cite{IMC21communication} and improves on the running time for dense graphs.
    \item We design and implement a new LEDP distributed simulation (LEDP-DS)
    framework for simulating distributed LEDP graph 
    algorithms. Our framework combines the \emph{massively parallel computation
    model}, used to theoretically study distributed computation across many machines, with the LEDP model. 
    Our implementations use this framework, which can be used broadly to design and simulate other LEDP algorithms on one
    machine.
    \item Our LEDP implementations are the first LEDP graph algorithms that scale to graphs with billions of edges.
\squishend

}

\squishlist
    \item We design a \emph{novel} LEDP $k$-core decomposition algorithm that does not use Randomized Response and provides provable privacy and error guarantees. By leveraging the input-dependent maximum degree property of the graph, we achieve improved theoretical bounds over the LEDP $k$-core decomposition algorithm of Dhulipala et al.~\cite{DLRSSY22}. A key innovation lies in thresholding the maximum number of levels a node can move up based on its noisy (private) degree. Since a node's core number is upper bounded by its degree, our algorithm offers stronger theoretical guarantees for most real-world graphs, where the maximum degree is significantly smaller than the number of nodes.  

    \item We present the \emph{first} implementation of a private $k$-core decomposition algorithm and demonstrate through empirical evaluation that it achieves an average error of $\mathbf{3}$\textbf{x} the exact values, markedly improving upon prior approaches. Furthermore, our LEDP implementation attains error rates that closely align with the theoretical approximation bounds of the best \emph{non-private} algorithms, underscoring its practical efficiency and accuracy.  


    \item We design a \emph{novel} LEDP triangle counting algorithm that modifies our $k$-core decomposition to construct a low out-degree ordering, minimizing each node’s out-degree. Leveraging this ordering, our algorithm achieves improved theoretical error bounds over the best-known methods of Imola et al.~\cite{IMC21communication,IMC21locally} and Eden et al.~\cite{ELRS23} for bounded degeneracy graphs, common in real-world networks. Our implementation reduces relative error by up to \textbf{89x} and improves the multiplicative approximation by up to \textbf{six orders of magnitude} over the best previous implementation~\cite{IMC21communication}, while maintaining competitive runtime performance.

    \item We conduct the first evaluation of LEDP graph algorithms in a distributed simulation setting with actual message passing. Unlike prior studies that relied on a single processor, we simulate a distributed environment by partitioning the graph across multiple processors. This approach provides a more realistic assessment of both computational and communication overhead in large-scale distributed scenarios. We demonstrate the scalability and practicality of this evaluation by applying it to our $k$-core decomposition and triangle counting algorithms.

    \item We present the first LEDP graph algorithm implementations that scale to \defn{billion}-edge graphs, whereas prior implementations were tested on graphs with millions of edges~\cite{IMC21locally,IMC21communication}. Our evaluation framework serves as a valuable tool for designing and testing other LEDP algorithms. Our source code is available at~\cite{githubCode}.   
\squishend
}

\eat{\subsection{Related Work}
Local differential privacy (LDP) for graph data has been explored in recent works~\cite{Qin17,IMC21locally,IMC21communication,sun2019analyzing,ye2020towards,ye2020lf,DLRSSY22,ELRS23,hillebrand2023communication}, focusing on tasks such as synthetic graph generation and subgraph counting. Some works~\cite{sun2019analyzing} consider an ``extended'' local view, where nodes observe their neighbors' edges, for accurate triangle counting. However, this model disregards realistic privacy concerns, as users in social networks, for example, don't want to reveal their friend lists. 


The local edge differential privacy (LEDP) model was first formally introduced by the seminal works of Qin et al.~\cite{Qin17}
and Imola, Murakami, and Chaudhury~\cite{IMC21locally} 
and expanded theoretically upon by a number of subsequent works~\cite{DLRSSY22,ELRS23,IMC21communication}. 
In particular, Imola, Murakami, and Chaudhury~\cite{IMC21locally,IMC21communication} gave the first practical implementations
of triangle and subgraph counting algorithms in this model. 
A very recent work of Hillebrand \etal~\cite{hillebrand2023communication} uses 
hash functions to decrease the error of LEDP triangle counting algorithms. However, their algorithm cannot scale to large graphs.

Thus far, all previous LEDP algorithms for triangle counting rely on variants of Randomized Response. 
The work of Imola, Murakami, and Chaudhury~\cite{IMC21locally} studied triangle counting in the 
one (non-interactive) and multiple (interactive) round models where they perform Randomized Response on the graph in the first 
round and use post-processing to obtain the triangle count on the graph obtained from Randomized Response.
They bound the standard deviation 
of the additive error obtained from Randomized Response by $O\left(\frac{n^2}{\eps} + \frac{n^{3/2}}{\eps^2}\right)$.
In a subsequent work, they reduce the communication cost of the protocol~\cite{IMC21communication} 
by using a combination of sampling and clipping techniques, and refine their standard deviation analysis from their previous paper by using the number of $4$-cycles, $C_4$. 
Their new theoretical standard deviation is $O\left(\frac{\sqrt{C_4}}{\eps} + \frac{n^{3/2}}{\eps^2}\right)$ 
for the interactive setting and $O(n^2)$ 
for the non-interactive setting.

Eden et al.~\cite{ELRS23} further improve the error for triangle counting using Randomized Response
by performing a different post-processing analysis; however, their work is purely theoretical and does not have experimental
evaluations. Their new post-processing analysis gives a standard deviation
of $O\left(\frac{\sqrt{C_4}}{\eps^2} + \frac{n^{3/2}}{\eps^3}\right)$ on the additive 
error for the \emph{non-interactive} (one-round) 
setting. They also give $\Omega(n^2)$ lower bounds on the additive error for the non-interactive setting and 
$\Omega\left(\frac{n^{3/2}}{\eps}\right)$ lower bounds in the interactive setting. Thus, the upper and lower bounds are separated
by a polynomial in the interactive (multi-round) setting. 
Existing works on LDP graph algorithms have not combined Randomized Response with other privacy mechanisms to improve error bounds. Moreover, prior work often neglects the structural input-dependent properties of real-world networks in their theoretical analysis. 


Thus far, all known algorithms for LEDP $k$-core decomposition are theoretical~\cite{DLRSSY22}. 
The $k$-core decomposition algorithm of Dhulipala et al.~\cite{DLRSSY22} uses what they call a \emph{level data structure}
to estimate the core numbers. Nodes can move up the levels of the structure, with higher levels representing larger core numbers. A node moves 
up a level, based on the induced degree among the neighbors at the same or higher levels while adding 
noise drawn from the symmetric geometric distribution to preserve privacy.
\eat{Nodes can move up the levels of the structure,  higher levels contain nodes with larger
core numbers and lower levels contain nodes with smaller core numbers. A node moves 
up a level, based on the induced degree among the neighbors at the same or higher levels while adding 
noise drawn from the symmetric geometric distribution to preserve privacy. }
However, the added noise scales with the number of nodes, ignoring input-specific adaptations, leading to significant errors for large graphs. Concurrent work by Dhulipala et al.~\cite{dhulipala2024nearoptimaldifferentiallyprivatekcore} introduces an improved theoretical approach using a generalized sparse vector technique to avoid cumulative privacy budget costs. While this advancement offers better theoretical guarantees, implementing it practically in a distributed setting poses challenges. The algorithm relies on a peeling process for $k$-core decomposition, which is inherently sequential and difficult to parallelize efficiently across multiple processors. Furthermore, the practical performance of these algorithms remained unexplored.}


\subsection{Related Work}
Local differential privacy (LDP) for graph data has been extensively studied~\cite{Qin17,IMC21locally,IMC21communication,sun2019analyzing,ye2020towards,ye2020lf,DLRSSY22,ELRS23,hillebrand2023communication}, focusing on tasks such as synthetic graph generation and subgraph counting. \revision{Some works~\cite{sun2019analyzing, liu2024edge} explore an \emph{extended local view}, in which each node knows its full 2-hop neighborhood (i.e., its neighbors’ edges) to improve triangle counting accuracy. In that model, triangle counting is trivial since each node sees its entire set of incident triangles—unlike our model, where nodes see only immediate (one-hop) neighbors; hence, we require more complex algorithms since nodes cannot see their incident triangles in LEDP. Moreover, the extended view is often unrealistic, since users (e.g., in social networks) may not wish to reveal their private (potentially sensitive) friend lists to their friends.}  

The LEDP model was \eat{first formally} introduced by Qin et al.~\cite{Qin17} and Imola, Murakami, and Chaudhury~\cite{IMC21locally}, with subsequent theoretical expansions~\cite{DLRSSY22,ELRS23,IMC21communication}. Imola et al.~\cite{IMC21locally,IMC21communication} provided the first practical LEDP implementations for triangle and subgraph counting. Recently, Hillebrand et al.~\cite{hillebrand2023communication} improved LEDP triangle counting using hash functions, though their method does not scale to large graphs. All prior LEDP triangle counting algorithms rely on Randomized Response. Imola et al.~\cite{IMC21locally} introduced an LEDP triangle counting algorithm in both non-interactive (single-round) and interactive (multi-round) settings, bounding the standard deviation of the additive error by $O\left(\frac{n^2}{\eps} + \frac{n^{3/2}}{\eps^2}\right)$. In a subsequent work, they reduce the  protocol's communication cost~\cite{IMC21communication} 
by using a combination of sampling and clipping techniques, and refined their standard deviation analysis by using the number of $4$-cycles, $C_4$. 
Their new theoretical standard deviation is $O\left(\frac{\sqrt{C_4}}{\eps} + \frac{n^{3/2}}{\eps^2}\right)$ 
for the interactive setting and $O(n^2)$ 
for the non-interactive setting. Eden et al.~\cite{ELRS23} further enhanced triangle counting accuracy with an improved post-processing analysis, achieving a standard deviation of $O\left(\frac{\sqrt{C_4}}{\eps^2} + \frac{n^{3/2}}{\eps^3}\right)$ for the non-interactive setting and 
establishing lower bounds of $\Omega(n^2)$ and $\Omega\left(\frac{n^{3/2}}{\eps}\right)$ for the non-interactive and interactive settings, respectively. Despite these advancements, prior work has neither combined Randomized Response with other privacy mechanisms to improve error bounds nor accounted for input-dependent properties of graphs 
in theoretical analyses. \eat{\revision{Liu et al.~\cite{liu2024edge} study the triangle counting problem under the Edge Relationship Local Differential Privacy (Edge-RLDP) model, which allows each node (user) to report a $2$-hop Extended Local View (ELV); that is, each node knows its precise $2$-hop neighborhood (the edges between its neighbors). Triangle counting is trivial in the Edge-RLDP model since each node knows the exact number of triangles it is incident to. In contrast, our work considers the stronger Local Edge Differential Privacy (LEDP) model, where nodes can only see their immediate (one-hop) neighbors and do \emph{not} see their $2$-hop neighborhood; hence, nodes don't know the number of triangles they are incident to in LEDP, leading to more complex algorithms.}}

For LEDP $k$-core decomposition, all known algorithms remain theoretical~\cite{DLRSSY22}. The algorithm by Dhulipala et al.~\cite{DLRSSY22} uses a \emph{level data structure}, where nodes ascend levels based on their noisy induced degrees, with noise drawn from a symmetric geometric distribution to ensure privacy. However, this noise scales with the number of nodes rather than adapting to input structure, leading to significant errors in large graphs. Recent concurrent and independent work by Dhulipala et al.~\cite{dhulipala2024nearoptimaldifferentiallyprivatekcore} introduces a generalized sparse vector technique to avoid cumulative privacy budget costs, achieving improved theoretical guarantees. However, implementing this approach in a distributed setting is challenging, as it relies on a peeling algorithm that is difficult to distribute. Additionally, the practical performance of these algorithms remains unexplored.

\section{Preliminaries}

Differential privacy on graphs is defined for \emph{edge-neighboring} inputs.
Edge-neighboring inputs are two graphs which differ in exactly one edge.
Here, we consider \emph{undirected} graphs.

\begin{definition}[Edge-Neighboring~\cite{NRS07}]\label{def:edge-adjacent-graphs}
Graphs $G_1 = (V_1, E_1)$ and $G_2 = (V_2, E_2)$ are
edge-neighboring if they differ in one edge, namely,
if $V_1 = V_2$ and the size of the symmetric difference of $E_1$ and $E_2$ is 1. \footnote{The \emph{symmetric difference}
of two sets is the set of elements that are in either set, but not in their intersection.}
\end{definition}

\defn{With high probability} (\defn{\whp}) is used in this paper to mean with probability at least $1 - \frac{1}{n^c}$ for any constant
$c \geq 1$.

The local edge differential privacy (LEDP) model assumes that each node in the input
graph keeps their adjacency list private. 
The model is defined in terms of $\eps$-DP
algorithms, called \emph{$\eps$-local randomizers ($\eps$-LR)}, that are run individually by every node. The
$\eps$-LRs are guaranteed to be $\eps$-DP when the neighboring inputs are adjacency lists that 
differ in one element. Following~\cite{IMC21communication}, we assume that the curator and 
all nodes act as honest-but-curious adversaries.

\begin{definition}[$\eps$-Edge Differential Privacy~\cite{DMNS06,NRS07}]\label{def:dp}
    Algorithm $\alg(G)$, that takes as input a graph $G$ and outputs 
    some value in
    $\range(\alg)$,\footnote{$\range(\cdot)$ denotes the 
    set of all possible outputs of a function.} is \textbf{$\eps$-edge differentially
    private} ($\eps$-edge DP) if for all $\rout \subseteq \range(\alg)$
    and all edge-neighboring graphs $G$ and $G'$, 
    \begin{align*}
        \frac{1}{e^{\eps}} \leq \frac{\prob[\alg(G') \in \rout]}{\prob[\alg(G) \in \rout]} \leq e^{\eps}.
    \end{align*}
\end{definition}

\begin{definition}[Local Randomizer (LR)~\cite{DLRSSY22}]\label{def:local-randomizer}
    An \defn{$\eps$-local randomizer} $R: \adj \rightarrow \rangeout$ for node $v$ is an $\eps$-edge DP 
    algorithm that takes as input the set of its neighbors $N(v)$, represented by
    an adjacency list $\adj = (b_1, \dots, b_{|N(v)|})$. In other words, $$\frac{1}{e^{\eps}} \leq \frac{\prob\left[R(\adj') \in Y\right]}{\prob\left[R(\adj) \in Y\right]} \leq e^{\eps} $$ for all 
    $\adj$ and $\adj'$  where the symmetric difference
    is $1$ and all sets of outputs $Y \subseteq \rangeout$. The probability is taken over the
    random choices of $R$ (but not over the choice of the input). 
\end{definition}

The previous definitions of LEDP~\cite{Qin17,IMC21locally,DLRSSY22} are satisfied by the following~\cref{def:ledp}. 
\cite{DLRSSY22} gives a slightly more general and complex
definition of LEDP in terms of \emph{transcripts} but all of the 
algorithms in our paper satisfy our definition below, which is also guaranteed to satisfy their more general
transcript-based definition.

\begin{definition}[Local Edge Differential Privacy (LEDP)~\cite{DLRSSY22}]\label{def:ledp}
   Given an input graph $G = (V, E)$, for any edge $\{u, v\}$, let algorithm $\alg$
   assign $\left((R^u_{1}(\adj_u, p_1), \eps^u_1), \dots, (R^u_{r}(\adj_u, p_r), \eps^u_r)\right)$ to be
   the set of $\eps^u_i$-local randomizers called by vertex $u$ during each interactive round and
   $\left((R^v_{1}(\adj_v, p_1), \eps^v_1), \dots, (R^v_{s}(\adj_v, p_s), \eps^v_s)\right)$ be the set of $\eps^v_i$-LRs 
   called by $v$.  The private adjacency lists of $u$ and $v$ are given by $\adj_u$ and $\adj_v$, respectively, and 
   $p_i$ are the new public information released in each round.
   Algorithm $\alg$ is \defn{$\eps$-local edge differentially private (LEDP)} if 
   for every edge, $\{u, v\}$:
   \begin{align*}
        \eps^u_1 + \cdots + \eps^u_r + \eps^v_{1} + \cdots + \eps^v_s \leq \eps.   
   \end{align*} 
\end{definition}

For intuition, each LR
takes as input the private adjacency list of the node $v$ and public information 
released in previous rounds; then, it releases new public information for 
$v$ which will inform the computation of other nodes in the next round. Hence, the algorithm is 
\emph{interactive}. Each time $v$ releases,
it loses some amount of privacy indicated by $\eps^v_i$ for the $i$-th LR. 
Since edge-neighboring graphs differ in exactly one edge, to ensure the privacy of the system,
it is sufficient to ensure that the 
privacy loss of every edge
sums up to $\eps$. Thus, $\eps$-LEDP algorithms
also satisfy $\eps$-DP (proven in~\cite{DLRSSY22}).

\eat{We defer our descriptions of standard privacy tools to~\cref{appendix:privacytools}.} \eat{\qq{Need to remove this reference
and refer to the full version of the paper or supplementary materials.}}
\eat{We defer descriptions of standard privacy tools to the full version of our paper~\cite{githubCode}.}  

\subsection{Privacy Tools}
\label{appendix:privacytools}
We make use of the following privacy tools and primitives.
We define all definitions below in terms of edge-neighboring adjacency lists since 
our tools will be applied to $\eps$-local randomizers.

\begin{definition}[Global Sensitivity~\cite{DMNS06}]\label{def:global-sensitivity}
    For a function $f: \domain \rightarrow \reals^d$, 
    where $\domain$ is the domain of $f$ and 
    $d$ is the dimension of the output, the 
    \defn{$\ell_1$-{sensitivity}} of 
    $f$ is $\df = \max_{\adj, \adj'} \norm{f(\adj) 
    - f(\adj')}_1$ for all pairs of $\{\adj, \adj'\}$ of neighboring adjacency lists
    (differing in one neighbor).
\end{definition}

Our algorithms and implementations in this paper use the \emph{symmetric geometric distribution} 
defined in previous papers~\cite{BV18,CSS11,DMNS06,DNPR10,AHS21,SCRCS11}. 
The symmetric geometric distribution is also often referred to as the ``discrete Laplace distribution.'' 
Using this distribution is quite crucial in practice in order to avoid the numerical errors associated with 
real-valued outputs from continuous distributions.

\begin{definition}[Symmetric Geometric Distribution~\cite{BV18,SCRCS11}]\label{def:geom}
    The \defn{symmetric geometric distribution}, denoted $\geom(b)$, 
    with input parameter $b \in (0, 1)$, takes
    integer values $i$ where the probability mass function at $i$ is given by
    $\frac{e^b - 1}{e^b + 1} \cdot e^{-|i| \cdot b}$.
\end{definition}



We denote a random variable drawn from this distribution by $X \sim \geom(b)$.
\defn{With high probability} (\defn{\whp}) is used in this paper to mean with probability at least $1 - \frac{1}{n^c}$ for any constant
$c \geq 1$.
As with all DP algorithms, privacy
is \emph{always} guaranteed and the approximation factors are guaranteed \whp. 
We can upper bound the symmetric geometric noise \whp using the following lemma.

\begin{lemma}[~\cite{BV18,CSS11,DMNS06,DNPR10,AHS21,SCRCS11,DLRSSY22}]\label{lem:noise-whp-bound}
With probability at least $1 - \frac{1}{n^c}$ for any constant $c \geq 1$, we can upper 
bound $\dedge \sim \geom\left(x\right)$ by $|\dedge| \leq \frac{c \ln n}{x}$. 
\end{lemma}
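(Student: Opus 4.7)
The plan is to bound the tail probability of the symmetric geometric distribution directly using its explicit probability mass function and then observe that the threshold $\frac{c \ln n}{x}$ is engineered to make that tail probability at most $\frac{1}{n^c}$.

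First, I would write
\[
\pp{|X| \geq t} = 2 \sum_{i = \lceil t \rceil}^{\infty} \frac{e^{x}-1}{e^{x}+1} \cdot e^{-i x},
\]
using the symmetry of the distribution around $0$ (noting $X = 0$ contributes only to the complement). Next, I would recognize this as a geometric series with common ratio $e^{-x}$ and sum it to obtain
\[
\pp{|X| \geq t} = \frac{2(e^{x}-1)}{e^{x}+1} \cdot \frac{e^{-\lceil t \rceil x}}{1 - e^{-x}} = \frac{2 \, e^{-(\lceil t \rceil - 1)x}}{e^{x}+1},
\]
after using the identity $1 - e^{-x} = (e^x - 1)/e^x$ to cancel the $e^x - 1$ factor. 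Since $\frac{2}{e^x + 1} \leq 1$ for $x \geq 0$, this simplifies to an upper bound of roughly $e^{-t x}$ (up to a small multiplicative constant from rounding $\lceil t \rceil$).

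I would then substitute $t = \frac{c \ln n}{x}$, which immediately gives $e^{-tx} = e^{-c \ln n} = n^{-c}$, yielding $\pp{|X| \geq t} \leq \frac{1}{n^c}$ after absorbing the $O(1)$ constants into $c$ (or equivalently, choosing a slightly larger constant in the exponent). Taking complements produces the stated high-probability bound $|X| \leq \frac{c \ln n}{x}$.

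The only mildly delicate step is handling the ceiling $\lceil t \rceil$ versus $t$ and the constant factor $\frac{2 e^{x}}{e^{x}+1}$ cleanly, so that the final bound comes out exactly as $1/n^c$ rather than $2/n^c$; this is a cosmetic issue resolved either by strengthening the constant $c$ by $1$ or by noting that the high-probability statement is for \emph{any} constant $c \geq 1$ so the difference is absorbed. There is no deep obstacle here—it is a standard tail computation for the discrete Laplace distribution, and the citation list in the lemma statement indicates that essentially this calculation appears in prior work.
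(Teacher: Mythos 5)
The paper cites this lemma as known (\cite{BV18,CSS11,DMNS06,DNPR10,AHS21,SCRCS11,DLRSSY22}) and supplies no proof of its own, so there is no author argument to compare against. Your computation from the probability mass function in Definition~2.6 is correct: the two-sided tail sums a geometric series whose $(e^x-1)$ factors cancel, leaving $\tfrac{2e^x}{e^x+1}\,e^{-\lceil t\rceil x} \leq 2\,e^{-tx}$, and substituting $t = c\ln n/x$ gives $2n^{-c}$. The residual factor of $2$ (together with the ceiling slack) is handled exactly as you anticipate, by bumping the constant in the exponent, which is permissible because the lemma is quantified over all constants $c \geq 1$; in the paper's applications $x$ is $O(\eps/\log n)$ with $\eps \leq 1$, so these constants are uniformly bounded. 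This is the standard tail computation for the discrete Laplace distribution and there is no genuinely different route worth comparing.
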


The \defn{geometric mechanism} is defined as follows. 

\begin{definition}[Geometric Mechanism~\cite{BV18,CSS11,DMNS06,DNPR10}]\label{def:sgd-mech}
    Given any function $f: \mathcal{D} \rightarrow \integers^d$, where 
    $\mathcal{D}$ is the domain of $f$ and $\df$ is the $\ell_1$-sensitivity 
    of $f$, the geometric mechanism is defined as
    $\mech(\adj, f(\cdot), \eps) = f(\adj) + (Y_1, \dots, Y_d)$,
    where $Y_i\sim \geom(\eps/\df)$ are independent and identically distributed (i.i.d.) random variables
    drawn from $\geom(\eps/\df)$ and $\adj$ is
    a private input adjacency list.
\end{definition}

\begin{definition}[Laplace Distribution]\label{def:lapalce}
    The probability density function of the Laplace distribution on $X \in \mathbb{R}$ is $Lap(b) = 2b \cdot \exp\left(-\left({\lvert X\rvert} \cdot b\right)\right)$
    
\end{definition}

\begin{lemma}[Laplace Mechanism~\cite{DMNS06}]\label{lem:laplace}
    Given a function $f : \mathcal{G} \rightarrow R$ with sensitivity $\Delta_f$, $f\left(\mathcal{G}\right) + Lap\left(\frac{\eps}{\Delta_f}\right)$ is $\eps$-differentially private.    
\end{lemma}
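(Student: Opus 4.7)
The plan is to prove the Laplace mechanism satisfies $\eps$-differential privacy by a direct density-ratio calculation on neighboring inputs. (Note: as stated, the scale parameter should be $\Delta_f/\eps$ rather than $\eps/\Delta_f$ for the standard result to hold; I will proceed under that reading, since otherwise the statement is false.) First I would fix two edge-neighboring inputs $G$ and $G'$, so that by the definition of $\ell_1$-sensitivity (\cref{def:global-sensitivity}) we have $\norm{f(G) - f(G')}_1 \leq \Delta_f$. Let $\mech(G) = f(G) + Z$ where $Z \sim \lap(\Delta_f/\eps)^d$ component-wise; write $p_G$ and $p_{G'}$ for the densities of $\mech(G)$ and $\mech(G')$ at an arbitrary output point $y \in \reals^d$.

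Next I would compute the pointwise ratio
\begin{equation*}
\frac{p_G(y)}{p_{G'}(y)} = \prod_{i=1}^{d} \exp\!\left( \frac{\eps}{\Delta_f} \bigl( |y_i - f(G')_i| - |y_i - f(G)_i| \bigr) \right).
\end{equation*}
Applying the reverse triangle inequality coordinatewise gives $|y_i - f(G')_i| - |y_i - f(G)_i| \leq |f(G)_i - f(G')_i|$, so the product is at most $\exp\!\left(\frac{\eps}{\Delta_f} \norm{f(G) - f(G')}_1\right) \leq \exp(\eps)$ by the sensitivity bound. The symmetric lower bound $\exp(-\eps)$ follows by swapping the roles of $G$ and $G'$.

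Finally I would lift this pointwise density ratio to the statement of $\eps$-DP (\cref{def:dp}) by integrating over any measurable $\rout \subseteq \range(\mech)$, which preserves the multiplicative bound since it holds at every $y$. I do not expect any real obstacle here: the main step is the triangle inequality applied coordinatewise, and the only subtlety is making sure to sum the absolute differences to invoke the $\ell_1$-sensitivity, rather than bounding each coordinate separately. The argument is standard and does not require any input-dependent reasoning, so it transfers immediately to the LEDP setting when applied inside an $\eps$-local randomizer on a private adjacency list $\adj$.
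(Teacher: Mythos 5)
Your density-ratio argument is the standard, correct proof of the Laplace mechanism; the paper itself does not reprove this fact—\cref{lem:laplace} is stated as a cited background result from~\cite{DMNS06} in the Privacy Tools subsection—so there is no in-paper proof to compare against. Each step you list (fix neighboring inputs, compute the pointwise ratio, apply the reverse triangle inequality coordinatewise, sum to invoke the $\ell_1$-sensitivity bound, integrate over any measurable output set) is exactly right, and the remark about lifting to the LEDP setting by applying this inside a local randomizer on an adjacency list is also correct.

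The one thing to correct is the parenthetical claim that the statement is false unless the parameter is changed to $\Delta_f/\eps$. The paper's \cref{def:lapalce} parameterizes the Laplace density as $\lap(b) \propto \exp(-|X|\cdot b)$, i.e.\ the argument to $\lap(\cdot)$ is the decay \emph{rate}, not the scale. Under that convention, $\lap(\eps/\Delta_f)$ has density proportional to $\exp(-\eps|x|/\Delta_f)$, which is precisely the quantity your ratio computation produces, so the lemma is correct as written and you did not need to reinterpret the parameter. (The normalization constant in \cref{def:lapalce} should read $b/2$ rather than $2b$, but that typo has no effect on the likelihood ratio.) The same rate-parameterization is used for the symmetric geometric distribution in \cref{def:geom} and \cref{def:sgd-mech}, which is a useful consistency check: there too the mechanism draws noise from $\geom(\eps/\df)$, with the parameter appearing multiplicatively in the exponent. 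Once you drop the parenthetical, your proof matches the paper's conventions exactly.
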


\begin{lemma}[Privacy of the Geometric
    Mechanism~\cite{BV18,CSS11,DMNS06,DNPR10}]\label{lem:sgd-private}
    The geometric mechanism is $\eps$-DP. 
\end{lemma}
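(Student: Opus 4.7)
The plan is to unpack the definitions and reduce the claim to a straightforward ratio calculation on the symmetric geometric PMF, closely paralleling the standard Laplace-mechanism proof. First I would fix two edge-neighboring adjacency lists $\adj, \adj'$ (so $\|f(\adj) - f(\adj')\|_1 \leq \df$ by \cref{def:global-sensitivity}), fix an arbitrary output $y \in \integers^d$, and expand the probability that $\mech(\adj, f, \eps) = y$ as the product $\prod_{i=1}^d \prob[Y_i = y_i - f(\adj)_i]$, using that the coordinates $Y_i$ are i.i.d.\ draws from $\geom(\eps/\df)$.

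Next I would substitute the PMF from \cref{def:geom} with $b = \eps/\df$, so that
\begin{align*}
\frac{\prob[\mech(\adj, f, \eps) = y]}{\prob[\mech(\adj', f, \eps) = y]} = \prod_{i=1}^d \frac{e^{-b\,|y_i - f(\adj)_i|}}{e^{-b\,|y_i - f(\adj')_i|}} = \exp\!\left(b \sum_{i=1}^d \bigl(|y_i - f(\adj')_i| - |y_i - f(\adj)_i|\bigr)\right).
\end{align*}
The normalizing factors $(e^b-1)/(e^b+1)$ cancel exactly, which is the main reason the discrete symmetric geometric mechanism gives a clean $\eps$-DP guarantee rather than an $(\eps,\delta)$ one.

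Then I would apply the reverse triangle inequality coordinatewise, $|y_i - f(\adj')_i| - |y_i - f(\adj)_i| \leq |f(\adj)_i - f(\adj')_i|$, to bound the exponent by $b \cdot \|f(\adj) - f(\adj')\|_1 \leq b \cdot \df = \eps$. The symmetric lower bound $e^{-\eps}$ follows by swapping the roles of $\adj$ and $\adj'$. Since this holds pointwise for every $y$, it extends to any measurable $\rout \subseteq \integers^d$ by summing, matching \cref{def:dp}.

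I do not anticipate a genuine obstacle here: the argument is essentially a discrete analogue of the Laplace mechanism proof, and the only subtlety is making sure the sensitivity is measured in the $\ell_1$ norm (so that the coordinatewise bounds compose by addition in the exponent) and that the noise parameter $b = \eps/\df$ is correctly plugged into the PMF. Care should be taken to note that the bound applies for every fixed output $y$ before extending to sets, and that the product form relies on the independence of the $Y_i$ as specified in \cref{def:sgd-mech}.
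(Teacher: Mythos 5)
The paper states \cref{lem:sgd-private} as a cited result from prior work and does not include its own proof, so there is no in-paper argument to compare against. Your reconstruction is the standard, correct proof of the discrete-Laplace/geometric mechanism's privacy: the normalizing constants $(e^b-1)/(e^b+1)$ cancel in the ratio, the triangle inequality $|y_i - f(\adj')_i| - |y_i - f(\adj)_i| \le |f(\adj)_i - f(\adj')_i|$ controls each coordinate, the $\ell_1$ sensitivity composes the coordinate bounds additively in the exponent so that $b\cdot\df = \eps$, and summing over $y$ extends the pointwise bound to arbitrary output sets. This is precisely the argument the cited references use, and nothing is missing.
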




In addition to the Geometric Mechanism, our paper also uses Randomized Response (RR). Randomized Response (RR) when applied to 
graphs flips the bit indicating the existence of an edge in the graph. We define RR in terms of the way it is used on graphs.

\begin{definition}[Randomized Response]\label{def:rr}
    Randomized response on input graph $G = (V, E)$, represented as an upper triangular
    adjacency matrix $M$ which only contains entries $M[i, j]$ where $i < j$, 
    flips every bit $M[i, j]$ (where $i < j$)
    in the matrix with probability $\frac{1}{e^{\eps} + 1}$.
\end{definition}

It is well-known that randomized response is $\eps$-DP.

\begin{lemma}[\cite{DMNS06}]\label{lem:rr-dp}
    Randomized response is $\eps$-DP.
\end{lemma}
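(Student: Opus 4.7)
The plan is to verify Definition~\ref{def:dp} directly by bounding, pointwise, the ratio of output probabilities on any pair of edge-neighboring graphs $G$ and $G'$. First I would fix two such graphs whose upper-triangular adjacency matrices $M$ and $M'$ agree everywhere except at a single entry $(i^*, j^*)$ with $i^* < j^*$; by Definition~\ref{def:edge-adjacent-graphs}, a single differing entry is exactly what ``edge-neighboring'' guarantees, because the symmetric difference of $E$ and $E'$ has size $1$. Let $\widehat{M}$ denote any fixed possible output of the mechanism and let $p = \tfrac{1}{e^{\eps}+1}$ denote the per-bit flip probability.

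Next I would exploit the independence across entries built into Definition~\ref{def:rr}. The output distribution factors as a product over all upper-triangular entries $(i,j)$ of the per-entry flip probabilities, so
\begin{equation*}
\frac{\prob[R(G) = \widehat{M}]}{\prob[R(G') = \widehat{M}]} \;=\; \prod_{i < j} \frac{\prob[\widehat{M}[i,j] \mid M[i,j]]}{\prob[\widehat{M}[i,j] \mid M'[i,j]]}.
\end{equation*}
For every $(i,j) \neq (i^*, j^*)$ we have $M[i,j] = M'[i,j]$, so those factors equal $1$ and cancel from the product.

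Then I would handle the single remaining factor at $(i^*, j^*)$. Without loss of generality assume $M[i^*, j^*] = 1$ and $M'[i^*, j^*] = 0$. If $\widehat{M}[i^*, j^*] = 1$, the numerator is $1 - p = \tfrac{e^{\eps}}{e^{\eps}+1}$ and the denominator is $p = \tfrac{1}{e^{\eps}+1}$, giving ratio exactly $e^{\eps}$; if $\widehat{M}[i^*, j^*] = 0$, the ratio is $e^{-\eps}$ by symmetry. Hence the pointwise ratio always lies in $[e^{-\eps}, e^{\eps}]$, and summing over any $\rout \subseteq \range(R)$ lifts this to the set-based inequality required by Definition~\ref{def:dp}.

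I do not anticipate a substantive obstacle here: the argument is essentially a one-coordinate computation propagated through an independent product. The only point requiring care is verifying that, on edge-neighboring graphs, the symmetric difference of the edge sets corresponds to exactly one upper-triangular entry of the adjacency matrix, so that all but one factor in the product cancels cleanly.
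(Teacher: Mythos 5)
Your proof is correct and is the standard argument for randomized response being $\eps$-DP. The paper itself does not give a proof for this lemma—it is stated as a known fact with a citation to Dwork et al.~\cite{DMNS06}—so there is no in-paper argument to compare against; your direct verification (factor the output distribution into an independent product over upper-triangular entries, cancel all agreeing coordinates, and bound the one remaining per-bit ratio by $e^{\pm\eps}$, then sum over the output set) is exactly what one would write if asked to supply the missing proof.
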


The \emph{composition theorem} guarantees privacy for 
the \emph{composition} of multiple
algorithms with privacy guarantees of their own. 
In particular, this theorem covers the use case where multiple LEDP
algorithms are used on the \emph{same} dataset. 

\begin{theorem}[Composition Theorem~\cite{DMNS06,DL09,DRV10}]\label{thm:composition}
    A sequence of DP algorithms, $(\alg_1, \dots, \alg_k)$, with privacy parameters $(\eps_1, \dots, \eps_k)$ form at worst an $\left(\eps_1 + \cdots + \eps_k\right)$-DP algorithm under \emph{adaptive composition} (where the adversary can adaptively select algorithms after
    seeing the output of previous algorithms).
\end{theorem}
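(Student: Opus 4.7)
The plan is to prove this by a chain-rule factorization (equivalently by induction on $k$), reducing everything to the $\eps_i$-DP guarantee of each individual $\alg_i$. The key conceptual subtlety under adaptive composition is that both the choice of the next mechanism $\alg_{i+1}$ and its privacy parameter $\eps_{i+1}$ may be functions of the previously observed outputs $y_1, \dots, y_i$; however, once we condition on a specific realization of the history, each $\alg_i$ is simply a fixed $\eps_i$-DP algorithm, so its DP guarantee applies unchanged.

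First, I would fix two edge-neighboring graphs $G$ and $G'$ (per \cref{def:edge-adjacent-graphs}) and consider any tuple $(y_1, \dots, y_k)$ in the combined output range. Using the chain rule, I would write
\[
\prob[\alg(G) = (y_1, \dots, y_k)] \;=\; \prod_{i=1}^k \prob[\alg_i(G) = y_i \mid y_1, \dots, y_{i-1}],
\]
and identically for $G'$. Taking the pointwise ratio splits it into a product of $k$ conditional ratios, one per mechanism.

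Next, I would invoke the DP guarantee of each $\alg_i$. Conditioned on a fixed history $(y_1,\dots,y_{i-1})$, which is the \emph{same} realized sequence on both sides, the mechanism $\alg_i$ and its privacy budget $\eps_i$ are fully determined, so the conditional distribution of $y_i$ is exactly that of the fixed $\eps_i$-DP mechanism $\alg_i$ applied to $G$ or to $G'$. By \cref{def:dp}, each conditional ratio lies in $[e^{-\eps_i}, e^{\eps_i}]$. Multiplying the $k$ factors yields a pointwise bound of $e^{\sum_i \eps_i}$ on the ratio of joint densities. To finish, I would lift this pointwise bound to any measurable output set $\rout$ by integration (or summation in the discrete case), since bounding the ratio of densities uniformly gives the same bound on the ratio of probabilities of $\rout$.

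The main obstacle is really a modeling point rather than a calculation: correctly formalizing adaptivity so that the history is treated as fixed when invoking the DP guarantee of $\alg_i$, rather than as a random object whose distribution depends on the sensitive input. Conditioning on each specific realization sidesteps this cleanly, and the fact that later mechanisms depend on earlier outputs does not break the argument. I would flag this explicitly, since naively trying to bound joint probabilities without conditioning would obscure why adaptivity is harmless for basic $\eps$-composition (in contrast to $(\eps,\delta)$ advanced composition, which is not claimed here).
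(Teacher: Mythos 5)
The paper does not prove this theorem; it is stated as a standard result and cited to~\cite{DMNS06,DL09,DRV10}, so there is no ``paper's own proof'' to compare against. Your argument is the standard and correct one for basic (linear) composition of pure $\eps$-DP under adaptivity, and it is essentially the proof given in those references: factor the joint output distribution via the chain rule, observe that conditioned on any fixed realization of the history $(y_1,\dots,y_{i-1})$ the $i$-th mechanism is a concrete $\eps_i$-DP algorithm applied to the same pair of neighboring inputs, bound each conditional ratio by $e^{\eps_i}$, multiply, and then integrate the pointwise density bound over any measurable event. You also correctly identify the one conceptual subtlety — that conditioning on a realized transcript is what makes adaptivity harmless for pure-$\eps$ composition, in contrast to $(\eps,\delta)$ advanced composition where a more delicate argument is needed.

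One small caveat worth flagging: you allow the privacy parameter $\eps_{i+1}$ itself to depend adaptively on the history. That is a slightly stronger setting than the theorem as written here, where $(\eps_1,\dots,\eps_k)$ are fixed ahead of time; if the $\eps_i$ could vary with the transcript, the conclusion would need to be stated as a bound on the worst-case or a uniform bound on $\sum_i \eps_i$ over all histories. With the fixed-$\eps_i$ reading of the statement your argument goes through exactly as written. You also implicitly need the conditional distribution $\prob[\alg_i(\cdot)=y_i \mid y_1,\dots,y_{i-1}]$ to be well-defined and dominated so that the density ratios make sense; this is routine in the discrete or absolutely continuous case and is handled in the cited works, but it is worth acknowledging if you were to write this out in full.
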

Finally, the post-processing theorem states that the result of post-processing on the output
of an $\eps$-LEDP algorithm is $\eps$-LEDP.

\begin{theorem}[Post-Processing~\cite{DMNS06,BS16}]\label{thm:post-processing}
Let $\mech$ be an $\eps$-LEDP algorithm and $h$ be an arbitrary (randomized)
mapping from $\range(\mech)$ to an arbitrary
set. The algorithm $h \circ \mech$ is $\eps$-LEDP.\footnote{$\circ$ is notation 
for applying $h$ on the outputs
of $\mech$.}
\end{theorem}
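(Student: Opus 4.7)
The plan is to reduce this to the classical post-processing property of $\eps$-DP and observe that $h$ does not disturb the per-edge privacy-loss accounting that underlies Definition \ref{def:ledp}. Since $h$ takes inputs only from $\range(\mech)$ and not from any private adjacency list $\adj_u$ or $\adj_v$, it cannot introduce any new dependence on the private data beyond what is already exposed by $\mech$ itself. In particular, the local randomizers $R^u_i$ and $R^v_j$ that compose $\mech$ are not modified by appending $h$ at the end.

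First I would fix an arbitrary edge $\{u, v\}$ together with the sequence of local randomizers $\left((R^u_1, \eps^u_1), \dots, (R^u_r, \eps^u_r)\right)$ and $\left((R^v_1, \eps^v_1), \dots, (R^v_s, \eps^v_s)\right)$ guaranteed by the LEDP definition for $\mech$. Each $R^u_i$ is individually an $\eps^u_i$-edge DP algorithm on $\adj_u$, and these parameters satisfy $\eps^u_1 + \cdots + \eps^u_r + \eps^v_1 + \cdots + \eps^v_s \leq \eps$. I would then apply the data-processing inequality for DP at the level of the entire joint transcript of local randomizers: for any two neighboring inputs and any set $Y$ in the range of $h \circ \mech$, writing $\prob[h(\mech(G)) \in Y] = \sum_{z \in \range(\mech)} \prob[h(z) \in Y] \cdot \prob[\mech(G) = z]$ (with an integral in the continuous case) shows that the multiplicative privacy gap is no larger than the corresponding gap for $\mech$ because $\prob[h(z) \in Y]$ does not depend on the input.

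The main technical subtlety will be handling the interactive structure of $\mech$: the public information $p_i$ released between rounds feeds into later local randomizers, so one must verify that attaching $h$ at the very end leaves the intermediate $p_i$ distributions untouched and does not silently introduce a new round of privacy loss. Because $h$ is applied only to the final output (it is a mapping from $\range(\mech)$, not inserted mid-interaction), the sequence of transcripts $p_1, \dots, p_r$ and the associated per-randomizer budgets $\eps^u_i$ and $\eps^v_j$ are identical for $\mech$ and $h \circ \mech$. Summing these unchanged per-round budgets along the edge $\{u, v\}$ preserves the LEDP constraint with the same parameter $\eps$, and since $\{u, v\}$ was arbitrary, this yields that $h \circ \mech$ is $\eps$-LEDP as required.
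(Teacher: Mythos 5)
The paper does not actually prove this theorem; it is stated with citations to \cite{DMNS06,BS16} (and the LEDP form also appears in \cite{DLRSSY22}), so there is no in-paper argument to compare against. Your proof is correct and supplies the missing justification. The load-bearing observation, which you make explicitly, is that $h$ is a mapping from $\range(\mech)$—a public object—and is therefore \emph{not} a new local randomizer acting on any private adjacency list $\adj_u$ or $\adj_v$. Consequently $h \circ \mech$ invokes exactly the same sequence of local randomizers as $\mech$, with unchanged budgets $\eps^u_i$, $\eps^v_j$, so the per-edge sum in Definition~\ref{def:ledp} is preserved verbatim and $h \circ \mech$ is $\eps$-LEDP by that definition directly. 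Your treatment of interactivity is also right: because $h$ is applied only after the last round, the intermediate public releases $p_1, \dots$ and the per-round randomizers are byte-for-byte those of $\mech$, so no hidden privacy loss is introduced. One small remark: the convex-combination / data-processing step you interleave (expanding $\prob[h(\mech(G)) \in Y]$ over $z \in \range(\mech)$) establishes that the composite algorithm also satisfies the output-ratio form of $\eps$-edge DP, but given that the paper defines LEDP purely via the budget sum in Definition~\ref{def:ledp}, the budget-accounting argument alone already closes the proof; the extra derivation is harmless but not strictly needed.
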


We use implementations by the Google privacy team~\cite{githubGitHubGoogledifferentialprivacy}, which also guarantee cryptographic security.  

\subsection{Problem Definitions}
Below, we define the $k$-core decomposition, low out-degree
ordering, and triangle counting problems that we study.

\begin{figure}[!t]
    \centering
    \includegraphics[width=0.45\linewidth]{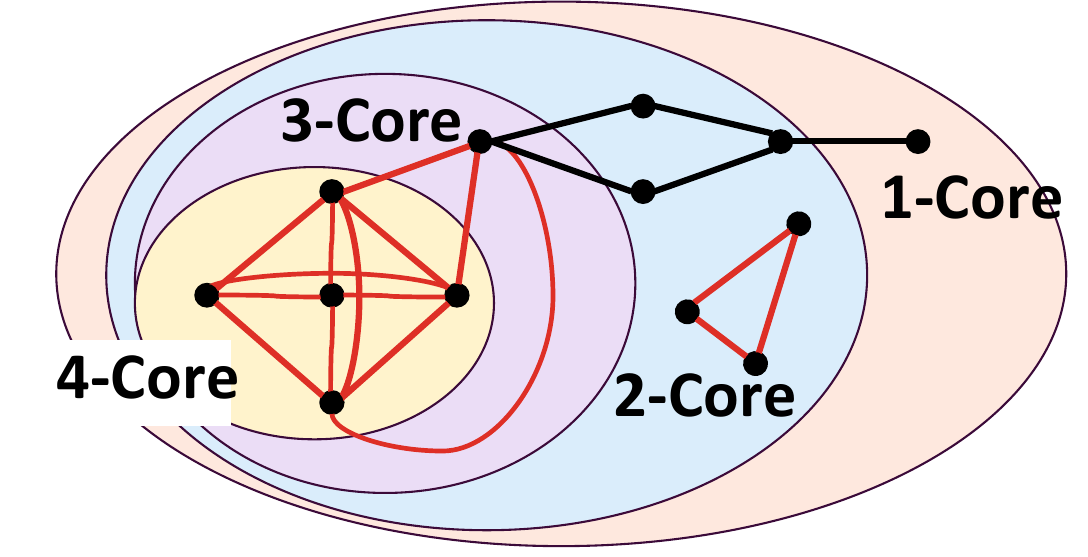}
    \caption{\footnotesize{Example $k$-core decomposition and triangles in a $4$-degenerate graph. Nodes are assigned core numbers based on the highest value core they belong
    to; e.g., a node in the $1$-core but not in the $2$-core is given the core number of $1$. Larger valued cores are contained within all smaller valued cores; e.g., the 
    $3$-core is contained in the $1$ and $2$-core. Red edges show the triangles, i.e., $3$-cycles in the graph. The 
    degeneracy of this graph is $4$.}}
    \label{fig:kcore}
\end{figure}

In this paper, we consider undirected graphs $G = (V, E)$ with $n = |V|$ nodes and $m = |E|$ edges. 
We use $[n]$ to denote $\{1, \dots, n\}$.
For ease of indexing, we set the IDs of $V$ to be $V=[n].$
The set of neighbors of a node $i \in [n]$ is denoted by $N(i)$, and the degree of node $i$
is denoted $\deg(i)$. 

\begin{definition}[$k$-Core Decomposition]\label{def:kcore}
    Given an input graph, $G = (V, E)$, a \defn{$k$-core}
    is a maximal induced subgraph in $G$ where every node has 
    degree at least $k$.
    A \defn{$k$-core decomposition} assigns a \defn{core number} to each node $v \in V$
    equal to $\kappa$ if $v$ is
   in the $\kappa$-core but not the $(\kappa+1)$-core. Let $\core(v)$ be the core number of $v$.
\end{definition}



See~\cref{fig:kcore} for an example. 
No \emph{exact} $k$-core decomposition algorithm satisfies the definition of DP (or LEDP). 
Hence, our algorithms take an input graph $G$ and output an 
\emph{approximate} core number for each node in the graph 
(\cref{def:approx-core number})
and an approximate \defn{low out-degree ordering} (\cref{def:low-outdegree}). 

\begin{definition}[$(\multfactor, \addfactor)$-Approximate Core Number~\cite{DLRSSY22}]\label{def:approx-core number}
    Let $\kest(v)$ be an approximation of the core number of $v$, and let $\multfactor \geq 1, \addfactor \geq 0$.
    The core estimate $\kest(v)$ is a \defn{$(\multfactor,
    \addfactor)$-approximate core number} of $v$ if
    $\core(v) - \addfactor \leq \kest(v) \leq \multfactor
    \cdot \core(v) + \addfactor$.
\end{definition}

We define the related concept of
an \defn{approximate low out-degree ordering} based on the definition of \defn{degeneracy}.

\begin{definition}[Degeneracy]\label{def:degeneracy}
An undirected graph $G = (V, E)$ is $\degen$-degenerate if every induced subgraph of $G$ has a node with 
degree at most $\degen$. The \emph{degeneracy} of $G$ is the smallest value of $\degen$ for which $G$ is $\degen$-degenerate.
\end{definition}

It is well known that degeneracy $\degen = \max_{v \in V}\{\core(v)\}$.

\begin{definition}[$(\multfactor, \addfactor)$-Approximate Low Out-Degree Ordering]\label{def:low-outdegree}
    Let $\order = [v_1, v_2, \dots, v_n]$ be a total ordering of nodes in a
    graph $G = (V, E)$. 
    The ordering $\order$ is an \defn{$(\multfactor, \addfactor)$-approximate
    low out-degree ordering} if
    orienting edges from earlier nodes to later nodes in $D$ 
    produces out-degree at most
    $\multfactor \cdot \degen + \addfactor$.
\end{definition}

\eat{Finally, we use our approximate low out-degree ordering to obtain a multi-round algorithm for 
triangle counting. Namely, using a publicly released approximate low out-degree ordering, each node knows which of its edges 
are oriented outwards. Using this orientation, each node can subsequently privately count the 
number of triangles formed by its outgoing edges.} 

\begin{definition}[Triangle Count]\label{def:triangle-counting}
    Given an undirected input graph $G = (V, E)$, the triangle count returns the number of $3$-cycles 
    in $G$. 
\end{definition}


\subsection{Distributed Simulation Model}
\label{sec:ledp-dist}
\revision{
Local Edge Differential Privacy (LEDP) is inherently decentralized: each user (or node) independently perturbs their private local data (adjacency list) before any communication. This model aligns naturally with distributed systems, where data is often siloed across machines or clients. To evaluate LEDP algorithms in such settings, we adopt a distributed simulation framework that closely mirrors real-world deployments. Specifically, we use a coordinator-worker model in which each worker is assigned a partition of nodes along with their full adjacency lists. Workers execute LEDP algorithms locally and communicate only privacy-preserving outputs to a central coordinator. The coordinator aggregates these responses and broadcasts public updates to all workers, proceeding iteratively over synchronous communication rounds. While assigning one machine per node is infeasible at scale, this simulation preserves the privacy and communication structure of LEDP and allows for practical evaluation of network overhead on large graphs.
}

\eat{LEDP-DS allows for the simulation on one machine of distributed LEDP systems for debugging purposes
before building the system across many machines (often expensive).} 
\eat{Thus, LEDP-DS moves us closer to a world with the best LEDP
guarantees combined with decentralized storage of 
massive graph structured data.} 
\section{Practical $k$-Core Decomposition Algorithm}
\label{sec:kcore}
\setcounter{algocf}{0} 

\revision{
We present \kcorealgo{}, a novel $k$-core decomposition algorithm that addresses key limitations of prior work~\cite{DLRSSY22} through principled algorithmic design. While their framework offers strong theoretical foundations under the $\epsilon$-LEDP model, its dependence on the total number of nodes leads to large additive error and excessive communication rounds. Our algorithm replaces this dependency with input-sensitive parameters—specifically, the graph’s maximum degree—through \emph{degree thresholding} and \emph{bias terms}. These techniques yield improved asymptotic bounds and significantly lower empirical error, as confirmed by our experiments. 
}


\eat{This section is organized as follows. In~\cref{subsec:kcore-desc}, we present the pseudocode and describe the algorithm in relation to it, focusing on the coordinator and worker roles, as well as its adaptation for distributed environments.
In~\cref{subsec:kcore-theory}, we present the theoretical guarantees of the algorithm, including privacy, approximation, communication rounds, memory, and communication costs. Due to space constraints, we relegate
some lemmas, theorems, and proofs to~\cref{appendix:appendixkcore}.} 

\SetKwProg{Fn}{Function}{}{end}\SetKwFunction{FRecurs}{FnRecursive}%
\SetKwFunction{FnUpdateLevels}{UpdatenodeLevels}
\SetKwFunction{FnInsert}{Incremental}
\SetKwFunction{FnStatic}{\kcorealgo{}}
\SetKwFunction{FnLEDPDensestSubgraph}{LEDPDensestSubgraph}
\SetKwFunction{FnDelete}{Decremental}
\SetKwFunction{FnCoreNumber}{EstimateCoreNumbers}
\SetKwFunction{FnLowOutdegree}{EstimateOrdering}
\SetKwFunction{FnCoord}{\kcorealgo{}}
\SetKwFunction{FnData}{ProcessDataCoord}
\SetKwFunction{FnWorker}{LevelMovingWorker}
\SetKwFunction{FnGraph}{DegreeThresholdWorker}

\eat{\begin{figure*}
    \centering
    \includegraphics[scale=0.75]{figures/kcore_algo.pdf}
    \caption{\footnotesize{Node's (blue) movement across levels in the coordinator's level data structure, updated each round based on the noisy neighbor counts released by worker processes. The node doesn't move up a level in round \emph{t+2}, due to thresholding, which upper-bounds the level for a node.}}
    \label{fig:kcore_algo}
\end{figure*}

\begin{figure}
    \centering
    \includegraphics[scale=0.75]{figures/kcore_algo_single_column.pdf}
    \caption{\small{Node's (blue) movement across levels in the coordinator's level data structure, updated each round based on the noisy neighbor counts released by worker processes. The node doesn't move up a level in round \emph{t+2}, due to thresholding, which upper-bounds the level for a node.}}
    \label{fig:kcore_algo}
\end{figure}}

\begin{figure*}
    \centering
    \includegraphics[scale=0.7]{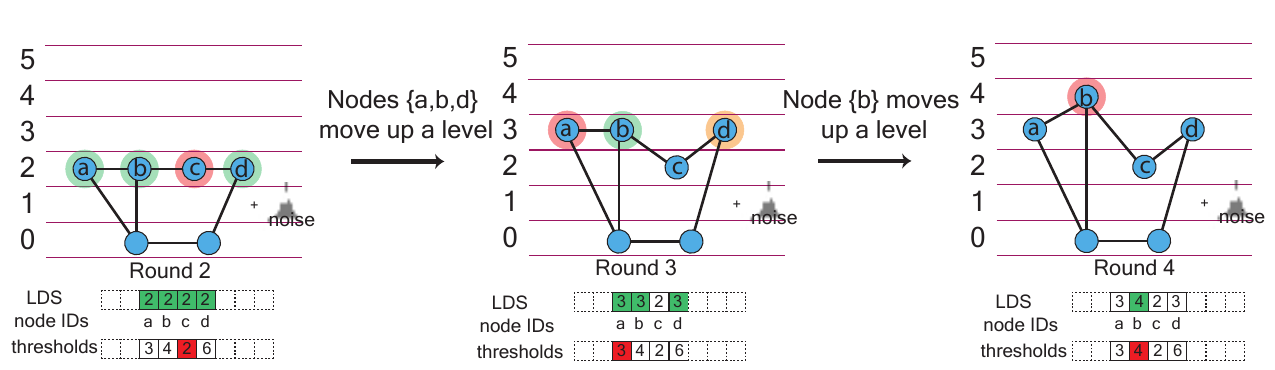}
    \caption{\revision{\footnotesize{Node movements in \kcorealgo{}'s Level Data Structure (LDS). Green: active nodes eligible to move; red: thresholded nodes; orange: active nodes that fail the noisy neighbor check. The LDS and threshold structures are shown alongside the graph. Noise is added during the level-moving step to ensure privacy, and snapshots illustrate node progression and halted movement due to thresholding.}}}
    \label{fig:kcore_algo}
\end{figure*}

\subsection{Algorithm Description}
\label{subsec:kcore-desc}


\eat{The $k$-core decomposition}\revision{Our} algorithm operates synchronously over \(O\left(\log(n) \log\left(D_{\max}\right) \right)\) rounds, where \(D_{\max}\) is the maximum degree of the graph. The algorithm outputs \(\left(2+\eta, O\left(\frac{\log(D_{\max})\log^2(n)}{\epsilon}\right)\right)\)-approximate core numbers with high probability, as well as a low out-degree ordering with the same approximation guarantee. Throughout this section, the term $\log(n)$ denotes $\log_{1 + \psi}(n)$, unless explicitly stated otherwise (where $\psi$ is a constant). The algorithm uses a level data structure (LDS)~\cite{DLRSSY22}, where nodes are assigned levels that are updated iteratively. Levels are partitioned into groups of equal size, with each group \(g_i\) containing \(\frac{\lceil \log(n) \rceil}{4}\) consecutive levels. We limit the number of rounds a node participates in, based on its noisy degree, which we refer to as degree thresholding. This significantly reduces the number of rounds, from $O(\log^2(n))$~\cite{DLRSSY22} to \(O\left(\log(n) \log\left(D_{\max}\right) \right)\). \eat{In each round, the algorithm processes nodes to determine whether they should move up a level in the LDS, based on a noisy count of its neighbors at the same level.} \revision{In each round, the algorithm uses a noisy count of a node's neighbors at the same level to decide if it should move up a level.} After processing all nodes in a round, the updated LDS is published for use in subsequent rounds. Once all rounds are complete, the algorithm estimates the core numbers of the nodes based on their final levels, using~\cref{alg:estimate}. Additionally, a low out-degree ordering is determined by sorting nodes from smaller to larger levels, breaking ties using node IDs.\footnote{This doesn't leak privacy, as IDs are assigned to nodes and not edges and reveal no information about the sensitive edge data.}. The algorithm is implemented in a distributed setting, where computation is divided between a coordinator and multiple workers. The pseudocode is structured to reflect this division. \revision{We now describe their respective roles.}\eat{The following parapgraphs describe the roles of the coordinator and the workers.}

\eat{\noindent\textbf{Coordinator} As detailed in~\cref{alg:private-kcore-coordinator}, it receives the graph size $n$, the number of workers, $M$, constant parameter $\psi > 0$, privacy parameter $\eps\in(0,1]$, constant privacy split fraction $\privatefraction \in (0, 1)$, and the bias term $b$ as input. It computes privacy parameters $\eps_1$ and $\eps_2$ for degree thresholding and noisy neighbor count based on $\eps$ and $\privatefraction$ as shown in~\cref{line:eps}. The coordinator maintains the \emph{level data structure} ($LDS$) and a communication channel, \emph{channel}, for receiving data from the workers. All nodes start at level 0 as indicated in~\cref{line:initial}; the nodes then move up levels according to bits released by the corresponding workers (discussed below). $LDS[i]$ contains the current level of node $i$. The coordinator signals the workers to load their graphs~(\cref{line:load}) in parallel, and uses the threshold values sent by the workers to compute the number of rounds~(\cref{line:rounds}). In each round $\lcur$, the coordinator calculates the group index of that round~(\cref{line:index}), and launches $M$ asynchronous worker processes~(\cref{line:kcore-launch}). Each worker sets a bit for each node in its subgraph to $1$, if the node should move up a level, or $0$ (if not) and sends this data back to the coordinator. The coordinator waits for all workers to complete (\cref{line:wait}), processes incoming data, and moves nodes based on worker-computed bits. After each round, it publishes the updated $LDS$ (\cref{line:publish}) for use in the next round. Once all rounds are complete, it estimates the core numbers for the nodes using the LDS.}

\revision{
\noindent\textbf{Coordinator.}
As described in~\cref{alg:private-kcore-coordinator}, the coordinator takes as input the graph size $n$, number of workers $M$, constant $\psi > 0$, privacy parameter $\eps \in (0,1]$, privacy split fraction $\privatefraction \in (0,1)$, and bias term $b$. It first computes the privacy budgets $\eps_1$ and $\eps_2$ for degree thresholding and noisy neighbor counts, respectively, based on $\eps$ and $\privatefraction$~(\cref{line:eps}). The coordinator maintains the \emph{level data structure} (LDS), where $LDS[i]$ stores the current level of node $i$, and a communication channel, \emph{channel}, for receiving messages from the workers. All nodes are initialized to level 0~(\cref{line:initial}) and are incrementally moved up in later rounds based on signals received from the workers. It begins by signaling the workers to load their assigned subgraphs in parallel~(\cref{line:load}) and then collects the degree threshold values to determine the total number of rounds~(\cref{line:rounds}). In each round $\lcur$, it computes the corresponding group index~(\cref{line:index}) and launches $M$ asynchronous worker processes~(\cref{line:kcore-launch}). Each worker returns a bit vector indicating whether each node in its subgraph should move up a level. After all processes complete~(\cref{line:wait}), the coordinator processes the responses and updates the LDS accordingly. It then publishes the new LDS~(\cref{line:publish}) before the next round begins. After the final round, the coordinator computes the estimated core numbers using the final LDS.

}


\eat{\noindent\textbf{Worker (Degree Thresholding)} As shown in~\cref{alg:private-kcore-worker-graph}, workers initialize their respective subgraphs, and compute the threshold for a node by adding symmetric geometric noise to the original degree \(d_v\) of the node, producing \(\widetilde{d}_v = d_v + X\), where \(X \sim \text{Geom}(\frac{\epsilon}{2})\)~(\cref{line:noise_degree}). An additional bias term is subtracted from \(\widetilde{d}_v\) to account for large negative noise; this bias term is calculated using the variance of the noise distribution. The threshold value is then calculated as \(\lceil \log_2(\widetilde{d}_v) \rceil \cdot L\), where \(L\) is the number of levels per group in the LDS~(\cref{line:deg-threshold}). Nodes store their thresholds to determine whether they should participate in a given round. Once all thresholds are computed, the worker identifies the maximum threshold value for its subgraph and sends this information back to the coordinator~(\cref{line:send_threshold}). 

\noindent\textbf{Worker (Level Moving)} In each round, a worker process determines for each node in its respective subgraph if it should move up a level in the LDS. As shown in~\cref{alg:private-kcore-worker}, for a node \(v\), if the threshold matches the current round \(\lcur\), the node is skipped~(\cref{line:level-skip}). Otherwise, if the node’s level is \(\lcur\), the algorithm calculates the number of neighbors\eat{of \(v\)} at the same level~(\cref{line:ngh-count}). This count is perturbed to ensure privacy, producing a noisy count \(\hnup_v = \nup_v + X + B\)~(\cref{line:noise-ngh}), where \(\nup_v\) is the original count, \(X\) is noise drawn from a symmetric geometric distribution with parameter \(s = \frac{\epsilon}{2 \cdot \left(v.\text{\emph{threshold}}\right)}\), and \(B\) is a bias term introduced to optimize error bounds in practice, calculated using the variance of the noise distribution. A node moves up a level if and only if \(\hnup_v > \upexp^{\gn(\lcur)}\), where \(\gn(\lcur)\) is the group index of the current level~(\cref{line:level-move}). After processing all nodes, the worker sends this information back to the coordinator~(\cref{line:send_nextLevels}).
}

\revision{
\noindent\textbf{Worker (Degree Thresholding).} As shown in~\cref{alg:private-kcore-worker-graph}, each worker begins by loading its assigned subgraph and initializing local structures. For each node $v$, it computes a \emph{noisy degree} $\widetilde{d}_v = d_v + X$, where $d_v$ is the true degree and $X \sim \text{Geom}(\frac{\epsilon}{2})$ is symmetric geometric noise~(\cref{line:noise_degree}). To mitigate large positive noise and reduce overestimation, a bias term—proportional to the noise's standard deviation—is subtracted from $\widetilde{d}_v$. The worker then computes a threshold value for each node as $\left\lceil \log_2(\widetilde{d}_v) \right\rceil \cdot L$, where $L$ is the number of levels per group in the LDS~(\cref{line:deg-threshold}). These thresholds determine the maximum number of rounds in which each node participates. The worker keeps track of the maximum threshold across its subgraph and returns it to the coordinator~(\cref{line:send_threshold}).

\noindent\textbf{Worker (Level Moving).} In each round, workers assess whether nodes in their subgraph should move up a level. As shown in~\cref{alg:private-kcore-worker}, if a node $v$ has already reached its threshold round $\lcur$, it is skipped~(\cref{line:level-skip}). Otherwise, if $v$ is currently at level $\lcur$, we count the number of its neighbors that are also at the same level~(\cref{line:ngh-count}). To ensure privacy, this count $\nup_v$ is perturbed to produce a noisy estimate $\hnup_v = \nup_v + X + B$~(\cref{line:noise-ngh}), where $X$ is symmetric geometric noise with parameter $s = \frac{\eps}{2 \cdot v.\textit{threshold}}$ and $B$ is an added bias term to counteract large negative noise. The node moves up a level if $\hnup_v > \upexp^{\gn(\lcur)}$, where $\gn(\lcur)$ is the group index corresponding to the current level~(\cref{line:level-move}). After processing all nodes, the worker sends the updated level-change bits to the coordinator~(\cref{line:send_nextLevels}).
}

\myparagraph{Bias Terms}
\eat{We introduce two different bias terms, 
one in the degree thresholding procedure,
and one in the level moving procedure. \revision{These bias terms are not manually tuned, but analytically derived based on the standard deviation of the symmetric geometric distribution.} \eat{For the bias terms, we use the approximate standard deviation of the symmetric geometric distribution. }} 
\revision{We introduce two analytically derived bias terms, based on the standard deviation of the symmetric geometric distribution—one for degree thresholding and one for level movement.} The first bias term is subtracted from the 
computed threshold to account for situations where a large positive noise
is chosen. If a large positive noise is chosen, we lose privacy proportion to the new threshold in the level moving 
step. Hence, our bias term biases the result to smaller thresholds, resulting in less privacy loss.
\eat{in the level moving step.}

The second bias term is added to the computed induced noisy degree to account for situations where a large negative 
noise prevents nodes from moving up the first few levels of the structure. Our added bias allows nodes with non-zero degrees to move up the first levels of the structure. Since the degree bounds
increase exponentially, this additional bias term accounts for smaller errors as nodes move up levels. \eat{Our bias terms are derived from theoretical analysis optimizing the trade-off between utility, privacy, and computational cost, rather than being manually tuned hyperparameters.} \revision{ Notably, we observe this behavior in our experiments when comparing to the baseline implementation of~\cite{DLRSSY22}, which omits the bias term: many nodes remain stuck at level 0, resulting in significantly higher error. This highlights the practical importance of our bias correction.
}

\revision{
\begin{example}
\cref{fig:kcore_algo} illustrates node movement in the LDS during \kcorealgo{}. In each round, nodes compute a noisy count of same-level neighbors and move up if it exceeds a threshold based on group index—unless blocked by their degree threshold. In the LDS, green marks eligible nodes; red in the threshold array marks those blocked; and in the graph view, green means movement and red/orange means restriction. For instance, node $c$ is blocked in Round 1, $a$ in Round 2, and $b$ in Round 3. In Round 2, node $d$ is not thresholded but remains at the same level due to failing the noisy neighbor check (\cref{alg:private-kcore-worker}~\cref{line:move-up-condition}).
\end{example}
}


\begin{algorithm}[!ht]
\small
    \textbf{Input:} graph size $n$; number of workers $M$; approx constant $\psi \in \left(0,1\right)$; 
    privacy parameter $\eps \in (0,1]$; split fraction $\privatefraction \in (0,1)$; bias term $b$. \\
    \textbf{Output:} Approximate core numbers and low out-degree ordering of each node in $G$.\\
    \Fn{\FnCoord{$n, \psi, \eps, \privatefraction, b$}} {
           Set $\lambda = \frac{(5-2\eta)\eta}{(\eta+5)^2}; L = \frac{\left\lceil{\log n}\right\rceil}{4}$\\ 
           Set $\eps_1 = \privatefraction\cdot \eps$ and $\eps_2 = \left(1 - \privatefraction\right) \cdot \eps$ \label{line:eps}\\
           Set $\mathcal{C} \leftarrow$ new $\text{Coordinator}\left(LDS, channel\right)$ \\
           Coordinator initializes $\mathcal{C}.LDS$ with $\mathcal{C}.LDS[i] \leftarrow 0 \ \forall i \in [n]$. \label{line:initial}\\
           Set \emph{maxDegreeThresholds} $\leftarrow \left[\right]$ \\
           \ParFor{$w = 1$ to $M$}{
                \emph{maxDegreeThresholds}$\left[w\right] = \FnGraph(w, \eps_1, L, b)$ \label{line:load}\\
           }
           Set \emph{numOfRounds}$=\min\left(4\log(n)\log(\widetilde{d}_{max})-1, \max \left(\text{\emph{maxDegreeThresholds}}\right)\right)$ \label{line:rounds}\\
           \For{$\lcur = 0$ to numOfRounds}{ 
                Set $\gn(\lcur) \leftarrow \left \lfloor \frac{r}{L} \right\rfloor$ \label{line:index}\\
                \ParFor{$w = 1$ to $M$}{
                    $\FnWorker\left(w, \lcur, \eps_2, \psi, \gn(\lcur), \mathcal{C}.LDS\right)$ \label{line:kcore-launch}\\
                }
                $\mathcal{C}.\text{wait}()$ \Comment{coordinator waits for workers to finish} \label{line:wait}\\
                \emph{nextLevels} $\leftarrow \mathcal{C}.channel$ \\
                \For{$i = 1$ to $n$}{
                    \If{nextLevels$\left[i\right] = 1$}{
                        $\mathcal{C}.LDS.\text{levelIncrease}\left(i\right)$ \\
                    }
                }
                Coordinator publishes updated $\mathcal{C}.LDS$ \label{line:publish}\\
           }
           Coordinator calls $cores \leftarrow \mathcal{C}.\FnCoreNumber(\mathcal{C}.LDS, L, \lambda, \psi)$ \\
           Coordinator produces $D$, a total order of all nodes, using levels from $\mathcal{C}.LDS$ (from smaller to larger) breaking ties by node ID\\
            \return $(cores, D)$
    }    
    \small\caption{\small\label{alg:private-kcore-coordinator} $k$-Core Decomposition and Ordering (Coordinator)}
\end{algorithm}


\begin{algorithm}[!t]
\small
    \textbf{Input:} worker ID $w$; privacy parameter $\eps \in (0, 1]$; levels per group $L$; bias term $b$.\\
    \Fn{\FnGraph{$w, \eps, L, b$}} {
        Set \emph{maxThreshold}$\leftarrow 0$ \\
        \For{node $~v :=$ localGraph}{
            Sample $X \sim \geom\left(\frac{\eps}{2}\right)$\\
            $\widetilde{d}_v \leftarrow d_v + X$ \label{line:noise_degree} \Comment{noised degree} \\
            $\widetilde{d}_v \leftarrow \widetilde{d}_v  + 1 - \min\left(b \cdot \frac{2\cdot e^{\eps}}{e^{2\eps} - 1}, \widetilde{d}_v\right)$  \label{line:bias1}\\
            $v$.\emph{threshold} $\leftarrow \left \lceil{\log_2 (\widetilde{d}_v)} \right \rceil \cdot L$  \Comment{thresholding} \label{line:deg-threshold}\\
            $v$.\emph{permZero} $\leftarrow 1$ \\
            \emph{maxThreshold}$=\max \left(\text{\emph{maxThreshold}}, v.\text{\emph{threshold}}\right)$ \\
        }
        $w.\text{send}\left(\text{\emph{maxThreshold}}\right)$ \label{line:send_threshold}\\
    }
    \small\caption{\small\label{alg:private-kcore-worker-graph} Degree Thresholding (Worker)}
\end{algorithm}

\begin{algorithm}[!ht]
\small
    \textbf{Input:} worker id $w$; round number $\lcur$; privacy parameter $\eps \in (0, 1]$; constant $\psi$; group index $\gn(\lcur)$; pointer to the coordinator $LDS$. \\
    \Fn{\FnWorker{$w, \lcur, \eps, \psi, \gn(\lcur), LDS$}}{
        Set \emph{nextLevels} $\leftarrow [0,\ldots,0]$ \\
        \For{node $~v :=$ localGraph}{
            \If{v.threshold $=r$}{ \label{line:level-skip}
                $v$.\emph{permZero} = $0$ \\
            }

            \emph{vLevel} $\leftarrow LDS.\text{getLevel}\left(v\right)$ \\
            \If{vLevel $= r$ and v.permZero $\neq 0$}{
                Let $\nup_v$ be the number of neighbors $j \in \adj_v$ where $LDS.\text{getLevel}\left(j\right) = \lcur$. \label{line:ngh-count}\\
                Set scale $s \leftarrow \frac{\eps}{2 \cdot \left(v.\text{threshold}\right)}$ \label{line:scaling-level-up}\\
                Sample $X \sim \geom(s)$.\\
                Set extra bias $B \leftarrow \frac{6e^s}{\left(e^{2s} - 1\right)^3}$ \label{line:bias2}\\
                Compute $\hnup_v \leftarrow \nup_v + X + B$. \label{line:noise-ngh}\\
                \If{$\hnup_i > \upexp^{\gn(\lcur)}$}{ \label{line:move-up-condition}
                    \emph{nextLevels}$\left[v\right] = 1$ \label{line:level-move}\\
                }\Else{\label{line:stay-same}
                    $v$.\emph{permZero} $=0$ \\
                }
            }
            
        }
        $w.\text{send}\left(w, \text{\emph{nextLevels}}\right)$ \label{line:send_nextLevels}\\
        $w.\text{done}\left(\right)$ \label{line:done}
    }
    \small\caption{\small\label{alg:private-kcore-worker} Level Moving (Worker)}
\end{algorithm}

\begin{algorithm}[!ht]
\SetKwProg{parfor}{parfor}{:}{}
\scriptsize
    \Fn{\FnCoreNumber{$LDS, L, \lambda, \eta$}}{
        \For{$i = 1$ to $n$}{
            $\kest(i)\leftarrow\upexpold^{\max\left(
            \left\lfloor\frac{LDS[i] + 1}{L}\right\rfloor
            -1, 0\right)}$.
        }
        \return $\{(i, \kest(i)) : i \in [n]\}$
	}
    \small\caption{\scriptsize\label{alg:estimate} Estimate Core Number (Coordinator)~\cite{LSYDS22}}
\end{algorithm}

\subsection{Theoretical Analysis}
\label{subsec:kcore-theory}

\paragraph{Memory Analysis \& Communication Cost}  
Let $M$ be the number of workers and $n$ the graph size. Each worker processes $S$ nodes, where $S = \lfloor n/M \rfloor$ for $M-1$ workers, and the last worker handles $n - (M-1) \lfloor n/M \rfloor$. The coordinator maintains the level data structure (LDS) and a communication channel, both requiring $O(n)$ space, resulting in a total memory usage of $O(n)$. Each worker processes $O(S)$ nodes, requiring $O(Sn)$ space for the graph and an additional $O(S)$ space for auxiliary structures, leading to a total of $O(Sn)$. In terms of communication, workers send one bit per node per round, incurring a per-worker cost of $O(S)$ and an overall round cost of $O(n)$. The coordinator receives and distributes the updated LDS, adding another $O(n)$ cost. Thus, the total communication overhead for the algorithm is \(O\left(n \log(n) \log\left(D_{\max}\right) \right)\).  

\paragraph{Privacy Guarantees}
Our privacy guarantees depend on the following procedures.
First, we perform degree-based thresholding, which upper bounds the number of levels a node can move up.
Second, we subtract and add bias terms to the results of our mechanisms. And finally, 
we scale our noise added in~\cref{line:scaling-level-up} of~\cref{alg:private-kcore-worker} by the noisy threshold. \eat{\qq{check that all pointers to 
lines in algorithms are pointing to Algorithm 3.1}}
We show that our algorithm can be implemented using 
local randomizers (\cref{def:local-randomizer}).
Then, we show that the local randomizers have appropriate privacy parameters to satisfy $\eps$-LEDP (\cref{def:ledp}). 

\begin{restatable}[Degree Threshold \revision{LR}]{lemma}{degreethresholdinglr}\label{lem:lr-degree}
    Our degree thresholding procedure run with privacy parameter $\eps'$ is a $(\eps'/2)$-local randomizer.
\end{restatable}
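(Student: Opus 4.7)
The plan is to invoke the geometric mechanism for the degree release and then close everything under post-processing. First, I observe that the only place the procedure in \cref{alg:private-kcore-worker-graph} touches the private adjacency list $\adj_v$ is in computing $d_v = |\adj_v|$. The function $f(\adj_v) = d_v$ has $\ell_1$-sensitivity $1$ with respect to edge-neighboring adjacency lists (\cref{def:edge-adjacent-graphs}), because adding or removing one neighbor changes the degree by exactly $1$.

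Next I would apply the geometric mechanism (\cref{def:sgd-mech}) with privacy parameter $\eps'/2$ to release $\widetilde{d}_v = d_v + X$ where $X \sim \geom(\eps'/2)$. By the privacy guarantee of the geometric mechanism (\cref{lem:sgd-private}), together with the unit sensitivity of $f$, this single release is $(\eps'/2)$-edge DP on the input $\adj_v$, so it already meets the definition of an $(\eps'/2)$-local randomizer (\cref{def:local-randomizer}).

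Everything else the worker does is post-processing of $\widetilde{d}_v$: subtracting the data-independent bias term $\min\!\bigl(b \cdot \tfrac{2e^{\eps'}}{e^{2\eps'}-1},\widetilde{d}_v\bigr)$ and adding $1$ in~\cref{line:bias1}; computing the threshold $\lceil \log_2(\widetilde{d}_v)\rceil \cdot L$ in~\cref{line:deg-threshold}; initializing the flag \emph{permZero}; and taking the maximum across local nodes before transmitting. None of these steps consults $\adj_v$ again, and the bias constant depends only on the public parameters $b, \eps'$, not on the graph. Thus by the post-processing theorem (\cref{thm:post-processing}), the composite map from $\adj_v$ to the transmitted values remains $(\eps'/2)$-edge DP.

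The only subtlety — and what I would expect to be the main obstacle — is handling multiple nodes assigned to the same worker: each node $v$ draws its \emph{own} independent noise on its \emph{own} adjacency list, so from the viewpoint of any single edge $\{u,v\}$ the privacy loss charged to $v$'s randomizer is still $\eps'/2$ (and separately $\eps'/2$ charged to $u$'s randomizer, which will be accounted for when we assemble the full LEDP budget in a later lemma). Bundling each node's computation as a single local randomizer per node, as in the transcript-style framing of \cref{def:ledp}, makes this explicit and completes the proof.
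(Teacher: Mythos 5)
Your proof is correct and follows essentially the same route as the paper's: unit sensitivity of the degree, privacy of the geometric mechanism with parameter $\eps'/2$, and post-processing to cover the bias adjustment and threshold computation. The extra care you take in spelling out that the bias term is data-independent and that each node's randomizer is charged separately is sound, but it's elaboration on the paper's argument rather than a different proof strategy.
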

\begin{proof}
    Our degree-thresholding procedure upper bounds the number of levels that we iterate through using the 
    (private) degree of each node. Specifically, it adds symmetric geometric noise to the 
    degree $\widetilde{d}_u = d_u + \geom(\eps'/2)$ and then computes $\ceil{\log_{1+\eta}(\widetilde{d}_u)} \cdot L$, where
    $L$ is the number of levels per group. The sensitivity of the degree of any node is $1$ and by the privacy
    of the geometric mechanism (\cref{lem:sgd-private}\eat{\cite{BV18,CSS11,DMNS06,DNPR10}}), the output $\widetilde{d}_u$ is $(\eps'/2)$-DP. Then, 
    producing the final level upper bound uses post-processing (\cref{thm:post-processing}\eat{\cite{DMNS06,BS16}}) where privacy is 
    preserved. Hence, our output \eat{of the degree-thresholding procedure} is $(\eps'/2)$-DP and the algorithm can 
    be implemented as a $(\eps'/2)$-local randomizer.
\end{proof}


Using~\cref{lem:lr-degree}, we prove~\cref{thm:k-core}.

\begin{theorem}\label{thm:k-core}
    \cref{alg:private-kcore-coordinator} is $\eps$-LEDP.
\end{theorem}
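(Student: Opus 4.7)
The plan is to decompose \cref{alg:private-kcore-coordinator} into its constituent local randomizers run by each node $v$, bound the privacy loss incurred per-round, and then sum the losses across the two endpoints of an arbitrary edge $\{u,v\}$ to show the LEDP budget $\eps$ is respected. The decomposition is natural: at the very start, each $v$ runs the degree-thresholding procedure of \cref{alg:private-kcore-worker-graph}, and in each subsequent level-moving round $\lcur$ at which $v$ is active (i.e.\ $v$ is at level $\lcur$ and is not yet \emph{thresholded}), $v$ runs the level-moving procedure of \cref{alg:private-kcore-worker}. All remaining operations at the coordinator (choosing the number of rounds, broadcasting the LDS, producing core estimates and the ordering) depend only on the outputs already released by these local randomizers, so by the post-processing theorem (\cref{thm:post-processing}) they incur no additional privacy cost.

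For the first local randomizer, \cref{lem:lr-degree} already certifies that the degree-thresholding step at $v$ is a $(\eps_1/2)$-LR. Crucially, the public output of this step includes (implicitly) $v.\textit{threshold}$, so all later rounds may treat every node's threshold as public. For the level-moving local randomizer at $v$ at round $\lcur$, the only function of the private adjacency $\adj_v$ being queried is $\nup_v$, the count of $v$'s neighbors at level $\lcur$, which has $\ell_1$-sensitivity $1$ with respect to edge-neighboring adjacency lists. Since $v$ then releases $\hnup_v = \nup_v + X + B$ with $X\sim\geom(\eps_2/(2\cdot v.\textit{threshold}))$ and $B$ a deterministic (public) bias, the geometric mechanism (\cref{lem:sgd-private}) shows this step is a $(\eps_2/(2\cdot v.\textit{threshold}))$-LR; the bit $\textit{nextLevels}[v]$ and the \emph{permZero} flag are post-processing of $\hnup_v$ together with public information.

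Next I would bound the total privacy loss of $v$'s level-moving steps across all rounds. By construction in \cref{line:deg-threshold}, once $r = v.\textit{threshold}$ the flag \emph{permZero} is set to $0$ and $v$ never again invokes the noisy count (it will fail the check in \cref{alg:private-kcore-worker}). Hence $v$ runs the level-moving LR in at most $v.\textit{threshold}$ rounds, and by adaptive composition (\cref{thm:composition}) the total privacy loss contributed by $v$'s level-moving randomizers is at most
\[
v.\textit{threshold}\cdot\frac{\eps_2}{2\cdot v.\textit{threshold}} \;=\; \frac{\eps_2}{2}.
\]
Adding the $\eps_1/2$ from the thresholding step, the overall budget consumed at node $v$ is at most $(\eps_1+\eps_2)/2 = \eps/2$.

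Finally, to verify the LEDP condition of \cref{def:ledp}, fix any edge $\{u,v\}$. The privacy parameters of every LR involving this edge are the ones run by $u$ and by $v$, so summing the bounds above gives
\[
\underbrace{\tfrac{\eps_1}{2} + \tfrac{\eps_2}{2}}_{\text{from } u} \;+\; \underbrace{\tfrac{\eps_1}{2} + \tfrac{\eps_2}{2}}_{\text{from } v} \;=\; \eps_1 + \eps_2 \;=\; \eps,
\]
establishing \cref{thm:k-core}. The main delicate point to be careful about is that the per-round noise scale $\eps_2/(2\cdot v.\textit{threshold})$ depends on $v.\textit{threshold}$, which is a function of $v$'s private degree; the argument goes through because thresholds are released as part of the thresholding LR and can thereafter be treated as public, so the conditional reasoning used to invoke adaptive composition is valid. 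Everything else—the coordinator's aggregation, the computation of $\gn(\lcur)$, the final core-number estimation in \cref{alg:estimate}, and the tie-breaking-by-ID ordering—acts only on public outputs and is covered by \cref{thm:post-processing}.
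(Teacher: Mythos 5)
Your proof is correct and follows essentially the same route as the paper's: invoke \cref{lem:lr-degree} for the degree-thresholding step, treat the released thresholds as public so that each level-moving step is a $(\eps_2/(2\cdot t_v))$-local randomizer via the geometric mechanism, compose over at most $t_v$ rounds to get $\eps_2/2$, add $\eps_1/2$ from thresholding, and sum the two endpoints of an edge to obtain $\eps_1+\eps_2=\eps$, with bias terms and coordinator aggregation handled by post-processing. The only difference is that you spell out a few steps (the $\textit{permZero}$ bookkeeping bounding the number of active rounds by $v.\textit{threshold}$, and exactly which coordinator operations are post-processing) more explicitly than the paper does.
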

\begin{proof}
    Our algorithm calls the local randomizers in~\cref{lem:lr-degree} with $\eps_1 = \eps \cdot \privatefraction$, 
    where $\privatefraction \in (0, 1)$ is
    a fraction which splits some portion of the privacy budget, 
    and then iterates through the 
    levels one-by-one while adding noise to the induced degree of each node consisting 
    of all neighbors of the node on the same or higher level. 
    We showed in~\cref{lem:lr-degree} that the degree thresholding procedure can be implemented as $(\eps_1/2)$-local randomizers.
    
    The key to our better error bounds is our upper bound on the number of levels
    we iterate through, bounded by our threshold. Since the thresholds are public outputs from the local randomizers,
    we can condition on these outputs. 
    Let the threshold picked for node $v$ be denoted 
    as $t_v$. Then, we add symmetric geometric noise to the induced degree of the node (among the neighbors at or above $v$'s current level)
    drawn from $\geom\left(\eps_2/(2 \cdot t_v)\right)$ where $\eps_2 = \eps \cdot \left(1-\privatefraction \right )$.
    Conditioning on the public levels of each node, the sensitivity of the induced degree of any node is $1$.
    By the privacy of the geometric mechanism, we obtain a 
    $\left(\eps_2/(2\cdot t_v)\right)$-local 
    randomizer for $v$. By composition (\cref{thm:composition}\eat{\cite{DMNS06,DL09,DRV10}}) over at most $t_v$ levels, the set of all local randomizers called on $v$, 
    is $(\eps_2/2)$-differentially private. For any edge, the sum of the privacy parameters of the set of all 
    local randomizers called on the endpoints of the edge is $2 \cdot \eps_1/2 + 2 \cdot \eps_2/2 = \eps_1 + \eps_2 = \privatefraction \cdot \eps + 
    (1-\privatefraction) \cdot \eps$.
    By~\cref{def:ledp}, this is $\eps$-LEDP.
    
    Finally, our bias terms, added or subtracted after applying the geometric mechanism, preserve privacy due to the post-processing invariance of differential privacy~(\cref{thm:post-processing}\eat{\cite{DMNS06,BS16}}). 
\end{proof}

\paragraph{Approximation Guarantees} 
Our algorithm given in the previous section
contains several changes that results in better theoretical bounds and  optimizes the practical performance on real-world datasets. 
To prove our approximation factors, we first
show~\cref{inv:degree-upper} and~\cref{inv:degree-lower} hold for our modified algorithm.
$D_{\max}$ is the graph's max degree.
\begin{restatable}[Degree Upper Bound~\cite{DLRSSY22}]{invariant}{invariantone}
\label{inv:degree-upper}
If node $i \in V_{\lcur}$ (where $V_\lcur$ contains nodes in level $\lcur$) and $\lcur < 4\log^2 n - 1$, then $i$
has at most $\upexp^{\lfloor r/(\numgrouplevels)\rfloor} + \frac{c\log(D_{\max})\log^2(n)}{\eps}$
neighbors in levels $\geq r$, with high probability, for constant $c > 0$.
    
\end{restatable}
\begin{restatable}[Degree Lower Bound~\cite{DLRSSY22}]{invariant}{invarianttwo}
  \label{inv:degree-lower}
    If node $i \in V_{\lcur}$  (where $V_\lcur$ contains nodes in level $\lcur$) and $\lcur > 0$, then
    $i$ has at least $(1 + \lf)^{\lfloor (r-1)/(\numgrouplevels)\rfloor} - \frac{c\log(D_{\max})\log^2(n)}{\eps}$ neighbors in levels $\geq r - 1$,
    with high probability, for constant $c > 0$.  
\end{restatable}

There are two parts to our analysis: first, we prove that our degree thresholding procedure does not keep the node at too low of a level, with high probability; 
second, we show that our new procedure for moving up levels adds at most the noise used in~\cite{DLRSSY22} and not more. 
Together, these two arguments maintain the invariants. 


\begin{restatable}{lemma}{degreethresholding}
\label{lem:degree-thresholding-approx}
Degree-thresholding satisfies~\cref{inv:degree-upper}. 
\end{restatable}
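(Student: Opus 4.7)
The plan is to verify that restricting a node's upward motion by $t_v = \lceil \log_2(\widetilde{d}_v)\rceil \cdot L$ never causes Invariant~\ref{inv:degree-upper} to fail. A node can end up stuck at a level $r$ for two reasons: either its noisy neighbor count at level $r$ failed the test $\hnup_v > (1+\lf)^{\gn(r)}$, or the threshold fires because $r = t_v$. The first case is handled by the same argument used by Dhulipala~et~al.~\cite{DLRSSY22}, except that our noise scale is $\eps/(2 t_v)$ instead of a quantity involving $\log^2 n$. Applying Lemma~\ref{lem:noise-whp-bound} with this scale yields a high-probability bound of $O(\log(n) t_v / \eps) = O\rb{\log(D_{\max})\log^2(n)/\eps}$ on the level-moving noise, which matches the additive term in Invariant~\ref{inv:degree-upper}. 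So the first case carries over essentially verbatim and I would only sketch this.

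For the new case $r = t_v$, I would argue as follows. First, by Lemma~\ref{lem:noise-whp-bound} applied to $X \sim \geom(\eps/2)$ in Line~\ref{line:noise_degree} of Algorithm~\ref{alg:private-kcore-worker-graph}, we have $|X| \leq c \ln(n)/(\eps/2) = O(\log n / \eps)$ with probability at least $1 - n^{-c}$, and the subtracted bias term contributes an additional $O(1/\eps)$. Hence $d_v \leq \widetilde{d}_v + O(\log n /\eps)$ w.h.p. Second, by the definition of the threshold, $\widetilde{d}_v \leq 2^{t_v/L}$, so $d_v \leq 2^{t_v/L} + O(\log n /\eps)$. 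Third, because the number of neighbors of $v$ in levels $\geq t_v$ is trivially at most $d_v$, it suffices to show
\[
2^{t_v/L} \;\leq\; (1+\lf)^{\gn(t_v)} + O\!\rb{\tfrac{\log(D_{\max})\log^2 n}{\eps}}.
\]
Since $L = \lceil \log n \rceil/4$ and groups have width $2\log n$, a direct calculation relating $\log_2$ and $\log_{1+\lf}$ shows that $2^{t_v/L}$ is absorbed by the exponential $(1+\lf)^{\gn(t_v)}$ term (up to the slack provided by the additive error), completing the case.

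To finish, I would union-bound the two high-probability events (threshold noise for each of the $n$ nodes, and level-moving noise over the at most $O(\log(D_{\max})\log n)$ rounds) by boosting the constant $c$ in Lemma~\ref{lem:noise-whp-bound}. The main obstacle I anticipate is pinning down the parameter matching between base-$2$ thresholding and base-$(1+\lf)$ group growth: specifically, verifying that the choice $L = \lceil \log n\rceil/4$ together with the $2\log n$ group size suffices to absorb $2^{t_v/L}$ into $(1+\lf)^{\gn(t_v)}$ plus the additive slack. This requires careful bookkeeping of constants and is the one place where the input-dependent $\log(D_{\max})$ factor enters the additive term rather than the multiplicative structure.
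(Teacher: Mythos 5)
Your first case—where a node stalls because the noisy neighbor test fails—is not part of this lemma at all; that argument is the subject of a separate lemma on the level-moving procedure (\cref{lem:level-moving-inv}). The lemma at hand must only handle the threshold cap. So the substantive content of your proposal is the second case, and there the argument breaks, and not merely on bookkeeping. With $t_v = \lceil\log_2 \widetilde{d}_v\rceil\cdot L$ and $\gn(t_v) = t_v/L$, your target inequality $2^{t_v/L}\le (1+\lf)^{\gn(t_v)}+O(\log(D_{\max})\log^2 n/\eps)$ becomes $2^{x}\le (1+\lf)^{x}+O(\cdot)$ with $x=\lceil\log_2\widetilde{d}_v\rceil$. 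Since $1+\lf<2$, the left side is at least $\widetilde{d}_v$ while the right side is on the order of $\widetilde{d}_v^{\log_2(1+\lf)}$, which is polynomially smaller; the polylogarithmic additive slack cannot absorb a gap that grows linearly in $\widetilde{d}_v$. So the route ``bound $d_v$ by $2^{t_v/L}$, then absorb $2^{t_v/L}$ into the group bound'' cannot close for any node with non-trivial degree.

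The paper's proof never crosses bases at all. Its key observation is that the threshold is chosen precisely so that the group degree bound at level $t_v$ already dominates the noisy degree: reading the threshold as $\lceil\log_{1+\lf}\widetilde{d}_v\rceil$ groups (consistent with how it is stated in the privacy lemma for this procedure), one has $(1+\lf)^{\gn(t_v)}\ge\widetilde{d}_v$ by construction, and since $\widetilde{d}_v\ge d_v - O(\log n/\eps)$ with high probability by \cref{lem:noise-whp-bound}, the group bound at the threshold level exceeds $d_v$ minus a small additive term. Because the induced degree of $v$ on any level is at most $d_v$, \cref{inv:degree-upper} follows immediately, with the small additive term folded into the invariant's slack by choosing the constant $c$ large enough. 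The correct move, then, is to invoke $(1+\lf)^{\gn(t_v)}\ge\widetilde{d}_v$ directly from the thresholding rule rather than routing through a cross-base comparison that is false.
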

\begin{proof}
\label{proof:degreethresholding}
    By~\cref{lem:noise-whp-bound}, the noise we obtain for thresholding is upper bounded by $\frac{c'\ln n}{(\eps_1/2)} = \frac{2c'\ln n}{\privatefraction \cdot \eps}$ with probability at least $1 - \frac{1}{n^{c'}}$. 
    Thus, the noisy degree $\tilde{d_v}$ we obtain in~\cref{line:noise_degree} of~\cref{alg:private-kcore-worker-graph}
    follows $\tilde{d_v} \geq d_v - \frac{2c'\ln n}{\privatefraction \cdot \eps}$ with probability at least $1 - \frac{1}{n^{c'}}$.
    Let $r$ be the level we output as the threshold level.
    Then, we can compute the upper degree bound of this level to be at least
    $(1+\lf)^{\floor{r/(2\log n)}} = (1+\lf)^{\log_{1+\lf}(\tilde{d_v})} 
    \geq (1+\lf)^{\log_{1+\lf}\left(d_v - \left(\frac{2c'\ln n}{\privatefraction\eps}\right)\right)} = 
    d_v - \frac{2c'\ln n}{\privatefraction\eps}$. The maximum induced degree of $v$ on any level is at most $d_v$.
    Let $d_{v, r}$ be the induced degree of $v$ on the thresholded level $r$, it must hold that $d_{v, r} \leq \left(d_v - \frac{2c'\ln n}{\privatefraction\eps}\right) +
    \frac{4c'\ln n}{\privatefraction\eps} + \frac{2c_3\ln n}{\privatefraction \eps} \leq (1+\lf)^{\floor{r/(2\log n)}} + \frac{(4c'+2c_3)\ln n}{\privatefraction\eps}$. 
    Since $\privatefraction$ and $c'$ are both constants
    and~\cref{inv:degree-upper} allows for picking a large enough constant $c > 0$, we can pick $c \geq 2(c'+c_3)/\privatefraction$ and~\cref{inv:degree-upper} is
    satisfied where $c'$ and $c_3$ are fixed constants $\geq 1$.
\end{proof}

We do not have to prove that our thresholded level satisfies~\cref{inv:degree-lower} since we use the threshold level as an 
\emph{upper bound} of the maximum level that a node can be on. Hence, a node will not reach that level unless
the procedure for moving the node up the levels satisfies~\cref{inv:degree-lower}. We now prove that 
our level movement procedure satisfies both invariants.

\eat{\qq{below proof can be moved to the appendix}}

\begin{restatable}{lemma}{levelmoving}
\label{lem:level-moving-inv}
Our level moving algorithm satisfies~\cref{inv:degree-upper} and~\cref{inv:degree-lower}.
\end{restatable}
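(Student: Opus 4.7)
The plan is to show that, conditional on the public levels at each round, the noise we inject during the level-moving step (Algorithm 3.3) is bounded by $O(\log(D_{\max}) \log^2(n)/\eps)$ with high probability, and then to mimic the inductive argument of Dhulipala et al.~\cite{DLRSSY22} that establishes Invariants~\ref{inv:degree-upper} and~\ref{inv:degree-lower} deterministically once the per-level noise is controlled.

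First I would bound the magnitude of the symmetric geometric noise sample $X$ drawn in Line~\ref{line:noise-ngh}. Its scale parameter is $s = \eps_2/(2 \cdot v.\textit{threshold})$, so by~\cref{lem:noise-whp-bound}, $|X| \le c' \ln(n)/s = 2c' \ln(n) \cdot v.\textit{threshold}/\eps_2$ with probability at least $1 - 1/n^{c'}$. By~\cref{lem:degree-thresholding-approx}, $v.\textit{threshold} \le L \cdot \lceil \log_2(\tilde d_v)\rceil = O(\log(n) \log(D_{\max}))$, where $L = \lceil \log n \rceil/4$ and $\tilde d_v \le D_{\max} + O(\log n / \eps)$ with high probability. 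Hence $|X| = O(\log(D_{\max}) \log^2(n)/\eps)$. The bias term $B$ in Line~\ref{line:bias2} is a deterministic function of $s$; by a direct calculation on the variance of $\geom(s)$ one gets $B = O(1/s^3) = O(\log(D_{\max})^3 \log^3(n)/\eps^3)$, but after substituting $s$ one sees $B$ is dominated by (or absorbed into) the same $O(\log(D_{\max}) \log^2(n)/\eps)$ error budget when all subsequent thresholding decisions are aggregated over the $O(\log D_{\max} \log n)$ rounds — I would tighten the constant in Invariant~\ref{inv:degree-upper} (i.e., choose $c$ large enough) so that $|X| + |B| \le \tfrac{c \log(D_{\max}) \log^2(n)}{\eps}$.

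With the per-round noise envelope in hand, the invariants follow by induction on the round number, exactly as in~\cite{DLRSSY22}. For~\cref{inv:degree-upper}: if $v \in V_\lcur$ at the end of a round and $\lcur$ was the level at the start of the round, then at Line~\ref{line:move-up-condition} we had $\hnup_v \le \upexp^{\gn(\lcur)}$, i.e.\ $\nup_v \le \upexp^{\gn(\lcur)} - X - B \le \upexp^{\gn(\lcur)} + c \log(D_{\max}) \log^2(n)/\eps$, which is precisely the upper bound. The case where $v$ does not participate (thresholded out via \emph{permZero}) is handled separately: such a $v$ was already at some level satisfying the induction hypothesis, and its neighborhood composition cannot have changed in a way that violates the upper bound, because the invariant is forward-monotone in level. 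For~\cref{inv:degree-lower}: since $\lcur > 0$, $v$ moved up at some prior round from level $\lcur - 1$, at which point $\hnup_v > \upexp^{\gn(\lcur-1)}$, so $\nup_v \ge \upexp^{\gn(\lcur-1)} - |X| - |B| \ge \upexp^{\gn(\lcur-1)} - c \log(D_{\max}) \log^2(n)/\eps$, matching the claimed lower bound.

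The main obstacle will be handling the scale change in the geometric noise relative to~\cite{DLRSSY22}. There, the noise scale is fixed across all levels and absorbs into a $\log^3 n/\eps$ additive error; here the scale depends on $v.\textit{threshold}$, which is itself a random variable from the thresholding step. I would handle this by (i) conditioning on the public threshold $t_v$ (justified since the thresholds are released, and~\cref{thm:k-core}'s privacy accounting already charges for them), (ii) using the high-probability bound $t_v = O(\log(n) \log(D_{\max}))$ that follows from~\cref{lem:degree-thresholding-approx}, and then (iii) taking a union bound over the at most $n$ nodes and the at most $O(\log(n) \log(D_{\max}))$ rounds so every per-round noise sample simultaneously respects the envelope. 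Standard choice of the constant in~\cref{lem:noise-whp-bound} absorbs the union bound into the high-probability guarantee, completing the proof.
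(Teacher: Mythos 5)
Your proof takes essentially the same route as the paper's: bound the magnitude of the symmetric geometric noise via \cref{lem:noise-whp-bound}, bound the threshold $t_v$ by $O(\log(n)\log(D_{\max}))$ using the output of the degree-thresholding step, and then defer the round-by-round maintenance of \cref{inv:degree-upper} and \cref{inv:degree-lower} to the argument of Dhulipala et al.~\cite{DLRSSY22}, picking the constant $c$ large enough to cover all the $1/n^{c_i}$ failure events.

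One caution about the extra step you added: you compute the bias term as $B = O(1/s^3) = O(\log^3(D_{\max})\log^3(n)/\eps^3)$, and in the same breath claim it is ``dominated by (or absorbed into)'' the $O(\log(D_{\max})\log^2(n)/\eps)$ error budget by tightening $c$. Those two statements are not compatible: $\log^3(D_{\max})\log^3(n)/\eps^3$ is asymptotically \emph{larger} than $\log(D_{\max})\log^2(n)/\eps$, so no constant $c$ rescues the inequality $|X|+|B| \le c\log(D_{\max})\log^2(n)/\eps$ as written. The paper's own proof simply does not include $B$ in the accounting (it bounds only the geometric noise $X$ and treats the bias as a practical correction handled by post-processing), so your attempt to address it is going beyond the paper, but the justification you give for why $B$ is harmless does not actually hold and would need to be replaced (e.g., by showing $B$ is offset by a symmetric effect across rounds, or by arguing it only ever pushes nodes \emph{up}, which loosens \cref{inv:degree-upper} but needs its own argument for \cref{inv:degree-lower}).
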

\begin{proof}
\label{proof:levelmoving}
    Our level moving algorithm is similar to~\cite{DLRSSY22} except that we pick noise based on the threshold. 
    Thus, by~\cref{lem:noise-whp-bound}, our algorithm picks noise that is at most $\frac{2c_1 \cdot t_v\ln n}{\eps}$ with probability 
    at least $1 - \frac{1}{n^{c_1}}$ where $t_v$ is the released threshold for $v$ and $c_1 \geq 1$ is a fixed constant. 
    Also, by~\cref{lem:noise-whp-bound}, 
    it holds that $t_v \leq \log\left(D_{\max} + \frac{2c_2\ln n}{\privatefraction\eps} - \frac{2c_3\ln n}{\privatefraction\eps}\right)\cdot L 
    \leq 2\log\left(D_{\max}\right)\log n$ 
    with probability at least $1 - \frac{1}{n^{c_2}}$. 
    A node moves up a level from level $r$ if its induced degree plus the noise exceeds the threshold $(1 + \lf)^{\floor{r/(2\log n)}}$.
    Thus, using our computed noise, the degree of a node must be at least $(1 + \lf)^{\floor{r/(2\log n)}} 
    - 2\log\left(D_{\max}\right)\left(\frac{2c_1\log^2 n}{\eps}\right)$ 
    with probability at least $1 - \frac{1}{n^{c_1}} - \frac{1}{n^{c_2}}$ when it moves up from
    level $r$. By choosing large enough constants $c_1, c_2 > 0$ and $c \geq 4c_1 c_2$, we satisfy~\cref{inv:degree-lower}.
    Similarly, the node does not move up from level $r$ when its induced degree plus noise is at most $(1 + \lf)^{\floor{r/(2\log n)}}$.
    By a symmetric argument to the above, we show that~\cref{inv:degree-upper} is satisfied.
\end{proof}

Finally, using~\cref{lem:degree-thresholding-approx} and~\cref{lem:level-moving-inv}, we can prove the final approximation factor
for our algorithm.

\eat{\qq{below proof can be moved to the appendix}}

\begin{restatable}{theorem}{kcoreapprox}
\label{thm:kcore-approx}
Our algorithm returns $(2+\eta, O(\log(D_{\max})\log^2(n)/\eps))$-approximate $k$-core numbers, with high probability, in \(O\left(\log(n) \log\left(D_{\max}\right) \right)\) rounds of communication.
\end{restatable}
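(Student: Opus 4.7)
The plan is to build on the two invariant lemmas (Lemma~\ref{lem:degree-thresholding-approx} and Lemma~\ref{lem:level-moving-inv}) already established: together they guarantee, with high probability, both the degree upper bound (\cref{inv:degree-upper}) and the degree lower bound (\cref{inv:degree-lower}) at every level that any node is assigned to during the execution. Once these two invariants hold simultaneously with high probability (via a union bound over all $n$ nodes and all $O(\log(D_{\max})\log n)$ rounds, absorbing the extra logarithmic factors into the constant $c$ in the additive term), the core-number estimate can be read off directly from the final level data structure exactly as in the analysis of Dhulipala et al.~\cite{DLRSSY22}, applied with $\log(D_{\max})$ replacing the $\log n$ that their threshold-free version produces.

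Concretely, I would first argue the round complexity. By~\cref{lem:noise-whp-bound}, the noisy degree $\widetilde{d}_v = d_v + \geom(\eps_1/2)$ satisfies $\widetilde{d}_v \leq D_{\max} + O(\log n/\eps)$ with high probability, so the maximum threshold ever produced in~\cref{line:deg-threshold} of~\cref{alg:private-kcore-worker-graph} is $\lceil \log_2(\widetilde{d}_{\max}) \rceil \cdot L = O(\log(D_{\max}) \log n)$. Hence the coordinator's outer loop in~\cref{alg:private-kcore-coordinator} runs for $O(\log(D_{\max})\log n)$ rounds, each of which is one synchronous communication round.

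Next, I would convert the invariants into the approximation guarantee. For a node $v$ with final level $\lcur_v$, let $g_v = \lfloor \lcur_v / L \rfloor$ denote its group index. By~\cref{inv:degree-lower}, the induced subgraph on nodes at level $\geq \lcur_v - 1$ has minimum degree at least $(1+\lf)^{g_v - 1} - O(\log(D_{\max})\log^2 n/\eps)$, which is a lower bound certificate that $\core(v) \geq (1+\lf)^{g_v-1} - O(\log(D_{\max})\log^2 n/\eps)$. Conversely,~\cref{inv:degree-upper} gives, by an averaging / densest-subgraph argument identical to~\cite{DLRSSY22}, an upper bound $\core(v) \leq 2(1+\lf)^{g_v} + O(\log(D_{\max})\log^2 n/\eps)$; combining with the definition $\kest(v) = (2+\lambda)(1+\lf)^{\max(g_v-1,0)}$ in~\cref{alg:estimate} (where $\lambda$ is chosen so that $(2+\lambda)(1+\lf) = 2+\eta$), we obtain $\core(v) - O(\log(D_{\max})\log^2 n/\eps) \leq \kest(v) \leq (2+\eta)\core(v) + O(\log(D_{\max})\log^2 n/\eps)$, which matches~\cref{def:approx-core number}.

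The main obstacle I anticipate is the careful propagation of the additive error through the exponential level-to-core-number map: the invariants phrase additive slack inside the base of an exponential of the group index, and if one is not careful, that slack can blow up by a factor of $(1+\lf)^{g_v}$ when pulled outside. The correct move, as done in~\cite{DLRSSY22}, is to absorb the slack by using the extra $(1+\lf)$ gap between consecutive groups (this is precisely why groups of size $L = \lceil \log n \rceil/4$ are chosen) so that the additive noise $O(\log(D_{\max})\log n/\eps)$ becomes a low-order additive term $O(\log(D_{\max})\log^2 n/\eps)$ after absorbing one factor of $\log n$ through the $(1+\lf)^L$ gap; I expect to have to redo this step explicitly with $\log(D_{\max})$ in place of $\log n$ in the noise bound, but otherwise the calculation is identical to the one in~\cite{DLRSSY22}. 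Finally, the low out-degree ordering guarantee follows because at any point in the algorithm a node's outgoing neighbors (i.e., those at strictly higher or equal-with-larger-ID level) are bounded by the same upper-bound invariant, giving out-degree $(2+\eta)\degen + O(\log(D_{\max})\log^2 n/\eps)$ after applying the same densest-subgraph reduction with $\degen = \max_v \core(v)$.
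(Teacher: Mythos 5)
Your proposal follows the same route as the paper's proof: establish the two degree invariants (already done in \Cref{lem:degree-thresholding-approx} and \Cref{lem:level-moving-inv}), bound the number of rounds by the maximum threshold, and then convert the invariants into the $(2+\eta, O(\log(D_{\max})\log^2 n/\eps))$-approximation. The only difference is that the paper's proof of \Cref{thm:kcore-approx} black-boxes this last conversion entirely, stating simply that the invariants imply the claimed approximation ``using Theorem 4.1 of~\cite{DLRSSY22},'' whereas you unroll the content of that cited theorem—which is the right thing to do if the cited result is not restated, but the paper treats it as a modular lemma with a new noise parameter and does not re-derive it. Two small slips in your unrolling, which do not affect the structure of the argument: the estimator in \Cref{alg:estimate} raises the \emph{entire} base $(2+\lambda)(1+\lf)$ to the power $\max(g_v-1,0)$, not only the factor $(1+\lf)$; and your round-count argument implicitly assumes $D_{\max}$ dominates the $O(\log n/\eps)$ noise in $\widetilde{d}_v$, a caveat the paper also leaves implicit.
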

\begin{proof}
\label{proof:kcoreapprox}
    By~\cref{lem:degree-thresholding-approx} and~\cref{lem:level-moving-inv}, our algorithm satisfies~\cref{inv:degree-upper}
    and~\cref{inv:degree-lower}, hence, our algorithm returns our desired approximation
    using Theorem 4.1 of~\cite{DLRSSY22}. Our algorithm iterates to value at most $O(\log n \log \left(D_{\max}\right))$ for $D_{\max}$ (the maximum threshold), resulting 
    in a total of \(O\left(\log n \log\left(D_{\max}\right) \right)\) rounds. 
\end{proof}

\section{Triangle Counting using Low Out-Degree Ordering}
\label{sec:tcount}
\setcounter{algocf}{0} 

We present our novel triangle counting algorithm, \tcountalgo{}, which leverages the low out-degree ordering obtained from the \kcorealgo{} algorithm along with randomized response (RR). \revision{Prior works rely on either all or a sample of neighboring edges after applying RR, and often suffer from error bounds that scale poorly with graph size. To address these limitations, our algorithm exploits a new input-dependent graph property, the degeneracy (maximum core number), to upper bound the number of oriented 4-cycles~(\cref{fig:oriented-c4}), yielding  significantly tighter error bounds in theory and practice.} \eat{While previous approaches rely on all neighboring edges after applying RR or on sampling subsets of edges, these methods often suffer from error bounds that scale poorly with the graph size. \eat{To address these limitations, our algorithm introduces a new technique by utilizing the upper bound on the number of oriented 4-cycles when counting triangles for a node.} To address these limitations, our algorithm leverages a key observation: the number of oriented 4-cycles~(\cref{fig:oriented-c4}) for a node is upper bounded by \(n^2 d^2\), where \(d\) is the maximum core number (degeneracy). By utilizing this bound, \eat{to count the number of triangles for a node,} the algorithm exploits a new input-dependent graph property, the degeneracy, to significantly improve error bounds in theory and practice.\eat{This approach enhances both the theoretical robustness and practical applicability of the algorithm for large-scale graphs.} \eat{In~\cref{sec:experiments}, we empirically evaluate the algorithm’s performance, comparing it to the baseline methods of Imola et al.\cite{IMC21locally} and Eden et al.\cite{ELRS23}. Our experimental results demonstrate significant accuracy improvements, achieving up to $\mathbf{91}$\textbf{x} improvement in terms of the relative error and \textbf{six orders of magnitude} better multiplicative approximation bounds compared to previous implementations.}}

\eat{This section is organized as follows. In~\cref{subsec:tcount-desc}, we present the pseudocode and describe the algorithm in relation to it, focusing on the coordinator and worker roles, as well as its adaptation for distributed environments.
In~\cref{subsec:tcount-theory} we analyze the expectation, variance, and runtime of our triangle counting algorithm. 
In particular, we prove our privacy, approximation, number of rounds of communication, memory, and communication cost guarantees. Due to space constraints, we relegate some lemmas,
theorems, and proofs to~\cref{appendix:tcount}.}

\SetKwProg{Fn}{Function}{}{end}\SetKwFunction{FRecurs}{FnRecursive}%
\SetKwFunction{FnUpdateLevels}{UpdatenodeLevels}
\SetKwFunction{FnInsert}{Incremental}
\SetKwFunction{FnStatic}{LEDPCoreDecomp}
\SetKwFunction{FnLEDPDensestSubgraph}{LEDPDensestSubgraph}
\SetKwFunction{FnDelete}{Decremental}
\SetKwFunction{FnCoreNumber}{EstimateCoreNumbers}
\SetKwFunction{FnLowOutdegree}{EstimateOrdering}
\SetKwFunction{FnTCoord}{\tcountalgo{}}
\SetKwFunction{FnTData}{Agregate}
\SetKwFunction{FnTWorkerRR}{RRWorker}
\SetKwFunction{FnTWorkerMaxOutdegree}{MaxOutDegreeWorker}
\SetKwFunction{FnTWorkerT}{CountTrianglesWorker}
\SetKwFunction{FnGraph}{LoadGraphWorker}
\SetKwFunction{FnCoord}{$k$-CoreD}

\begin{algorithm}[!ht]
\small
    \textbf{Input:} graph size $n$; number of workers $M$; constant $\psi \in \left(0, 1\right)$; privacy parameter $\eps \in (0, 1]$; 
     split fraction $\privatefraction \in (0,1)$; bias term $b$; \\
    \textbf{Output:} Noisy Triangle Count\\
    \Fn{\FnTCoord{$n, M, \psi, \eps, \privatefraction, b$}}{
        Set $Z \leftarrow \FnCoord(n, \psi, \frac{\eps}{4}, \privatefraction, b)$ (\cref{alg:private-kcore-coordinator})  \\
        Set $\mathcal{C}\leftarrow$ new $\text{Coordinator}\left(cRR, cTCount, cMaxOut, X\right)$ \eat{\qq{where are cTCount, cMaxOut, cRR, X initialized?}} \label{line:initialize-triangle-coordinator}\\
        \Comment{Round 1: Randomized Response} \label{line:round1-start} \\
        \ParFor{$w = 1$ to $M$}{ 
            $\FnTWorkerRR\left(w, n, \frac{\eps}{4}\right)$ \\
        }
        $\mathcal{C}.\text{wait}()$ \Comment{coordinator waits for workers to finish} \\
        $\mathcal{C}.\text{publishNoisyEdges}(cRR, X)$  \label{line:publishRR}\\
        \Comment{Round 2: Max Out-degree} \label{line:round2-start}\\
        \ParFor{$w = 1$ to $M$}{ 
            $\FnTWorkerMaxOutdegree\left(w, \frac{\eps}{4}, Z\right)$ \\
        }
        $\mathcal{C}.\text{wait}()$ \\
        $\widetilde{d}_{\max} \leftarrow \max\left(\{C.cMaxOut[i] \mid i \in [M]\}\right) + \frac{12\log(n)}{\eps}$ \label{line:max-D}\\
        \Comment{Round 3: Count Triangles} \label{line:round3-start}\\
        \ParFor{$w = 1$ to $M$}{ 
            $\FnTWorkerT\left(w, \frac{\eps}{4}, Z, X, \widetilde{d}_{\max} \right)$ \\
        }
        $\mathcal{C}.\text{wait}()$ \\
        $\widetilde{\Delta} \leftarrow \sum_{i=1}^{n}C.cTCount\left[i\right]$ \label{line:tcount}\\
        \return $\widetilde{\Delta}$
    }
    \small\caption{\small\label{alg:private-tcount-coordinator} $\eps$-LEDP Triangle Counting (Coordinator)}
\end{algorithm}

\begin{algorithm}[!ht]
\small
    \textbf{Input:} worker id $w$; graph size $n$; privacy parameter $\eps \in (0, 1]$;\\
    \Fn{\FnTWorkerRR{$w, n, \eps$}}{
        Set \emph{neighborsRR} $\leftarrow [][]$\\
        \For{node $~v :=$ localGraph}{
            For all $ngh_{v} \leftarrow \left\{j : j \in \left[n\right] \wedge j > v \right\}$  \label{line:upper_triang}\\
            \emph{neighborsRR}$\left[v\right] = \text{RandomizedResponse}_\eps(ngh_{v})$ \eat{(\cref{def:rr}) \qq{remove the ref for the submission and only include it for the full version}} \label{line:rr}\\ 
        }
        $w.\text{sendRR}(w, \text{\emph{neighborsRR}})$ \label{line:send_rr}\\
        $w.\text{done}\left(\right)$ 
    }
    \small\caption{\small\label{alg:private-tcount-workerRR} Randomized Response (Worker)}
\end{algorithm}

\begin{algorithm}[!ht]
\small
    \textbf{Input:} worker id $w$; graph size $n$; privacy parameter $\eps \in (0, 1]$;\\
    \Fn{\FnTWorkerMaxOutdegree{$w, \eps, Z$}}{
        Set $out_{\max} \leftarrow 0$.\\
        \For{node, adjacency list $~v, ngh :=$ localGraph}{
            Set $ngh_{v} \leftarrow \left\{j : j \in ngh \wedge Z[j] > Z[v] \right\}$  \label{line:outgoing} \\
            $B \sim \geom(\eps)$\\
            $out_{\max} \leftarrow \max(out_{\max}, |ngh_{v}| + B)$ \label{line:noise-out-degree}\\            
        }
        $w.\text{sendMaxOutdegree}(out_{\max})$ \label{line:send_out}\\
        $w.\text{done}\left(\right)$ 
    }
    \small\caption{\small\label{alg:private-maxOutdegree}  Noisy Max Out-Degree (Worker)}
\end{algorithm}

\begin{algorithm}[!ht]
\small
    \textbf{Input:} worker id $w$; privacy parameter $\eps \in (0, 1]$; low out-degree ordering $Z$, published noisy edges $X$; public maximum noisy out-degree $\widetilde{d}_{\max}$;\\
    \Fn{\FnTWorkerT{$w, \eps, Z, X, \widetilde{d}_{\max}$}}{
        Set \emph{workerTCount} $\leftarrow 0.0$\\
        \For{node, adjacency list $~v, ngh :=$ localGraph}{
            Set $\widetilde{\Delta} \leftarrow 0.0$ \\
            $\text{OutEdges}_v= \left\{j : j \in ngh \wedge Z\left[j\right] > Z\left[v\right]\right\}$ \\
            \For{$i_1 \in \{1, \dots, \min(\widetilde{d}_{\max}, |\text{OutEdges}_v|)\}$}{ \label{line:out-degree-bound}
                \For{$i_2 \in \{i_1 + 1, \dots, \min(\widetilde{d}_{\max}, |\text{OutEdges}_v|)\}$}{ \label{line:out-inner-truncated-loop}
                    $j \leftarrow \text{OutEdges}_v[i_1]$\label{triangle-count-worker:j}\\
                    $k \leftarrow \text{OutEdges}_v[i_2]$\\
                     $\widetilde{\Delta} \leftarrow \widetilde{\Delta} + \frac{X_{\left\{j,k\right\}} \cdot \left(e^\eps + 1\right) - 1}{e^\eps - 1}$ \label{line:count_triangle}\\  
                }
            }
            Sample $R \sim \text{Lap}\left(\frac{\eps}{2 \cdot \widetilde{d}_{\max}}\right)$ \label{line:laplace}\\
            $\widetilde{\Delta} \leftarrow \widetilde{\Delta} + R$ \\
            \emph{workerTCount} $\leftarrow workerTCount + \widetilde{\Delta}$
        }
        $w.\text{sendTCount}\left(w, \text{\emph{workerTCount}}\right)$ \label{line:Send_triangle}\\
        $w.\text{done}\left(\right)$ 
    }
    \small\caption{\small\label{alg:private-tcount-workerT} Triangle Counting (Worker)}
\end{algorithm}

\begin{figure}
    \centering
    \includegraphics[scale=0.5,width=\linewidth]{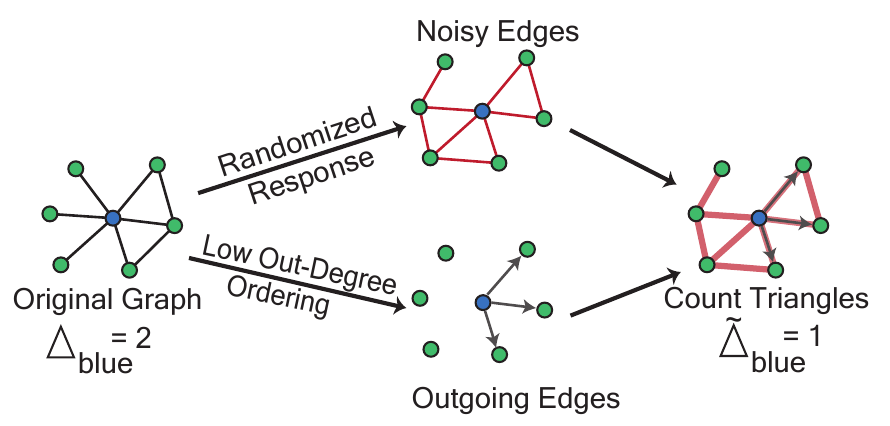}
    \caption{\revision{\footnotesize{\tcountalgo{} for blue node: true triangle count is $2$, but due to Randomized Response and low out-degree ordering, estimate is $1$.}}}
    \label{fig:tcount_algo}
\end{figure}

\subsection{Algorithm Description}
\eat{The triangle counting algorithm consists of three additional computation rounds after the low out-degree ordering ($Z$) is computed using \kcorealgo{}.} \revision{Our algorithm consists of three additional computation rounds after computing the low out-degree ordering ($Z$) using \kcorealgo{}.} \revision{In the first round, each node perturbs its adjacency list using Randomized Response (RR)~\cite{warner1965randomized}}, producing a privacy-preserving set of noisy edges for subsequent computations. In the second round, we calculate the maximum noisy out-degree, $\widetilde{d}_{\max}$, by determining each node's outgoing edges based on \(Z\). While the first two rounds can be combined, we separate them for clarity. In the final round, we compute the number of triangles incident to each node, using the noisy edges and the maximum noisy out-degree, $\widetilde{d}_{\max}$. The algorithm is implemented in a distributed setting, where computation is divided between a coordinator and multiple workers. \eat{The following sections describe this division, present the pseudocode and expand on each round of computation, and detail the roles of the coordinator and workers.}\revision{The pseudocode is structured to reflect this division. \eat{We now describe their respective roles.}}


\label{subsec:tcount-desc}
\noindent\textbf{Coordinator} As shown in~\cref{alg:private-tcount-coordinator}, the coordinator receives the graph size $n$, number of workers $M$, constant parameter $\psi > 0$, privacy parameter $\eps \in (0, 1]$, privacy split fraction $\privatefraction \in (0,1)$, and bias term $b$. It initializes three channels, $cRR, cMaxOut, cTCount$, to receive \eat{Randomized Response}\revision{RR} noisy edges, maximum noisy out-degrees, and local noisy triangle counts from workers (\cref{line:initialize-triangle-coordinator}). \eat{It also initializes \(X\) to store noisy edges for the entire graph.}  The coordinator manages the algorithm’s execution, collecting worker outputs and publishing updates each round. It first computes the low out-degree ordering, \(Z\), using \kcorealgo{}. In the first round, it launches $M$ asynchronous worker processes (\cref{line:round1-start}), each computing and sending noisy edges after \eat{Randomized Response}\revision{RR}. Before the second round, the coordinator aggregates and stores them in \(X\), then publishes \(X\) for global access~(\cref{line:publishRR}), enabling workers to utilize the noisy public edges in subsequent computations. In the second round (\cref{line:round2-start}), workers compute noisy maximum out-degrees for their subgraphs using~\cref{alg:private-maxOutdegree}. The coordinator then determines \(\widetilde{d}_{\max}\), the maximum noisy out-degree across all workers (\cref{line:max-D}). Finally, in the third round (\cref{line:round3-start}), each worker counts the triangles incident to its nodes using the low out-degree ordering, published noisy edges, and \(\widetilde{d}_{\max}\). Workers send noisy local triangle counts to the coordinator, which aggregates them to compute the overall noisy triangle count (\cref{line:tcount}).

\noindent\textbf{Worker (Randomized Response)} As specified in~\cref{alg:private-tcount-workerRR}, workers maintain a \emph{neighborsRR} data structure to store noisy edges. For each node $v$, noisy edges are computed via Randomized Response (RR) with parameter $\eps$, processing only the upper triangular part of the adjacency matrix~(\cref{line:upper_triang}), as the graph is undirected. Specifically, for a node $v$, all indices greater than $v$ are processed using $\text{RandomizedResponse}_\eps(ngh_v)$, which flips the existence of each edge $(v,ngh_v)$ with probability $\frac{1}{e^{\eps} + 1}$~(\cref{line:rr}). Once computed, workers send noisy edges to the coordinator~(\cref{line:send_rr}).  

\noindent\textbf{Worker (Noisy Max Out-Degree)} In~\cref{alg:private-maxOutdegree}, workers maintain a variable $out_{max}$ which stores the maximum noisy out-degree \revision{of their subgraph.} For each node $v$, the worker first computes the out-degree $d_v$ using the order provided in $Z$, where an edge $\left(v,j\right)$ is considered outgoing if $Z[j] > Z[v]$~(\cref{line:outgoing}). The out-degree $d_v$ is the number of outgoing edges from $v$. Then, the worker adds symmetric geometric noise with parameter $\eps$ to $d_v$\revision{, computes the max noisy out-degree, and sends $out_{\max}$ to the coordinator~(\cref{line:send_out}).} 

\noindent\textbf{Worker (Count Triangles)} \eat{The pseudocode for our algorithm is outlined in}\revision{As shown in}~\cref{alg:private-tcount-workerT}, \revision{each worker} computes the number of triangles incident to each node in its respective subgraph. For each node $v$, the outgoing edges are identified and sorted in ascending order by node IDs. The triangle count is determined by iterating over all unique pairs of outgoing neighbors $\{j, k\}$ of $v$, up to $\widetilde{d}_{\max}$~(\cref{line:out-degree-bound},\ref{line:out-inner-truncated-loop}). For each pair, the triangle contribution is calculated as: $\frac{X_{\{j,k\}} \cdot (e^\eps + 1) - 1}{e^\eps - 1}$, where $X_{\{j,k\}}$ represents the noisy presence (1) or absence (0) of an edge between $j$ and $k$~(\cref{line:count_triangle}). To ensure privacy, additional noise is added to the triangle counts using the \revision{Laplace distribution\footnote{We use Laplace noise here as it offers a smoother tradeoff for smaller parameters.}} with parameter $\frac{\eps}{2 \cdot \widetilde{d}_{\max}}$, where $\widetilde{d}_{\max}$ is the global maximum noisy out-degree. Upon completing the computation, each worker aggregates and returns the  noisy triangle count for its entire subgraph~(\cref{line:Send_triangle}).

\revision{
\begin{example}
    In \cref{fig:tcount_algo}, we apply \tcountalgo{} to estimate the number of triangles incident to the blue node. The algorithm first orients the edges using a low out-degree ordering, so each node only considers neighbors with higher order. Randomized Response is then applied to the original adjacency list, and the resulting noisy edges are used in combination with the oriented edges to count triangles. As shown in the figure, while the blue node is part of two true triangles in the original graph, only one triangle is preserved under the noisy edges.
\end{example}

}

\subsection{Theoretical Analysis}
\label{subsec:tcount-theory}

\paragraph{Memory Analysis \& Communication Cost}  
Let $M$ be the number of workers and $n$ the graph size. Each worker processes $S$ nodes, where $S = \lfloor n/M \rfloor$ for $M-1$ workers, and the last worker handles $n - (M-1) \lfloor n/M \rfloor$ nodes. The coordinator manages three communication channels and publishes the noisy edges for the entire graph. The \emph{cRR} structure, which aggregates noisy edges, requires $O(n^2)$ space, while \emph{cTCount} and \emph{cMaxOut}, which collect triangle counts and maximum noisy out-degree, require $O(M)$ space each. Storing published noisy edges further adds $O(n^2)$ space, resulting in a total coordinator memory requirement of $O(n^2 + M)$. Each worker processes $O(S)$ nodes, requiring $O(Sn)$ space for the graph. The \emph{neighborsRR} structure for storing noisy edges demands $O(Sn)$ space, while computing the maximum noisy out-degree requires $O(S)$. The final triangle count computation takes $O(S \cdot \widetilde{d}_{\max})$ space, where $\widetilde{d}_{\max}$ is the maximum noisy out-degree, leading to an overall worker memory requirement of $O(Sn)$. The algorithm runs three communication rounds beyond those for low out-degree ordering. Workers first send noisy edges, incurring $O(Sn)$ communication cost, followed by sending the maximum noisy out-degree and triangle counts, each requiring $O(M)$ communication. Thus, the total communication overhead for the algorithm is $O(n^2 + M)$.  

\paragraph{Privacy Guarantees} As before, our privacy guarantees are proven by implementing our 
triangle counting algorithm using local randomizers. 

\begin{lemma}\label{lem:triangle-privacy} 
    Our triangle counting algorithm is $\eps$-LEDP. 
\end{lemma}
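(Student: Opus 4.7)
The plan is to decompose \cref{alg:private-tcount-coordinator} into its four sequential sub-procedures --- the invocation of \kcorealgo{} that builds the ordering $Z$, the Randomized Response round (\cref{alg:private-tcount-workerRR}), the maximum out-degree round (\cref{alg:private-maxOutdegree}), and the per-node triangle-count round (\cref{alg:private-tcount-workerT}) --- each of which is invoked with privacy parameter $\eps/4$. For each sub-procedure I would identify the local randomizers that actually touch a fixed edge $\{u,v\}$ and bound their summed privacy cost by $\eps/4$; summing over the four sub-procedures would then yield at most $\eps$ per edge, which is precisely the requirement of~\cref{def:ledp}.

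The first two sub-procedures are immediate. \kcorealgo{} is $(\eps/4)$-LEDP by~\cref{thm:k-core}, and in particular the resulting low out-degree ordering $Z$ is a public output that may be conditioned upon in what follows. For the RR round, worker $w$ applies Randomized Response with parameter $\eps/4$ only to the bits of $v$'s adjacency list corresponding to neighbors with strictly larger index than $v$ (see~\cref{line:upper_triang} of~\cref{alg:private-tcount-workerRR}). Consequently, any single edge $\{u,v\}$ with $u<v$ is touched by exactly one local randomizer --- the one run by node $u$ --- whose $\ell_1$-sensitivity is $1$, so by~\cref{lem:rr-dp} the per-edge contribution is $\eps/4$.

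For the max out-degree round the crucial structural fact is that $Z$ is public, so the out-edge set of a node is well defined once $Z$ is fixed, and any edge $\{u,v\}$ with $Z[u]<Z[v]$ appears in the out-set of $u$ only. Modifying $\{u,v\}$ therefore shifts only $u$'s noisy out-degree (computed with sensitivity-$1$ geometric noise drawn from $\geom(\eps/4)$), so by~\cref{lem:sgd-private} together with~\cref{thm:post-processing} (the max aggregation and the subsequent release of $\widetilde{d}_{\max}$ are post-processing of public quantities) the per-edge contribution is again $\eps/4$. For the triangle-count round, node $v$'s released value is a function of the private set $\text{OutEdges}_v$ together with the public quantities $Z$, $X$, and $\widetilde{d}_{\max}$; both nested loops are explicitly truncated at $\widetilde{d}_{\max}$, so changing one edge alters at most $\widetilde{d}_{\max}$ summands, each lying in a bounded range controlled by $X$ and $\eps$. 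Calibrating the Laplace noise with parameter $\eps/(2\widetilde{d}_{\max})$ on this clipped sum should then yield an $(\eps/4)$-DP local randomizer via~\cref{lem:laplace}, and once more only the single endpoint whose $\text{OutEdges}$-set contains $\{u,v\}$ incurs a cost.

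The main obstacle I anticipate is the sensitivity analysis of the truncated triangle-counting sum. One must carefully verify that clipping both inner loops at the public value $\widetilde{d}_{\max}$ produces a crisp $\ell_1$-sensitivity bound --- absorbing the $(e^{\eps}+1)/(e^{\eps}-1)$ scaling of each pair-term into the noise parameter $\eps/(2\widetilde{d}_{\max})$ --- and that treating $\widetilde{d}_{\max}$, $X$, and $Z$ as publicly released quantities does not smuggle in additional privacy loss through the conditioning. Once that calibration is in place, summing $\eps/4+\eps/4+\eps/4+\eps/4=\eps$ over the four phases for every edge and invoking~\cref{def:ledp} completes the proof.
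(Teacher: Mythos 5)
Your decomposition into four $(\eps/4)$-phases (the ordering $Z$ from \kcorealgo{}, the Randomized Response round, the noisy max out-degree round, and the per-node triangle count) is exactly the paper's decomposition, and your arguments for the first three phases match the paper's: $(\eps/4)$-LEDP for $Z$ by~\cref{thm:k-core}, RR on the upper-triangular bits as an $(\eps/4)$-local randomizer, and the geometric mechanism on sensitivity-$1$ out-degree conditioned on the public $Z$. You also correctly identify that the differing edge charges only one endpoint in each phase, which the paper invokes before composing.

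The gap is precisely where you flag it: the sensitivity of the truncated triangle sum. You claim that changing one edge ``alters at most $\widetilde{d}_{\max}$ summands,'' but this undercounts. When the untruncated out-list already has $\widetilde{d}_{\max}$ entries, inserting $w$ both \emph{adds} $w$ and \emph{evicts} some other node $u$ from the clipped list $\overline{\adj'}$. The evicted $u$ removes up to $\widetilde{d}_{\max}$ pairs (those where $u$ appears as $j$ in the outer loop, and those where $u$ is hit as the partner in the inner loop), and the new $w$ adds up to $\widetilde{d}_{\max}$ pairs in the same way, for a total of $2\widetilde{d}_{\max}$ altered terms. This is exactly the sensitivity bound the paper derives by case analysis on $j=u$ vs.\ $j\neq u$, and it is what the $\mathrm{Lap}\!\left(\frac{\eps}{2\widetilde{d}_{\max}}\right)$ scale is calibrated against via~\cref{lem:laplace}. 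With your bound of $\widetilde{d}_{\max}$ the noise would be mis-calibrated by a factor of two relative to the claim; with the correct $2\widetilde{d}_{\max}$, the Laplace scale gives exactly $(\eps/4)$-DP, and composition over the four phases gives $\eps$-LEDP as in the paper. So the proposal is the right proof, but the one step you marked as ``to be verified'' is the one step that needs the explicit eviction argument, and the constant you guessed for it is wrong.
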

\eat{\begin{proof}
Our triangle counting algorithm releases three sets of information, in addition to the call of~\cref{alg:private-kcore-coordinator}. 
    First, each \eat{individual} node performs randomized response on 
    the upper triangular part of the adjacency matrix to determine a privacy-preserving
    set of edges. Then, each node releases its privacy-preserving out-degree.
    Finally, each node releases a privacy-preserving triangle count using
    its outgoing edges obtained from the low out-degree ordering.
    
    First, by~\cref{thm:k-core}, the low out-degree ordering is
    $(\eps/4)$-LEDP.
    Then, by~\cref{lem:rr-dp}, each adjacency list output from our randomized response 
    is a $(\eps/4)$-local randomizer. 
    Each node computes its noisy out-degree~(\cref{line:noise-out-degree}), where the sensitivity is $1$, conditioning on Z, for neighboring adjacency lists.
    Hence, by the privacy of the geometric mechanism (\cref{lem:sgd-private}),
    each node uses a $(\eps/4)$-local randomizer to output its noisy degree.

    Finally, we condition on $\widetilde{d}_{\max}$ to bound the sensitivity of our 
    triangle counts~(\cref{line:out-degree-bound}). To bound the sensitivity, we truncate the outgoing adjacency list (computed using $Z$) 
    of each node by $\widetilde{d}_{\max}$. We now compute the sensitivity of the truncated list 
    on neighboring adjacency lists $\adj$ and $\adj'$. Suppose, without loss of generality, that 
    $\adj'$ contains one more neighbor than $\adj$. Let $w$ be the neighbor that is in $\adj'$ but not in 
    $\adj$. Let $\overline{\adj}$ and $\overline{\adj'}$ be the truncated adjacency lists for $\adj$ and $\adj'$, respectively.
    To upper bound the sensitivity, let's suppose in the worst case, $\overline{\adj}$ contains one node, $u$,
    not in $\overline{\adj'}$ and $\overline{\adj'}$ contains $w$ (which is not in $\overline{\adj}$). \eat{For the sake of upper bounding the sensitivity, suppose that in the worst-case, $\overline{\adj}$ contains one node, $u$,
    not in $\overline{\adj'}$ and $\overline{\adj'}$ contains $w$ (which is not in $\overline{\adj}$).} 
    Then, assume first that $j = u$: the
    first for loop (\cref{line:out-degree-bound}) counts at most $\widetilde{d}_{\max}$
    additional triangles in $\overline{\adj}$ for $u$ (symmetrically counts at most $\widetilde{d}_{\max}$ additional
    triangles in $\overline{\adj'}$ for $w$). Suppose, without loss of generality,
    in the worst-case, $u$ returns $\widetilde{d}_{\max}$ additional triangles
    and $w$ returns no triangles.
    Suppose we set $j$ to every other node (i.e.\ $j \neq u$)
    in $\overline{\adj}$; then, $j$ encounters $u$ and not $w$ in the second for loop (\cref{line:out-inner-truncated-loop}) in 
    $\overline{\adj}$. Thus, in total, we encounter an additional
    $\widetilde{d}_{\max}$ triangles in $\overline{\adj}$ for all $j \neq u$.
    In total, we encounter $2\widetilde{d}_{\max}$ additional triangles in $\overline{\adj}$
    than in $\overline{\adj'}$ and the sensitivity of the number of counted triangles is $2 \widetilde{d}_{\max}$. 
    Using the privacy of the Laplace mechanism (\cref{lem:laplace}), we implement outputting the 
    local triangle counts via an $(\eps/4)$-local randomizer. 
    

    Finally, we make the observation that the edge that differs between neighboring input graphs, $G$ and $G'$, contributes
    to the out-degree of at most one node.
    Thus, using composition (\cref{thm:composition}) over all four $(\eps/4)$-local randomizers,
    we obtain an $\eps$-LEDP algorithm for triangle counting.
\end{proof}}

\begin{proof}
Our triangle counting algorithm calls~\cref{alg:private-kcore-coordinator}, which by~\cref{thm:k-core} is $(\eps/4)$-LEDP. Additionally, we release three sets of information, each of which we show to be $(\eps/4)$-LEDP.

First, each node applies Randomized Response to the upper triangular adjacency matrix to generate a privacy-preserving set of edges. By~\cite{DMNS06}, this adjacency list output is a $(\eps/4)$-local randomizer.

Second, each node releases its privacy-preserving out-degree. By~\cref{line:noise-out-degree}, the sensitivity of the out-degree (conditioning on $Z$) is 1 for neighboring adjacency lists. By the privacy of the geometric mechanism (\cite{BV18,CSS11,DMNS06,DNPR10}), each node uses a $(\eps/4)$-local randomizer to output its noisy degree.

Third, each node releases a privacy-preserving triangle count using its outgoing edges from the low out-degree ordering. To bound the sensitivity, we truncate the outgoing adjacency list (computed using $Z$) of each node by $\widetilde{d}_{\max}$. Given neighboring adjacency lists $\adj$ and $\adj'$, assume $\adj'$ contains one additional neighbor $w$ (without loss of generality). Let $\overline{\adj}$ and $\overline{\adj'}$ be the truncated adjacency lists. In the worst case, $\overline{\adj}$ contains a node $u$ not in $\overline{\adj'}$, while $\overline{\adj'}$ contains $w$ (not in $\overline{\adj}$). Let $j$ be defined as in~\cref{triangle-count-worker:j}. If $j = u$, the first for-loop (\cref{line:out-degree-bound}) counts at most $\widetilde{d}_{\max}$ additional triangles for $u$ (symmetrically for $w$). Assuming $u$ returns $\widetilde{d}_{\max}$ triangles and $w$ returns none, then for all other nodes $j \neq u$, the second for-loop (\cref{line:out-inner-truncated-loop}) encounters at most $\widetilde{d}_{\max}$ additional triangles. Thus, the total difference in counted triangles between $\overline{\adj}$ and $\overline{\adj'}$ is $2\widetilde{d}_{\max}$, giving a sensitivity of $2\widetilde{d}_{\max}$. By the privacy of the Laplace mechanism (\cite{DMNS06}), outputting local triangle counts is an $(\eps/4)$-local randomizer.

Since the differing edge between neighboring graphs $G$ and $G'$ affects at most one node’s out-degree, applying composition (\cite{DMNS06,DL09,DRV10}) over all four $(\eps/4)$-local randomizers results in an $\eps$-LEDP triangle counting algorithm.
\end{proof}

\paragraph{Approximation Guarantees}

\eat{First, since each node uses three additional rounds in addition to computing the low out-degree ordering, 
our algorithm runs in $O(\log(n) \log\left(D_{\max}\right))$ rounds.}
One of the major novelties in our proofs is via a new intricate use of the Law
of Total Expectation and Law of Total Variance for the events where 
the out-degrees of each node is upper bounded by the noisy maximum out-degree $\widetilde{d}_{\max}$
(which is, in turn, upper bounded by the degeneracy $\widetilde{O}(d)$).
Such use cases were unnecessarily in~\cite{IMC21communication,ELRS23} because they did not use 
oriented edges. \revision{We first upper bound the out-degree by $\widetilde{O}\left(\frac{d}{\eps}\right)$.
\begin{lemma}\label{lem:dmax}
    Given a graph where edges are oriented according to~\cref{alg:private-tcount-workerT}, the maximum out-degree of any node is at most 
    $O\left(d + \frac{\log(D_{\max})\log^2(n)}{\eps}\right)$.
\end{lemma}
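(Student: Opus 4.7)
The plan is to derive the bound as a direct corollary of the approximation guarantee for \kcorealgo{} (\cref{thm:kcore-approx}), using the definition of an approximate low out-degree ordering (\cref{def:low-outdegree}). The key observation is that the orientation used in \cref{alg:private-tcount-workerT}, namely orienting each edge $\{u,v\}$ from $u$ to $v$ whenever $Z[u] < Z[v]$, is precisely the edge orientation induced by the ordering $D$ output by \kcorealgo{}: nodes are sorted from smaller to larger levels, with ties broken by node ID. Thus the orientation is the canonical ``earlier-to-later'' orientation referenced in \cref{def:low-outdegree}.

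First, I would invoke \cref{thm:kcore-approx}, which states that, with high probability, \kcorealgo{} produces a $(2+\eta,\, O(\log(D_{\max})\log^2(n)/\eps))$-approximate low out-degree ordering. Plugging into \cref{def:low-outdegree}, the out-degree of every node under this orientation is at most
\[
(2+\eta)\cdot d + O\!\left(\tfrac{\log(D_{\max})\log^2(n)}{\eps}\right),
\]
where $d$ is the degeneracy of $G$.

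Since $\eta \in (0,1)$ is a fixed constant, the multiplicative factor $(2+\eta)$ is absorbed into the $O(\cdot)$, giving a bound of $O\!\left(d + \tfrac{\log(D_{\max})\log^2(n)}{\eps}\right)$ on the maximum out-degree, as claimed. The bound holds with high probability inherited from the high-probability guarantee of \cref{thm:kcore-approx}.

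The only potential obstacle is ensuring that the orientation in \cref{alg:private-tcount-workerT} truly coincides with the one analyzed in \cref{def:low-outdegree}; I would briefly remark that this holds because $Z$ is the total order $D$ returned by \cref{alg:private-kcore-coordinator}, and that ties broken by node ID do not affect the degree bound since each tie contributes at most one outgoing endpoint per edge. Beyond that verification, the lemma is an immediate consequence of previously established results.
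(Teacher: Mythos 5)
Your proof follows essentially the same route as the paper's, and the substance of the argument is correct: the orientation in \cref{alg:private-tcount-workerT} is the one induced by the total order $D$ returned by \kcorealgo{}, and the bound follows from the ordering's approximation guarantee (combined with the fact that the maximum core number equals the degeneracy).

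The one small gap is in the citation: as stated in the paper, \cref{thm:kcore-approx} formally asserts only the $(2+\eta, O(\log(D_{\max})\log^2(n)/\eps))$-approximation of \emph{core numbers}, not of the low out-degree ordering. The ordering guarantee (in the sense of \cref{def:low-outdegree}) does hold, but it comes from \cref{inv:degree-upper}: a node at level $r$ has at most $(1+\eta/5)^{\lfloor r/(2\log n)\rfloor} + O(\log(D_{\max})\log^2 n/\eps)$ neighbors at levels $\geq r$, and under the earlier-to-later orientation its out-neighbors are exactly those neighbors, which relates the out-degree to $(2+\eta)\kappa(v) + O(\cdot)$, and then to $d$ since $d = \max_v \kappa(v)$. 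This is precisely what the paper's own proof invokes (Invariant 1 together with a theorem in the supplementary material that packages this into the $(2+\eta,\cdot)$-ordering bound). So to be airtight you should either cite the invariant chain directly or cite the separate statement asserting the \emph{ordering} approximation rather than \cref{thm:kcore-approx}, which only speaks about core numbers. Beyond that, your argument is the same one the paper uses.
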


\begin{proof}
    By Invariant 1 and Theorem 3.4 (in the supplementary materials), the out-degree of any node $v$ is at most $(2 + \eta)k(v) + O\left(\frac{\log(D_{\max})\log^2(n)}{\eps}\right)$, with high probability, where $k(v)$ is the core number of $v$. The largest core number is equal to the degeneracy of the graph~\cite{seidman1983network}.
    Hence, the maximum out-degree of any node is upper bounded by $(2+\eta)d + O\left(\frac{\log(D_{\max})\log^2(n)}{\eps}\right) = 
    O\left(d + \frac{\log(D_{\max})\log^2(n)}{\eps}\right)$
\end{proof}

\begin{lemma}\label{orientedfourcyclesbound}
Given a graph where edges are oriented according to~\cref{alg:private-tcount-workerT},
the number of oriented 4-cycles, denoted by $\overrightarrow{C}_4$, where each cycle contains two non-adjacent nodes with outgoing edges to the remaining two nodes (see~\cref{fig:oriented-c4}), is at most $\widetilde{O}\left(\frac{n^2 d^2}{\eps^2}\right)$, with high probability, where $d$ is the degeneracy of the graph.
\end{lemma}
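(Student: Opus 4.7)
The plan is to prove the bound by combining the out-degree bound from Lemma~\ref{lem:dmax} with a straightforward counting argument over pairs of ``source'' nodes in the oriented 4-cycle.

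First, I would invoke Lemma~\ref{lem:dmax} (together with a union bound over the $n$ nodes, absorbed into the $\widetilde{O}$) to conclude that, with high probability, every node $v$ has out-degree
\[
|N^{+}(v)| \;\leq\; C\left(d + \frac{\log(D_{\max})\log^2 n}{\eps}\right) \;=\; \widetilde{O}\!\left(\frac{d}{\eps}\right),
\]
under the orientation produced by \kcorealgo{}, where $N^{+}(v)$ denotes the out-neighborhood of $v$. Condition on this event for the remainder of the argument.

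Next, I would parametrize an oriented 4-cycle by its pair of source nodes $\{a,b\}$ (the two non-adjacent vertices with outgoing edges) and its pair of sink nodes $\{c,d\}$ (their common out-neighbors), as depicted in \cref{fig:oriented-c4}. Then
\[
\overrightarrow{C}_4 \;\leq\; \sum_{\{a,b\} \subseteq V} \binom{|N^{+}(a) \cap N^{+}(b)|}{2},
\]
since each oriented 4-cycle is specified by choosing its source pair and an unordered pair of common out-neighbors. Bounding the intersection by $|N^{+}(a) \cap N^{+}(b)| \leq \min(|N^{+}(a)|, |N^{+}(b)|) \leq \widetilde{O}(d/\eps)$ and summing over at most $\binom{n}{2}$ source pairs yields
\[
\overrightarrow{C}_4 \;\leq\; \binom{n}{2} \cdot \widetilde{O}\!\left(\frac{d^2}{\eps^2}\right) \;=\; \widetilde{O}\!\left(\frac{n^2 d^2}{\eps^2}\right),
\]
which is the desired bound.

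The argument is essentially immediate once Lemma~\ref{lem:dmax} is in hand, so I do not expect a major obstacle. The only subtle points are (i) verifying that the parametrization $(\{a,b\},\{c,d\})$ does not \emph{undercount} oriented 4-cycles (it does not, since each such cycle uniquely determines its unordered source and sink pairs by the orientation pattern), and (ii) ensuring the high-probability event on which we condition holds simultaneously for all $n$ vertices, which follows from a union bound and only affects the hidden polylogarithmic factors absorbed into $\widetilde{O}$.
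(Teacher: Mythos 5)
Your proof is correct and takes essentially the same approach as the paper: both enumerate over the $O(n^2)$ unordered source pairs $\{a,b\}$, bound the number of common out-neighbors (the paper's $S_{w,x}$, your $N^{+}(a)\cap N^{+}(b)$) by the out-degree bound from Lemma~\ref{lem:dmax}, and then sum $\binom{\widetilde{O}(d/\eps)}{2}$ over all pairs. Your explicit mention of the union bound and the well-definedness of the parametrization makes the argument slightly more careful than the paper's, but it is the same proof.
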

\begin{proof}
There are $O(n^2)$ unordered pairs of vertices $\{w, x\}$ that may serve as the black nodes in an oriented 4-cycle (see~\cref{fig:oriented-c4}). For a fixed pair $\{w, x\}$, define $S_{w,x} = \{ u \in V : w \to u \text{ and } x \to u \}$ to be the set of vertices that are the outgoing endpoints of the
outgoing edges from both $w$ and $x$. 

Under the low out-degree orientation computed by~\cref{alg:private-tcount-workerT} and by~\cref{lem:dmax}, each vertex has at most $O\left(d + \frac{\log(D_{\max})\log^2(n)}{\eps}\right)$ out-neighbors and thus can have at most $O\left(\left(d + \frac{\log(D_{\max})\log^2(n)}{\eps}\right)^2\right) = O\left(d^2 + 
\frac{\log^2(D_{\max})\log^4(n)}{\eps^2}\right)$ 
outgoing red pairs. Each pair $\{u, v\} \subseteq S_{w,x}$ forms an oriented 4-cycle with $\{w, x\}$, contributing at most this many
cycles per black pair of vertices. Summing over all $O(n^2)$ black vertex pairs yields a total of at most $O\left(n^2 \left(d^2 + 
\frac{\log^2(D_{\max})\log^4(n)}{\eps^2}\right)\right) 
= \widetilde{O}\left(\frac{n^2 d^2}{\eps^2}\right)$ oriented 4-cycles.
\end{proof}
}

\begin{figure}
    \centering
    \includegraphics[width=0.2\linewidth]{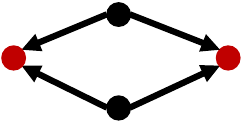}
    \caption{Oriented cycle of length $4$; two non-adjacent black nodes have edges oriented toward the remaining red nodes.}
    \label{fig:oriented-c4}
\end{figure}

\eat{\qq{below proof can go into the appendix}}

\begin{restatable}{lemma}{triangleexpectation}
\label{lem:triangle-expectation}
    In expectation, our algorithm returns a $2$-approximation of the true triangle count\eat{,i.e.,}: $\frac{(n^3-1) \cdot T}{n^3} \leq \expect[\widetilde{\Delta}] \leq T$.
\end{restatable}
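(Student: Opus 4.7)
The plan is to decompose $\expect[\widetilde{\Delta}]$ by conditioning on whether any node's outgoing list is truncated. Let $\mathcal{N}$ denote the event that $|\text{OutEdges}_v| \leq \widetilde{d}_{\max}$ for every node $v$, so that the double loop of \cref{alg:private-tcount-workerT} enumerates all pairs of outgoing neighbors. The key structural fact is that $Z$ is a total ordering of the vertices (with ties broken by node ID), so for any triangle $\{u,v,w\}$ exactly one endpoint, the one with smallest $Z$-value, has the other two in its outgoing list. Hence on $\mathcal{N}$ every triangle is iterated over exactly once across the union of all double loops, which is what avoids the double-counting that plagues un-oriented triangle enumeration.

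Next, I verify the per-pair estimator is unbiased for the edge indicator. Fixing outgoing neighbors $j,k$ of some node $v$, set $Y_{j,k} = \frac{X_{\{j,k\}}(e^\eps + 1) - 1}{e^\eps - 1}$. Since Randomized Response (\cref{line:rr}) outputs $X_{\{j,k\}} = 1$ with probability $e^\eps/(e^\eps+1)$ when $\{j,k\}\in E$ and with probability $1/(e^\eps+1)$ otherwise, a direct substitution yields $\expect[Y_{j,k}] = \mathbf{1}[\{j,k\}\in E]$. Because the Laplace noise added in \cref{line:laplace} has mean zero, and the RR draws are independent of $\mathcal{N}$ (which depends only on the out-degrees and the geometric noise from \cref{alg:private-maxOutdegree}), the tower property gives
\begin{align*}
\expect[\widetilde{\Delta} \mid \mathcal{N}] \;=\; \sum_v \sum_{\{j,k\}\subseteq \text{OutEdges}_v}\mathbf{1}[\{j,k\}\in E] \;=\; T,
\end{align*}
using the counted-once-per-triangle argument above. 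On the complementary event, each triangle is counted at most once (truncation only removes pairs from consideration), so $\expect[\widetilde{\Delta}\mid \neg\mathcal{N}] \in [0,T]$.

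For the truncation probability, \cref{line:max-D} sets $\widetilde{d}_{\max} = \max_v(|\text{OutEdges}_v| + B_v) + 12\log n/\eps$ with $B_v \sim \geom(\eps)$ drawn independently per node. By \cref{lem:noise-whp-bound} we have $|B_v| \leq c\log n/\eps$ for each $v$ with probability at least $1 - 1/n^c$; union-bounding over the $n$ vertices and choosing the constant $c$ large enough (e.g.\ $c \geq 4$) makes the additive margin $12\log n/\eps$ strictly dominate the most negative $B_v$ at the argmax, forcing $\widetilde{d}_{\max} \geq \max_v |\text{OutEdges}_v|$. Hence $\Pr[\neg\mathcal{N}] \leq 1/n^3$, and combining with the previous display,
\begin{align*}
\tfrac{n^3-1}{n^3}\,T \;\leq\; \Pr[\mathcal{N}]\,T + \Pr[\neg\mathcal{N}]\,\expect[\widetilde{\Delta}\mid\neg\mathcal{N}] \;=\; \expect[\widetilde{\Delta}] \;\leq\; T.
\end{align*}

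The step I expect to be the main obstacle is carefully maintaining independence when conditioning: the RR randomness must be independent of the out-degree geometric noise (true by construction, since they are drawn by separate worker mechanisms) so that unbiasedness of $Y_{j,k}$ survives conditioning on $\mathcal{N}$. A secondary subtlety is calibrating the safety margin $12\log n/\eps$ and the union-bound constant so that $\Pr[\neg\mathcal{N}]$ is exactly $\leq 1/n^3$ rather than a worse polynomial factor; this is what makes the clean $(n^3-1)/n^3$ lower bound possible. The rest reduces to bookkeeping on top of the low out-degree ordering's source-node structure.
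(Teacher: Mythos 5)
Your proof follows the same route as the paper's: condition on the good event (here $\mathcal{N}$, there $U$) that $\widetilde{d}_{\max}$ dominates every out-degree, show the randomized-response estimator is unbiased per queried pair, invoke the unique low-$Z$ source of each triangle to get $\expect[\widetilde{\Delta}\mid\mathcal{N}]=T$, bound $\expect[\widetilde{\Delta}\mid\neg\mathcal{N}]\in[0,T]$ by monotonicity under truncation, and absorb $\Pr[\neg\mathcal{N}]\le 1/n^3$ via the law of total expectation. The only stylistic differences are that you union-bound the geometric noise over all $n$ vertices, whereas the paper bounds only the noise at the argmax vertex (slightly tighter but the same conclusion), and you correctly identify the per-node noise as Laplace where the paper's prose says ``symmetric geometric''—in either case only mean-zero is used.
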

\begin{proof}
\label{proof:triangleexpectation}
    We first prove that, in expectation,
    $\expect\left[\frac{X_{j, k} \cdot (e^{\eps/4} + 1) - 1}{e^{\eps/4} - 1}\right] = \mathbbm{1}_{j, k}$
    where $\mathbbm{1}_{j, k}$ is the indicator variable for whether edge $\{j, k\}$ exists in the original private graph. 
    Since randomized response flips the bit indicating the existence of an edge between each pair of vertices with probability $\frac{1}{e^{\eps/4} + 1}$,
    we can simplify the expression to be:

    \begin{align*}
        \expect\left[\frac{X_{j, k} \cdot (e^{\eps/4} + 1) - 1}{e^{\eps/4} - 1}\right] &= \frac{\expect[X_{j, k}] \cdot (e^{\eps/4} + 1)
        - 1}{e^{\eps/4} - 1}.
    \end{align*}

    When $\mathbbm{1}_{j, k} = 1$, the probability that we obtain a bit of $1$ is $\frac{e^{\eps/4}}{e^{\eps/4} + 1}$ and
    our expression simplifies to
    \begin{align*}
        \frac{\expect[X_{j, k}] \cdot (e^{\eps/4} + 1)
        - 1}{e^{\eps/4} - 1} &= \frac{\frac{e^{\eps/4}}{e^{\eps/4} + 1} \cdot (e^{\eps/4} + 1)
        - 1}{e^{\eps/4} - 1}
        = 1.
    \end{align*}

    When $\mathbbm{1}_{j, k} = 0$, the probability that we obtain a bit of $1$ is $\frac{1}{e^{\eps/4} + 1}$
    then our expression simplifies to
    \begin{align*}
        \frac{\expect[X_{j, k}] \cdot (e^{\eps/4} + 1)
        - 1}{e^{\eps/4} - 1} &= \frac{\frac{1}{e^{\eps/4} + 1} \cdot (e^{\eps/4} + 1)
        - 1}{e^{\eps/4} - 1} 
        = 0.
    \end{align*}

    The expected value of the random variable obtained for each edge $e$ using our randomized response procedure is 
    equal to $\mathbbm{1}_e$.

    Now, we condition on the event that $\widetilde{d}_{\max}$ upper bounds the out-degree of every vertex after computing 
    and conditioning on $D$ (the ordering). Let $U$ be this event.
    
    Then, for each node, we compute all pairs of its outgoing neighbors and use the random variable indicating existence of
    the edge spanned by the endpoints to count every triangle composed of a pair of outgoing edges. Let $T_{v, j, k}$ 
    be the random variable representing the existence of the 
    queried triangle for node $v$ and outgoing edges $(v, j)$ and $(v, k)$. Then, $\expect[T_{v, j, k} \mid U] = \expect[X_{j, k}]$
    since both outgoing edges $(v, j)$ and $(v, k)$ exists. By what we showed above, it then follows that $\expect[T_{v, j, k}] = \mathbbm{1}_{j,k}$.

    We must account for the symmetric geometric noise. Since the expectation of the symmetric geometric noise is $0$ and we
    add together all of the values for each of the drawn noises, the expected total noise is $0$ by linearity of expectations. 
    Each triangle has a unique node that queries it since every triangle has a unique orientation of edges where two edges
    are outgoing from a vertex in the triangle. Hence, the expected sum of all queried triangles is $T$, conditioned on $U$, by linearity of 
    expectations.

    Finally, we remove the conditioning on the event $U$. Recall that $\expect[W] = \sum_{y} \expect[W\mid Y = y] \cdot \prob(Y = y)$. 
    Since truncation can only decrease the number of queried triangles (and hence the expectation), we upper and lower bound 
    $0 \leq \expect[\widetilde{\Delta} \mid \neg U] \leq T$. Hence we only need to figure out the probability $P(U)$ to upper and lower bound
    $\expect[\widetilde{\Delta}]$. The probability of event $U$ is the probability that $\max\left(\{d_v + \geom(\eps/4)\mid v \in V\}\right) 
    + \frac{\log(n)}{\eps} \geq \max(\{d_v \mid v \in V\})$ where $d_v$ is the out-degree of node $v$ given order $D$. This probability is lower bounded by 
    the probability that $\max(\{d_v \mid v \in V\}) + \geom(\eps/4) + \frac{c\log(n)}{\eps} \geq \max(\{d_v \mid v \in V\})$ for a fixed 
    constant $c \geq 1$; this 
    means we want to lower bound $\prob\left(\left(Q + \frac{c\log(n)}{\eps}\right) \geq 0\right)$ where $Q \sim \geom(\eps/4)$. By concentration of random variables chosen from the symmetrical geometric 
    distribution, we know that $\prob\left(\left(Q + \frac{c\log(n)}{\eps}\right) \geq 0\right) \geq 1 - \frac{1}{n^3}$ for large enough constant $c \geq 1$. Specifically, setting $c = 3$ gives us this bound. Hence, we can upper and lower bound $\frac{(n^3-1) \cdot T}{n^3} \leq \expect[\widetilde{\Delta}] \leq T$.
\end{proof}

To calculate the variance of the triangle count obtained from our algorithm, we 
use a quantity denoted by $\overrightarrow{C}_4$ which is the number of oriented
$4$-cycles where there exists two non-adjacent nodes with outgoing 
edges to the other two nodes in the cycle. See~\cref{fig:oriented-c4} for an example.
\eat{To calculate the variance of the triangle count obtained from our algorithm, we 
use a quantity denoted by $\overrightarrow{C}_4$. The number of oriented cycles of length $4$
is denoted by $\overrightarrow{C}_4$ and the quantity is equal to the number of
$4$-cycles where there exists two non-adjacent nodes with outgoing 
edges to the other two nodes in the cycle. See~\cref{fig:oriented-c4} for an example.} The number of oriented cycles of 
length $4$ is upper bounded by $n^2 d^2$ where $d$ is the degeneracy (maximum core number) in the graph, resulting in significant gains in utility 
over previous results which use the number of total (not oriented) $4$-cycles which could be as 
large as $\Omega(n^4)$.

\begin{restatable}{lemma}{trianglecountingvariance}
    \label{lem:triangle-counting-variance}
    Our triangle counting algorithm returns a count with variance $O\left(\frac{n d^2 \log^6 n}{\eps^4} + \overrightarrow{C}_4\right)$ where 
    $d$ is the degeneracy (max core number) of the graph, $\overrightarrow{C}_4$ is the number of oriented cycles of length $4$,
    and $T$ is the number of (true) triangles in the private graph.
\end{restatable}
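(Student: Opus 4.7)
The plan is to apply the Law of Total Variance conditioning on the event $U$ from the proof of Lemma~\ref{lem:triangle-expectation} (i.e., that $\widetilde{d}_{\max}$ correctly upper-bounds the out-degree of every vertex under the orientation $Z$) together with the ordering $Z$ itself. By the whp bound on geometric noise in Lemma~\ref{lem:noise-whp-bound}, $\prob(\neg U) \leq n^{-c}$. Writing $\text{Var}(\widetilde{\Delta}) = \expect[\text{Var}(\widetilde{\Delta}\mid U,Z)] + \text{Var}[\expect(\widetilde{\Delta}\mid U,Z)]$, the outer term is bounded crudely by $\text{poly}(n)\cdot\prob(\neg U) = o(1)$ since $\widetilde{\Delta}$ is deterministically polynomial in $n$. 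Thus it suffices to bound the conditional variance under $U$ and $Z$, where the useful structural fact holds that no out-neighborhood is truncated by the $\widetilde{d}_{\max}$ cap in Lines~\ref{line:out-degree-bound}--\ref{line:out-inner-truncated-loop} of~\cref{alg:private-tcount-workerT}.

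I would then decompose $\widetilde{\Delta} = \widetilde{\Delta}_{RR} + \sum_v R_v$ where $\widetilde{\Delta}_{RR} = \sum_v \sum_{\{j,k\}\in\binom{\text{Out}(v)}{2}} Y_{j,k}$ and $Y_{j,k} = (X_{j,k}(e^{\eps/4}+1)-1)/(e^{\eps/4}-1)$. The Laplace noises $R_v$ are independent of each other and of the RR bits, and each has variance $O(\widetilde{d}_{\max}^2/\eps^2)$; summed over $n$ vertices this contributes $O(n\widetilde{d}_{\max}^2/\eps^2)$. Plugging in $\widetilde{d}_{\max} = O(d + \log(D_{\max})\log^2 n/\eps)$ from~\cref{lem:dmax} and using $\eps\leq 1$, this term fits into $O(nd^2\log^6 n/\eps^4)$.

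The core of the proof is the RR term, which is handled via a diagonal-plus-off-diagonal decomposition of the variance. Since each $Y_{j,k}$ is a deterministic function of the single bit $X_{j,k}$ produced by independent Randomized Response, $\text{Cov}(Y_{j,k},Y_{j',k'}) = 0$ unless $\{j,k\}=\{j',k'\}$, and $\text{Var}(Y_{j,k}) = e^{\eps/4}/(e^{\eps/4}-1)^2 = O(1/\eps^2)$. For the diagonal, I use $\sum_v |\text{Out}(v)|^2 \leq \widetilde{d}_{\max}\cdot\sum_v |\text{Out}(v)| = O(n\widetilde{d}_{\max}^2)$ (the last inequality following from the orientation bound in~\cref{lem:dmax}), giving a total diagonal contribution of $O(n\widetilde{d}_{\max}^2/\eps^2)$, again absorbed into $O(nd^2\log^6 n/\eps^4)$. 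For the cross terms, the condition $\{j,k\}=\{j',k'\}$ with $v\neq v'$ forces $v\to j,\ v\to k,\ v'\to j,\ v'\to k$, which is exactly an oriented 4-cycle in the sense of~\cref{fig:oriented-c4} with black vertices $\{v,v'\}$ and red vertices $\{j,k\}$. Summing $\text{Var}(Y_{j,k})$ over these configurations yields at most $O(\overrightarrow{C}_4)$ (the $1/\eps^2$ factor is absorbed into the $O(\cdot)$ notation together with the constants).

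The main obstacle I anticipate is twofold. First, the combinatorial step of equating off-diagonal covariance terms with oriented 4-cycles must be done without double-counting, which requires carefully parameterizing each contributing tuple as an ordered or unordered pair of parents $\{v,v'\}$ paired with an unordered pair of children $\{j,k\}$ and matching this to the definition in~\cref{fig:oriented-c4}; conditioning on $U$ is essential here so that no pair is dropped by truncation. Second, justifying the total-variance contribution from $\neg U$ is negligible requires a uniform deterministic bound on $|\widetilde{\Delta}|$, which I would obtain by noting that the truncation in Lines~\ref{line:out-degree-bound}--\ref{line:out-inner-truncated-loop} guarantees at most $n\widetilde{d}_{\max}^2$ summands each of magnitude $O(1/\eps)$, so the outer variance is bounded by $\text{poly}(n)\cdot n^{-c} = o(1)$ for large enough $c$.
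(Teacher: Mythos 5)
Your proposal is essentially the same proof as the paper's: you condition on the event $U$ (that $\widetilde d_{\max}$ dominates every out-degree), decompose the variance into the Laplace contribution plus the RR contribution, split the RR term into the diagonal part $O(n\widetilde d_{\max}^2/\eps^2)$ and the off-diagonal covariance term which you correctly identify with oriented $4$-cycles $\overrightarrow{C}_4$, plug in the bound $\widetilde d_{\max}=O(d+\log(D_{\max})\log^2 n/\eps)$, and finally remove the conditioning via the Law of Total Variance. One small wrinkle in your cleanup step: you assert that $\widetilde\Delta$ is \emph{deterministically} $\mathrm{poly}(n)$-bounded to argue the $\neg U$ contribution to $\mathbb{E}[\mathrm{Var}(\widetilde\Delta\mid\cdot)]$ is $o(1)$, but the Laplace noise $R_v$ is unbounded, so this is not literally true — you should either observe (as the paper does) that truncation only removes $X_{j,k}$ summands so the conditional RR-variance under $\neg U$ is no larger than under $U$ (while the Laplace contribution is the same either way), or else truncate $R_v$ at its whp bound before invoking the crude $\mathrm{poly}(n)\cdot n^{-c}$ estimate.
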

\begin{proof}
\label{proof:trianglecountingvariance}
    As before, we first calculate the variance conditioned on the event $U$ that $\widetilde{d}_{\max}$ upper
bounds the out-degrees of every node. Then, we use the law of total variance to remove this condition.

For notation simplicity, we omit the condition on $U$ from the right hand sides of the below equations.
First, the variance of $\mathrm{Var}[X_{\{i, j\}}] = \frac{e^{\eps}}{(e^{\eps} + 1)^2}$, 
since $X_{\{i, j\}}$ is a Bernoulli variable.
Then, let $\hat{T}$ be our returned triangle count. 
The variance 
{\scriptsize
\begin{align*}
    &\var{\hat{T} \mid U}\\ &= \var{\sum_{v \in [n]} \left(\sum_{j, k \in Out(v)} \left(\frac{X_{j, k} \cdot (e^{\eps/4} + 1) - 1}{e^{\eps/4} - 1}\right) + \lap\left(\frac{\eps}{2 \widetilde{d}_{\max}}\right)\right)}.
\end{align*}}

Recall $\widetilde{d}_{\max}$ is our noisy maximum out-degree of any vertex. 
Then, our variance simplifies to
{\scriptsize
\begin{align*}
    \left(\frac{e^{\eps/4} + 1}{e^{\eps/4} - 1}\right)^2 \cdot \var{\sum_{v \in [n]} \sum_{j, k \in Out(v)} X_{j, k}} + \var{\sum_{v \in [n]} 
    \lap\left(\frac{\eps}{2\widetilde{d}_{\max}}\right)}.
\end{align*}}

By the variance of the Laplace distribution, we can compute 
{\small
\begin{align*}
    \var{\sum_{v \in [n]} 
    \lap\left(\frac{\eps}{2\widetilde{d}_{\max}}\right)} \leq n \cdot \frac{8\widetilde{d}_{\max}^2}{\eps^2}.
\end{align*}}

Now, it remains to compute $\var{\sum_{v \in [n]} \sum_{j, k \in Out(v)} X_{j, k}}$. The covariance is $0$ if two queried pairs
do not query the same $X_{j, k}$. The covariance is non-zero only in the case of queries which overlap in $X_{i,j}$.
This occurs only in the case of an oriented $4$-cycle where $X_{i, j}$ is shared between two queries. Let $P_2$ be the 
set of all pairs of such queries that share $X_{j, k}$. In this case, the covariance is upper bounded by $\expect[X_{j, k}^2]$.
Hence, we can simplify our expression to be 
{\small
\begin{align}
    &\var{\sum_{v \in [n]} \sum_{j, k \in Out(v)} X_{j, k}}\label{eq:var-1}\\
    &\leq \sum_{v \in [n]}\sum_{j, k \in Out(v)} \var{X_{j, k}} + 
    2 \cdot \sum_{T_{v, j, k}, T_{w, j, k} \in P_2} \left(\expect[X_{j, k}^2]\right)\label{eq:var-2} \\
    &\leq n\cdot \widetilde{d}_{\max}^2 \cdot \frac{e^{\eps/4}}{(e^{\eps/4} + 1)^2} 
    + \overrightarrow{C}_4 \cdot \left(\frac{e^{\eps/4}}{e^{\eps/4} + 1}\right),\label{eq:var-3}
\end{align}}
where $\overrightarrow{C}_4$ indicates the number of directed cycle of length $4$ (\cref{fig:oriented-c4}). 

Hence, our total variance is upper bounded by
{\small
\begin{align*}
\frac{8n\widetilde{d}_{\max}^2}{\eps^2} + n\cdot \widetilde{d}_{\max}^2 \cdot \frac{e^{\eps/4}}{(e^{\eps/4} + 1)^2} + \overrightarrow{C}_4 \cdot \frac{e^{\eps/4}}{e^{\eps/4} + 1}.
\end{align*}}

Finally, by the guarantees of our $k$-core decomposition algorithm, the maximum out-degree $d_{\max}$ is bounded by 
$d_{\max} \leq (2+\eta) \cdot d + O\left(\frac{\log(D_{\max})\log^2 n}{\eps}\right)$, with
high probability, where $d$ is the degeneracy of the input graph and $D_{\max} \leq n$ is the maximum degree in the graph.
Finally, we also know that the noise generated for $\widetilde{d}_{\max}$ is upper bounded by 
$O\left(\frac{\log n}{\eps}\right)$. Thus, $\widetilde{d}_{\max} \leq (2 + \eta) \cdot d + O\left(\frac{\log^3 n}{\eps}\right)$ Hence,
our variance is upper bounded by $O\left(\frac{n d^2 \log^6 n}{\eps^4} + \overrightarrow{C}_4\right)$.

We now remove our condition on $U$ and use the law of total variance to compute our unconditional variance. Recall the law of total 
variance states $\var{Y} = \expect[\var{Y \mid X}] + \var{\expect[Y \mid X]}$. 
We computed $\var{\hat{T}\mid U}$ above. We now compute $\var{\hat{T} \mid \neg U}$. The main difference 
between when the event $U$ occurs and \eat{when the event $U$} does not occur is that some adjacency lists of the outgoing 
neighbors will be truncated. Consequently, we sum over fewer $X_{j, k}$ variables in~\cref{eq:var-1,eq:var-2}.
Thus, the variance when $U$ does not occur is upper bounded by the variance when $U$ does occur. 

Now we calculate $\var{\expect[\hat{T} \mid U]} = \expect[\expect[\hat{T} \mid U]^2] - \expect[\expect[\hat{T} \mid U]]^2$.
By our calculation in the proof of~\cref{lem:triangle-expectation}, we can calculate 
$\expect[\hat{T} \mid U] = T$ and let $\frac{(n^3 - 1) \cdot T}{n^3} \leq Y = \expect[\hat{T} \mid \neg U] \leq T$. Then, 
{\small
\begin{align*}
    \expect[\expect[\hat{T} \mid U]^2] - \expect[\expect[\hat{T} \mid U]]^2 &= \left(\frac{T^2}{2} + \frac{Y^2}{2}\right) - 
    \left(\frac{T}{2} + \frac{Y}{2}\right)^2\\
    &= \frac{T^2 + Y^2}{2}- \frac{T^2 + 2TY + Y^2}{4}\\
    &= \frac{T^2 + Y^2 - 2TY}{4}\\
    &= \left(\frac{T-Y}{2}\right)^2\\
    &\leq \left(\frac{T-\frac{(n^3 - 1) \cdot T}{n^3}}{2}\right)^2\\
    &\leq \left(\frac{T}{2n^3}\right)^2\\
    &\leq \frac{1}{4}. 
\end{align*}}

Thus, our final variance is $O\left(\frac{n d^2 \log^6 n}{\eps^4} + \overrightarrow{C}_4\right)$.
\end{proof}


\begin{restatable}{theorem}{trianglecounting}
\label{thm:triangle-counting}
     With high constant probability, our triangle counting algorithm returns a $\left(1+\eta, O\left(\frac{\sqrt{n}d\log^3 n}{\eps^{2}} + \sqrt{\overrightarrow{C}_4}\right)\right)$-approximation of the true triangle count.
\end{restatable}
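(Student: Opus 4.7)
The plan is to combine \Cref{lem:triangle-expectation} and \Cref{lem:triangle-counting-variance} via Chebyshev's inequality to convert the variance bound into a high-probability additive-error bound. Since \Cref{lem:triangle-expectation} establishes that $\expect[\widetilde{\Delta}]$ is within $T/n^3 \leq 1$ of the true count $T$, the bulk of the approximation error comes from the deviation of $\widetilde{\Delta}$ from its own expectation, which Chebyshev controls in terms of the standard deviation.

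First, I would apply Chebyshev's inequality: for any constant $c_0 > 0$,
\[
\prob\bigl[\,\lvert \widetilde{\Delta} - \expect[\widetilde{\Delta}] \rvert \geq c_0 \sqrt{\var{\widetilde{\Delta}}}\,\bigr] \;\leq\; \frac{1}{c_0^2},
\]
so choosing $c_0$ to be a sufficiently large constant gives the desired ``high constant probability'' guarantee. Second, I would substitute the variance bound of \Cref{lem:triangle-counting-variance} and use the subadditivity inequality $\sqrt{a+b} \leq \sqrt{a} + \sqrt{b}$ to obtain
\[
\sqrt{\var{\widetilde{\Delta}}} \;=\; O\!\left(\sqrt{\frac{n d^2 \log^6 n}{\eps^4}} + \sqrt{\overrightarrow{C}_4}\right) \;=\; O\!\left(\frac{\sqrt{n}\, d \log^3 n}{\eps^2} + \sqrt{\overrightarrow{C}_4}\right).
\]
Third, by \Cref{lem:triangle-expectation}, $\lvert \expect[\widetilde{\Delta}] - T \rvert \leq T/n^3 \leq 1$, so the triangle inequality gives
\[
\lvert \widetilde{\Delta} - T \rvert \;\leq\; \lvert \widetilde{\Delta} - \expect[\widetilde{\Delta}]\rvert + \lvert \expect[\widetilde{\Delta}] - T \rvert \;=\; O\!\left(\frac{\sqrt{n}\, d \log^3 n}{\eps^2} + \sqrt{\overrightarrow{C}_4}\right),
\]
since the additive $1$ term is dominated by either summand. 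This yields a $(1, O(\cdot))$-approximation, which in particular implies the stated $(1+\eta, O(\cdot))$-approximation.

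The one subtlety, which I would handle carefully but which does not require substantial new work, is the conditioning on the event $U$ (that $\widetilde{d}_{\max}$ upper-bounds every true out-degree) already handled inside \Cref{lem:triangle-expectation} and \Cref{lem:triangle-counting-variance} via the law of total expectation/variance. Because both the expectation and variance bounds are already stated unconditionally, applying Chebyshev is routine at this stage. The main conceptual obstacle, then, is not in this final theorem but rather upstream: it lies in establishing the variance bound with the oriented-$4$-cycle term $\overrightarrow{C}_4$ (which is what yields the degeneracy-dependent savings over $C_4$). Once that bound is in hand, the theorem follows by essentially a one-line application of Chebyshev and the triangle inequality.
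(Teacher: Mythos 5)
Your proof is correct and follows essentially the same route as the paper's: apply Chebyshev's inequality using the variance bound from \Cref{lem:triangle-counting-variance}, then combine with the expectation bound from \Cref{lem:triangle-expectation} to convert to an approximation guarantee. In fact your argument is slightly tighter than what the paper states---by observing that the bias $|\expect[\widetilde{\Delta}] - T| \leq T/n^3 \leq 1$ is dominated by the $\Omega(\sqrt{n})$ additive term you obtain a $(1, O(\cdot))$-approximation (which trivially implies $(1+\eta, O(\cdot))$), whereas the paper simply attributes the $(1+\eta)$ multiplicative factor to the $(1 - 1/n^3)$-approximation of the expectation without noting that the bias could be folded into the additive error entirely.
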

\begin{proof}
     We use Chebyshev's inequality with the standard deviation calculated from~\cref{lem:triangle-counting-variance}. The $(1+\eta)$-approximation
     comes from our $\left(1 - \frac{1}{n^3}\right)$-approximation of the expectation.
\end{proof}

See~\cref{table:kcore} for the $d$ values of real-world graphs; when $d = O(1)$ is constant, as is the case for real-world graphs,
then $\overrightarrow{C}_4 = O(n^2)$ and $T = O(n)$. We improve previous theoretical additive errors from $O\left(\frac{\sqrt{C_4}}{\eps} + 
\frac{n^{3/2}}{\eps^2}\right)$ \cite{IMC21communication} to $O\left(\sqrt{\overrightarrow{C}_4} + \frac{\sqrt{n}\log^3 n}{\eps^2}\right)$,
an improvement of at least a $\Omega(\sqrt{n})$ factor, translating to massive practical gains.

\section{Experimental Evaluation}\label{sec:experiments}
\eat{In this section, we evaluate the performance and accuracy of our algorithms for $k$-core decomposition (\kcorealgo{}) and triangle counting (\tcountalgo{}) in 
the distributed (LEDP-DS) computation framework 
to benchmark the runtime and accuracy. 
For $k$-core decomposition we compare our algorithm
with the LEDP \textbf{$k$-Core} decomposition algorithm of~\cite{DLRSSY22}, which we implement.
For triangle counting, we also compare our algorithm with \textbf{ARROneNS}$_\Delta\left(\textbf{Lap}\right)$~\cite{IMC21communication} and \textbf{GroupRR}~\cite{hillebrand2023communication}. Our results show that our algorithms effectively scale to graphs with more than a billion edges, while consistently maintaining approximations significantly better than our theoretical bounds and previous works.
For $k$-core decomposition, our algorithm reduces the number of rounds by nearly \defn{two orders of magnitude} over~\cite{DLRSSY22} across all graphs, while improving the accuracy. For triangle counting, our approach demonstrates an accuracy improvement of nearly \defn{six orders of magnitude} on the multiplicative approximation over prior methods and achieve speed-ups for large graphs.}

\eat{In this section, we evaluate the performance and accuracy of our $k$-core decomposition (\kcorealgo{}) and triangle counting (\tcountalgo{}) algorithms in a distributed simulation, benchmarking runtime and accuracy. For $k$-core decomposition, we compare our algorithm with the LEDP \textbf{$k$-Core} decomposition algorithm of~\cite{DLRSSY22}, which we implement. For triangle counting, we also compare against \textbf{ARROneNS}$_\Delta\left(\textbf{Lap}\right)$~\cite{IMC21communication} and \textbf{GroupRR}~\cite{hillebrand2023communication}. Our results show that our algorithms scale to graphs with over a billion edges while consistently achieving significantly better approximations than theoretical bounds and prior work.  

For $k$-core decomposition, our algorithm reduces the number of rounds by nearly \defn{two orders of magnitude} over~\cite{DLRSSY22} across all graphs while improving accuracy. For triangle counting, our approach improves multiplicative approximation accuracy by nearly \defn{six orders of magnitude} over previous methods and achieves speed-ups on large graphs.}

\revision{

In this section, we evaluate the performance and accuracy of our $k$-core decomposition (\kcorealgo{}) and triangle counting (\tcountalgo{}) algorithms under Local Edge Differential Privacy (LEDP) using a distributed simulation. We benchmark against prior LEDP algorithms, and to highlight the limitations of Randomized Response (RR), we additionally implement RR-based baselines for both problems. All RR baselines are evaluated purely in terms of accuracy, as they are centralized algorithms and not directly comparable in runtime to our distributed setting. Furthermore, RR introduces significant computational overhead due to increased graph density from edge perturbation. \textbf{Consequently, we omit RR results on many large graphs: $k$-core baselines fail due to out-of-memory (OOM) errors, while triangle counting exceeds timeout limits. These failures occur independently, underscoring the instability and inefficiency of naive RR methods on large-scale graphs.}

\noindent\textbf{$k$-Core Baselines.}
We compare \kcorealgo{} against the LEDP $k$-core decomposition algorithm of~\cite{DLRSSY22} (denoted $k$-Core), which we implement. Additionally, we construct an RR-based baseline (denoted $k$-CoreRR) that runs the standard peeling algorithm on the RR-perturbed graph and applies a scaling factor to correct the induced degrees, ensuring unbiased estimates. Our method achieves significantly better accuracy, reducing approximation error by up to \textbf{two orders of magnitude} over $k$-CoreRR and outperforming $k$-Core~\cite{DLRSSY22} in both accuracy and efficiency—reducing the number of rounds by nearly \textbf{two orders of magnitude} across all graphs. $k$-CoreRR consistently fails on larger datasets due to memory exhaustion caused by increased graph density.

\noindent\textbf{Triangle Counting Baselines.}
We compare \tcountalgo{} against two LEDP triangle counting algorithms: \textbf{ARROneNS}$_\Delta\left(\textbf{Lap}\right)$\cite{IMC21communication} and \textbf{GroupRR}\cite{hillebrand2023communication}. For the RR baseline (denoted as TCountRR), we follow the approach from~\cite{edenICALPS}. Our algorithm achieves up to \textbf{six orders of magnitude} improvement in multiplicative accuracy while providing substantial speedups in larger graphs.

\eat{\noindent\textbf{RR Baselines.} To demonstrate the limitations of Randomized Response (RR), we implement naive RR-based baselines for both problems. For triangle counting, we follow the algorithm from~\cite{edenICALPS}, applying it to the perturbed adjacency matrix. For $k$-core decomposition, we run the standard peeling algorithm on the noisy graph and apply a derived scaling factor to the induced degrees to correct for RR noise. While \cite{edenICALPS} provides the scaling for triangle counting, we extend this approach to $k$-core estimation.
Our comparisons focus solely on accuracy, as the RR baselines are centralized and not comparable in runtime to our distributed simulation. Moreover, RR perturbation drastically inflates graph density, resulting in significant resource overhead. \textbf{As a result, we omit RR results for many large graphs due to either out-of-memory (OOM) errors for $k$-core or exceeding the timeout for triangle counting. Notably, these occur independently - there are several graphs where triangle counting fails due to timeout but $k$-core results are still obtained. This highlights the computational instability of naive RR methods at scale.} Across all feasible datasets, our methods deliver superior accuracy—improving approximation by up to \textbf{two orders of magnitude for $k$-core} and \textbf{six for triangle counting}.}

\eat{\textbf{TODO:} Write why we do better and how TCountRR is comparable for denser graphs.} 
}
\begin{table}[t]
\caption{Graph size, maximum core number, and number of triangles.} \label{table:kcore}
\begin{center}
\small
\resizebox{\columnwidth}{!}{%
\begin{tabular}[!t]{l|r|r|r|r|r}
\toprule
{Graph Name} & Num. Vertices & Num. Edges & Max. Degree & Max. Core Num. ($d$) & Num. Triangles\\
\midrule
{\emph{ email-eu-core}}         & 986       &1,329,336 & 345 & 34 & 105,461 \\
{\emph{ wiki  }  }    		 & 7115  &100,761 & 1065 & 35 & 608,387 \\
{\emph{ enron        }  }    & 36,692        &183,830 & 1,383& 43 & 727,044 \\
{\emph{ brightkite      }  }    	 & 58,228      &214,078 & 1,134& 52 & 494,728\\
{\emph{ ego-twitter}}         & 81,306       &1,342,296 & 3,383& 96 &  13,082,506\\
{\emph{ gplus}}         & 107,614       &12,238,285 & 20,127& 752 & 1,073,677,742 \\
{\emph{ stanford}}         & 281,903       &1,992,635 & 38,625 & 71 & 11,329,473 \\
{\emph{ dblp  }  }           & 317,080          &1,049,866 & 343   & 113 & 2,224,385\\
{\emph{ brain        }  }        & 784,262          &267,844,669 & 21,743 & 1200 & --\\
{\emph{ orkut        }  }    & 3,072,441        &117,185,083 & 33,313 & 253 & --\\
{\emph{ livejournal  }  }    		 & 4,846,609  &42,851,237 & 20,333 & 372 & --\\
{\emph{ twitter      }  }    	 & 41,652,230       &1,202,513,046 & 2,997,487 & 2488 & --\\
{\emph{ friendster}}         & 65,608,366       &1,806,067,135 & 5214 & 304 & --\\
\end{tabular}
}
\end{center}
\end{table}

\begin{figure*}[!ht]
     \centering
     
     \begin{subfigure}[c]{0.33\textwidth}
     \centering
         \includegraphics[width =\textwidth]{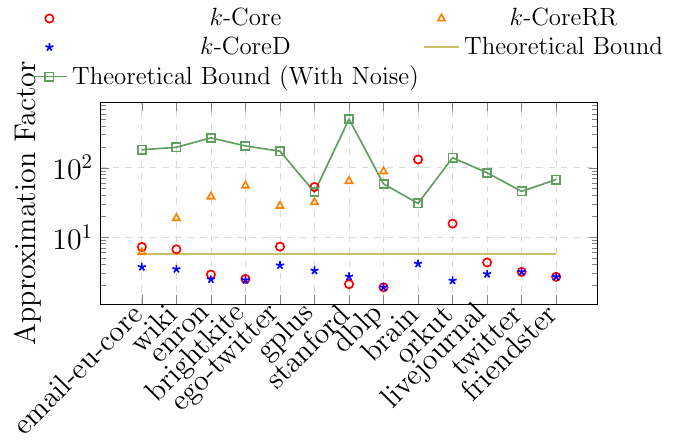}
         \caption{\revision{Average Approximation Factor}}
         \label{fig:kcore_avg_approx}
     \end{subfigure}
     \hfill
     \begin{subfigure}[c]{0.33\textwidth}
         \centering
         \includegraphics[width =\textwidth]{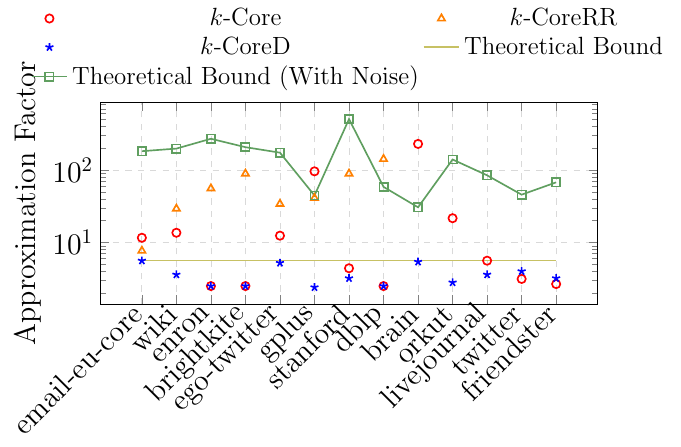}
         \caption{\revision{$80^{th}$ Percentile Approximation Factor}}
         \label{fig:kcore_80_approx}
     \end{subfigure}
     \hfill
     \begin{subfigure}[c]{0.33\textwidth}
         \centering
         \includegraphics[width =\textwidth]{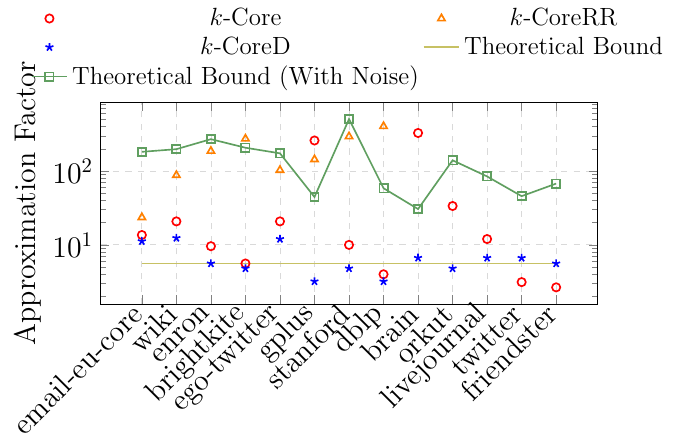}
         \caption{\revision{$95^{th}$ Percentile Approximation Factor}}
         \label{fig:kcore_95_approx}
     \end{subfigure}

        \caption{$k$-core Decomposition Results.}
        \label{fig:kcore_all}
\end{figure*}

\begin{figure}[t]
     \centering
     \includegraphics[width =0.65\linewidth]{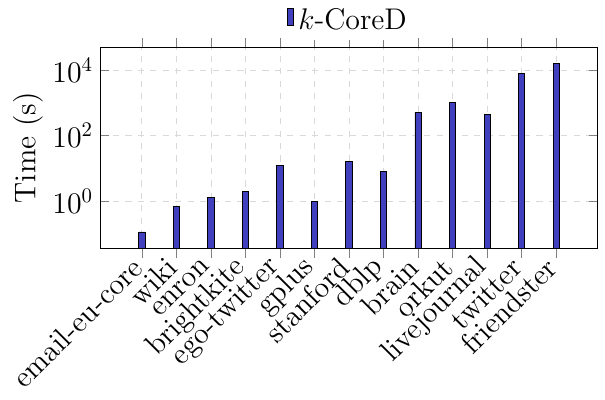}
     \caption{$k$-Core Decomposition Avg. Response Time}
     \label{fig:kcore_runtime}
\end{figure}

\begin{figure}[t]
\centering
     \includegraphics[width=0.8\linewidth]{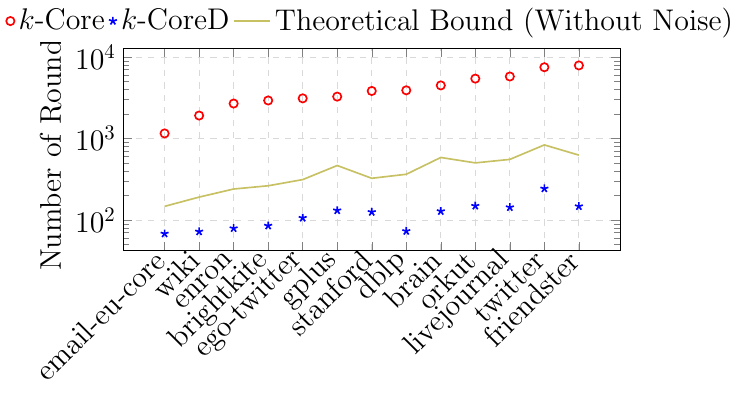}
        \caption{$k$-Core Decomposition Number of Rounds}
        \label{fig:kcore_rounds}
\end{figure}

     

\begin{figure*}[!ht]
     \centering
     
     \begin{subfigure}[c]{0.16\textwidth}
         \centering
         \includegraphics[width =\textwidth]{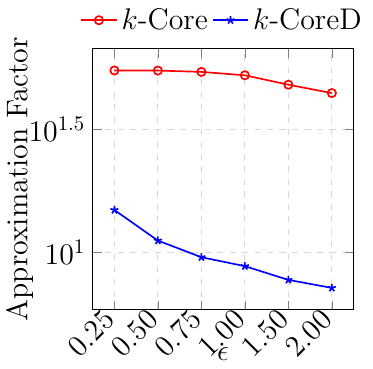}
         \caption{gplus}
         \label{fig:kcore_avg_approx_gplus_eps}
     \end{subfigure}
     \hfill
     \begin{subfigure}[c]{0.16\textwidth}
         \centering
         \includegraphics[width =\textwidth]{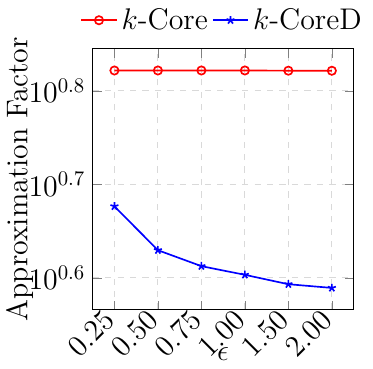}
         \caption{wiki}
         \label{fig:kcore_avg_approx_wiki_eps}
     \end{subfigure}
     \hfill
     \begin{subfigure}[c]{0.16\textwidth}
         \centering
         \includegraphics[width =\textwidth]{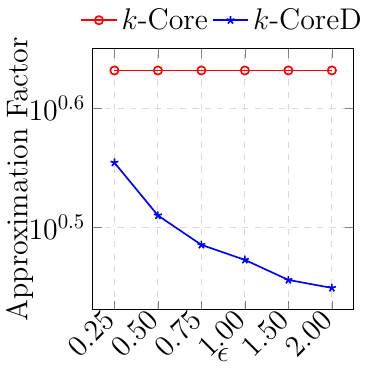}
         \caption{livejournal}
         \label{fig:kcore_avg_approx_liverjournal_eps}
     \end{subfigure}
        \hfill
        \begin{subfigure}[c]{0.16\textwidth}
         \centering
         \includegraphics[width =\textwidth]{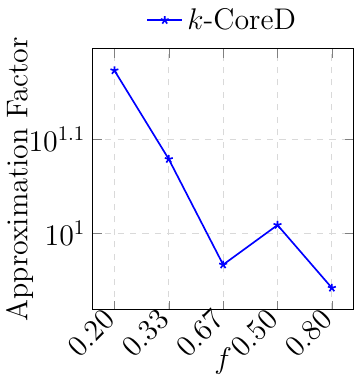}
         \caption{gplus}
         \label{fig:kcore_avg_approx_gplus_fraction}
     \end{subfigure}
     \hfill
     \begin{subfigure}[c]{0.16\textwidth}
         \centering
         \includegraphics[width =\textwidth]{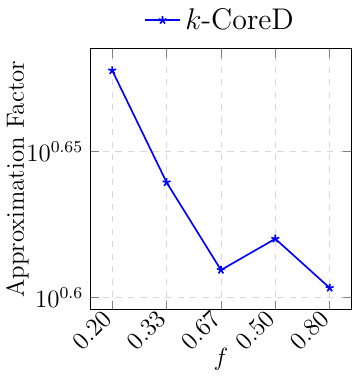}
         \caption{wiki}
         \label{fig:kcore_avg_approx_wiki_fraction}
     \end{subfigure}
     \hfill
     \begin{subfigure}[c]{0.16\textwidth}
         \centering
         \includegraphics[width =\textwidth]{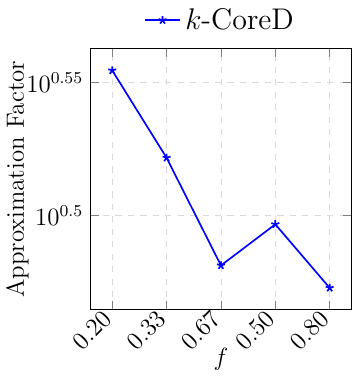}
         \caption{livejournal}
         \label{fig:kcore_avg_approx_liverjournal_fraction}
     \end{subfigure}

        \caption{Avg. approximation factor for $k$-Core decomposition vs. epsilon ($\eps$) and split fraction ($\privatefraction$).}
        \label{fig:kcore_ablation}
\end{figure*}

\eat{\noindent\textbf{Experimental Setup} To evaluate our algorithms in a distributed simulation environment, we partition the input graph across $M$ \textbf{worker processors} and a \textbf{single coordinator processor}. Each worker is responsible for a subset of the nodes and their complete adjacency lists and run LEDP algorithms locally while preserving privacy. The workers communicate their LEDP outputs to the coordinator, which aggregates the data and publishes new public information to all workers. This process is executed over multiple synchronous rounds of communication, simulating a real-world distributed environment. Unless otherwise specified, our experimental setup consists of $\mathbf{80}$ \textbf{worker processors} and \textbf{a single coordinator}. Each reported value is the mean across $\mathbf{5}$ \textbf{runs}, with a \textbf{4-hour} wall-clock limit per run.}
\revision{
\noindent\textbf{Experimental Setup.} To evaluate our algorithms in a distributed simulation, we partition the input graph across $M$ \textbf{worker processors} and a \textbf{single coordinator processor}. Each worker handles a subset of nodes and their full adjacency lists, running LEDP algorithms locally. Workers communicate their privacy-preserving outputs to the coordinator, which aggregates the data and broadcasts new public information. This proceeds over multiple synchronous rounds, simulating a real-world distributed setting. We use $\mathbf{80}$ \textbf{worker processors} and \textbf{a single coordinator}. Each value is averaged over $\mathbf{5}$ \textbf{runs}, with a \textbf{4-hour} wall-clock limit per run.
}


\eat{\noindent\textbf{Parameters} We use the following values for the experiments $\eps = 1.0; \text{bias term} = 8$, the approximation factor $(2+\eta)$ set to be
$5.625$ (the same approximation factor as used in non-private $k$-core decomposition experiments~\cite{LSYDS22}), and privacy split fraction $\privatefraction = 0.8$ where we use $0.8 \cdot \eps$ for thresholding and $0.2 \cdot \eps$ for adding noise to the neighbor count. 
Additionally, we conduct an ablation study to analyze the impact of key parameters, $\eps, \privatefraction$, on the utility of our algorithms. Our theoretical proofs show that our approximation falls within a $(2+\eta)$-multiplicative factor. \eat{, demonstrating how these parameters influence accuracy.}}

\revision{
\noindent\textbf{Parameters} We use $\eps = 1.0$, bias term $=8$, approximation factor $(2+\eta)=5.625$ (matching non-private $k$-core experiments~\cite{LSYDS22}), and privacy split fraction $\privatefraction=0.8$ (allocating $0.8\cdot\eps$ to thresholding and $0.2\cdot\eps$ to level moving step). We also conduct an ablation study on key parameters $\eps$ and $\privatefraction$, and our theoretical proofs show the approximation falls within a $(2+\eta)$-multiplicative factor.
}

\noindent\textbf{Compute Resources} We run experiments on a Google Cloud \texttt{c3-standard-176} instance (3.3 GHz Intel Sapphire Rapids CPUs, 88 physical cores, 704 GiB RAM) with hyper-threading disabled. The code, implemented in Golang~\cite{DK15}, is publicly available~\cite{githubCode}.

\eat{\noindent\textbf{Datasets} We test our algorithms on a diverse set of $13$ real-world
undirected graphs from SNAP~\cite{leskovec2014snap}, the DIMACS Shortest Paths
Challenge road networks~\cite{demetrescu2008implementation}, and the Network
Repository~\cite{nr}, namely
\defn{email-eu-core}, \defn{wiki}, \defn{enron}, \defn{brightkite}, \defn{ego-twitter}, \defn{gplus}, \defn{stanford}, \defn{dblp}, \defn{brain},
\defn{orkut}, \defn{livejournal}, and \defn{friendster}.  We also used \defn{twitter}, a
symmetrized version of the Twitter network~\cite{kwak2010twitter}. \emph{Brain} is a highly dense human brain network
from NeuroData (\hyperlink{https://neurodata.io/}{https://neurodata.io/}).
We remove duplicate edges, zero-degree vertices, and self-loops. \cref{table:kcore}
reflects the graph sizes \emph{after} this removal and gives the exact maximum core number and the number of triangles. 
Exact triangle counts for some graphs are omitted due to time or memory constraints.}

\revision{
\noindent\textbf{Datasets} We test our algorithms on a diverse set of 13 real-world undirected graphs from SNAP~\cite{leskovec2014snap}, the DIMACS Shortest Paths Challenge road networks~\cite{demetrescu2008implementation}, and the Network Repository~\cite{nr}: \defn{email-eu-core}, \defn{wiki}, \defn{enron}, \defn{brightkite}, \defn{ego-twitter}, \defn{gplus}, \defn{stanford}, \defn{dblp}, \defn{orkut}, \defn{livejournal}, and \defn{friendster}. We also use \defn{twitter}, a symmetrized version of the Twitter network~\cite{kwak2010twitter}, and \defn{brain}, a highly dense human brain network from NeuroData (\hyperlink{https://neurodata.io/}{https://neurodata.io/}). We remove duplicate edges, zero-degree vertices, and self-loops. \cref{table:kcore} reflects the graph statistics after this removal. Exact triangle counts for some graphs are omitted due to time or memory constraints.
}

\begin{figure*}[!ht]
     \centering
     
     \begin{subfigure}[c]{0.33\textwidth}
         \centering
         \includegraphics[width =\textwidth]{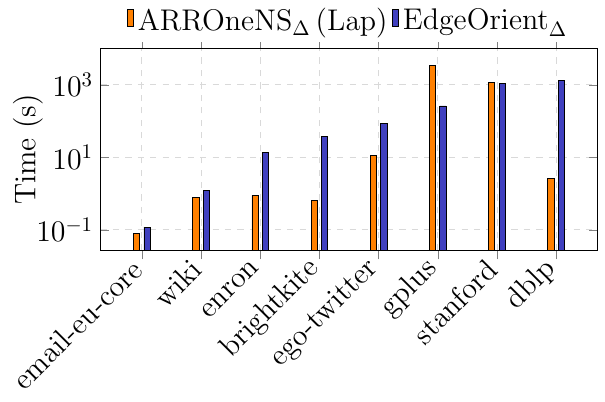}
         \caption{Average Response Time}
         \label{fig:tcount_runtime}
     \end{subfigure}
     \hfill
     \begin{subfigure}[c]{0.33\textwidth}
         \centering
         \includegraphics[width =\textwidth]{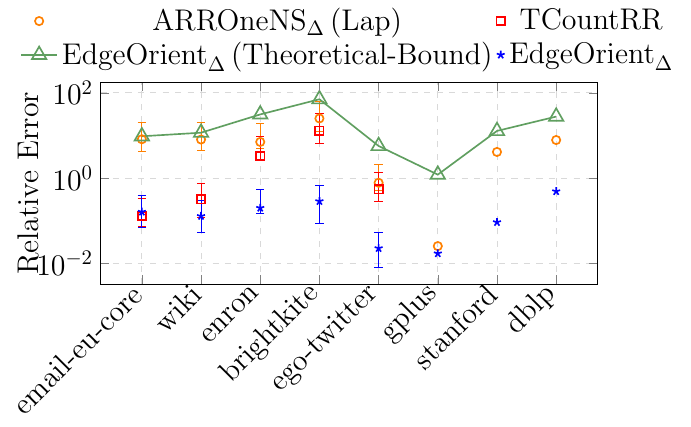}
         \caption{\revision{Relative Error}}
         \label{fig:tcount_rel_error}
     \end{subfigure}
     \hfill
     \begin{subfigure}[c]{0.33\textwidth}
         \centering
    \includegraphics[width=\textwidth]{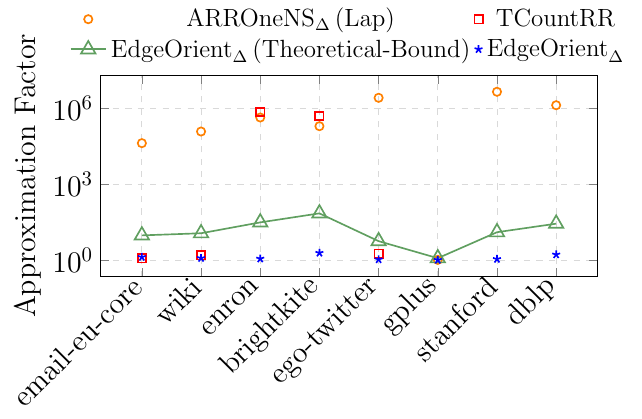}
    \caption{\revision{Average Approximation Factor}}
    \label{fig:tcount_approx}
     \end{subfigure}
     
        \caption{Triangle Counting Results.}
        \label{fig:tcount_all}
\end{figure*}


     


\begin{figure}[!ht]
     \centering
     
     \begin{subfigure}[c]{0.15\textwidth}
         \centering
         \includegraphics[width =\textwidth]{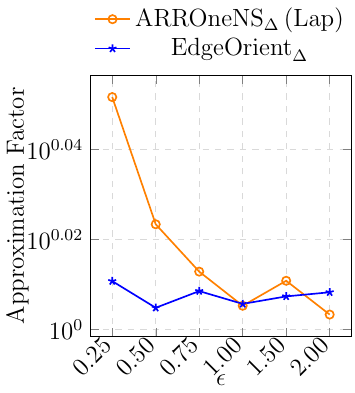}
         \caption{gplus}
         \label{fig:tcount_avg_approx_gplus_eps}
     \end{subfigure}
     \hfill
     \begin{subfigure}[c]{0.15\textwidth}
         \centering
         \includegraphics[width =\textwidth]{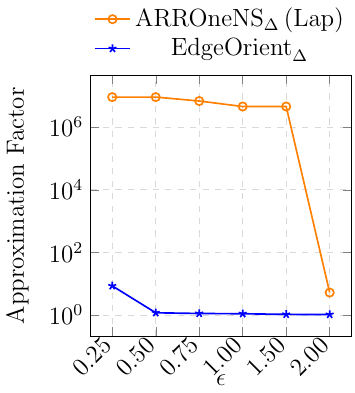}
         \caption{stanford}
         \label{fig:tcount_avg_approx_stanford_eps}
     \end{subfigure}
     \hfill
     \begin{subfigure}[c]{0.15\textwidth}
         \centering
         \includegraphics[width =\textwidth]{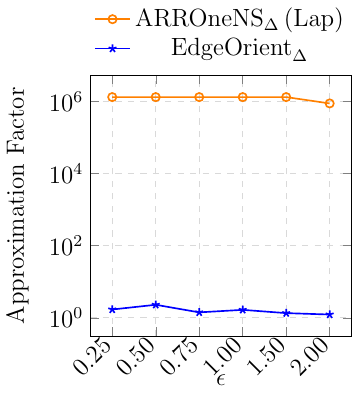}
         \caption{dblp}
         \label{fig:tcount_avg_approx_dblp_eps}
     \end{subfigure}
        \caption{Avg. approx factor for triangle counting vs. epsilon ($\eps$).}
        \label{fig:tcount_eps}
\end{figure}

\subsection{$k$-Core Decomposition}
\label{sec:kcore_exp}

\eat{\noindent\textbf{Response Time} \cref{fig:kcore_rounds} shows the number of communication rounds required by our $k$-core decomposition algorithm, \kcorealgo{}, compared to the baseline, $k$-Core,~\cite{DLRSSY22}. Our approach reduces the number of rounds by \defn{two orders of magnitude}, well within our theoretical bound (without noise) of \(O(\log(n) \cdot \log(D_{\max}))\), demonstrating its scalability and efficiency for large-scale graphs. 

While the comparison of the number of rounds clearly demonstrates the improvements achieved by \kcorealgo{}, a direct comparison of runtime between the two algorithms is not possible. In the baseline algorithm, the absence of bias terms prevents nodes from progressing through the levels in the LDS, effectively resulting in no  computations being performed during each round. Conversely, \kcorealgo{} ensures node movement in every round, thereby increasing the computational workload per round. For this reason, we focus on presenting and analyzing the runtime results of \kcorealgo{}.

\cref{fig:kcore_runtime} shows the response times of \kcorealgo{} across all datasets. Our algorithm efficiently processes large-scale graphs, including billion-edge datasets like \emph{twitter} and \emph{friendster}, within \textbf{four hours}. These results validate the scalability and practicality of \kcorealgo{}, demonstrating the impact of degree thresholding and bias terms in delivering both theoretical and practical improvements.}

\revision{
\noindent\textbf{Response Time.} \cref{fig:kcore_runtime} shows the response times of \kcorealgo{} across all datasets. Our algorithm efficiently processes large-scale graphs, including billion-edge datasets like \emph{twitter} and \emph{friendster}, within \textbf{four hours}. These results validate the scalability and practicality of \kcorealgo{}, demonstrating the impact of degree thresholding and bias terms in delivering both theoretical and practical improvements.

While measuring response time, a direct runtime comparison with the $k$-Core algorithm~\cite{DLRSSY22} is not meaningful. That algorithm does not include bias correction, which often results in large negative noise causing most nodes to remain stuck at level 0. Since the algorithm proceeds for a fixed number of communication rounds (matching the levels in the LDS), but no nodes move beyond the first level, negligible work is performed in subsequent rounds—making the runtime unrealistically low while returning (the same) poor approximation for many nodes. Instead, we compare the number of communication rounds. As shown in~\cref{fig:kcore_rounds}, \kcorealgo{} reduces the number of rounds by \defn{two orders of magnitude} compared to the baseline, aligning with our theoretical bound of $O(\log(n) \cdot \log(D_{\max}))$. This translates to significantly lower communication overhead and improved scalability in distributed settings.
}



\noindent\textbf{Accuracy} We calculate the approximation factor for each node as $a_v = \frac{\max\left(s_v, t_v\right)}{\min\left(s_v, t_v\right)}$, where $s_v$ is the approximate core number and $t_v$ is the true core number. We use this metric to be consistent with the best-known \emph{non-private}
$k$-core decomposition implementations~\cite{LSYDS22,dhulipala2017julienne}. 
These individual node approximation factors facilitate the computation of aggregate metrics: \eat{including}the average, maximum, $80^{th}$, and $95^{th}$ percentile approximation factors for each graph. The theoretical approximation bound in the absence of noise is calculated as $(2+\eta)$, which is $5.625$ for all graphs (labeled
Theoretical Bound). Additionally, we adjust this bound to account for noise by incorporating the additive error term $\frac{\log_{1+\eta/5}^3\left(D_{max}\right)}{\eps}$, where $n$ is the number of nodes, $D_{max}$ is the maximum degree (labeled Theoretical Bound (With Noise)). 
For this 
bound, we compute the effect of the additive noise on the multiplicative factor by adding 
$\frac{\log_{1+\eta/5}^3\left(D_{max}\right)}{\eps \cdot k_{\max}}$, where $k_{\max}$ is the maximum core number,
to $5.625$. Note that such a theoretical bound is a \emph{lower bound} on the effect of the additive 
error on the multiplicative factor; for smaller core numbers, e.g.\ $k_{\min} << k_{\max}$,
the additive error leads to a \emph{much greater} factor. 

\eat{\cref{fig:kcore_avg_approx,fig:kcore_80_approx,fig:kcore_95_approx} present the approximation factors achieved by our approach and the baseline algorithm~\cite{DLRSSY22}, alongside the theoretical bounds across various datasets. On average, our method maintains approximation factors below \(\mathbf{4x}\) across all datasets, with the $80^{\text{th}}$ percentile staying under \(\mathbf{5.5x}\), as illustrated in~\cref{fig:kcore_avg_approx} and~\cref{fig:kcore_80_approx}, demonstrating significantly lower variance compared to the baseline. These results remain well within the theoretical bounds without noise. Compared to the baseline, \kcorealgo{} consistently achieves better or comparable performance. Notably, for graphs such as \textit{brain} and \textit{gplus}, \kcorealgo{} reduces the approximation factors from $131.55$ to $4.11$ and from $52.71$ to $3.27$, improvements of over \textbf{31x} and \textbf{16x}, respectively. Similarly, for \textit{orkut} and \textit{wiki}, our algorithm improves the approximation factors by \textbf{6.6x} and \textbf{1.9x}, respectively.

However, for many graphs, the difference between the baseline and our approach is less pronounced. This is due to the baseline algorithm's inability to move nodes up levels in the LDS, which results in an approximate core number of $2.5$ for most nodes. Given that real-world graphs often exhibit small core numbers~(\cref{table:kcore}), the average approximation factor for the baseline becomes skewed, particularly for graphs with a significant proportion of low-core nodes. In contrast, for graphs with larger core numbers, such as \emph{gplus}, \emph{brain}, and \emph{orkut}, our method demonstrates significant improvements. The advantages of our algorithm become more evident when examining the $80^{\text{th}}$ and $95^{\text{th}}$ percentile approximation factors. As shown in~\cref{fig:kcore_80_approx}, our method consistently achieves a notable reduction in the $80^{\text{th}}$ percentile error compared to the baseline, with reductions of up to \textbf{42x}, \textbf{40x}, and \textbf{7.7x} for graphs with large core numbers, such as \textit{brain}, \textit{gplus}, and \textit{orkut}, respectively. Similarly, in the $95^{\text{th}}$ percentile error (\cref{fig:kcore_95_approx}), our method achieves reductions of nearly \textbf{49x}, \textbf{81x}, and \textbf{7x} for the same graphs. These results underscore the robustness and scalability of our algorithm in handling diverse graph structures with varying core number distributions.}

\revision{
\cref{fig:kcore_avg_approx,fig:kcore_80_approx,fig:kcore_95_approx} present the approximation factors achieved by our approach~(\kcorealgo{}) compared to the baseline $k$-core algorithm ($k$-Core) from~\cite{DLRSSY22}, and the randomized response baseline ($k$-CoreRR), along with theoretical bounds across various datasets. On average, our method maintains approximation factors below \(\mathbf{4x}\) across all datasets, with the $80^{\text{th}}$ percentile staying under \(\mathbf{5.5x}\), as illustrated in~\cref{fig:kcore_avg_approx} and~\cref{fig:kcore_80_approx}, demonstrating significantly lower variance compared to the baselines. These results remain well within the theoretical bounds without noise. Compared to $k$-Core~\cite{DLRSSY22}, \kcorealgo{} consistently achieves better or comparable performance. Notably, for graphs such as \textit{brain} and \textit{gplus}, \kcorealgo{} reduces the approximation factors from $131.55$ to $4.11$ and from $52.71$ to $3.27$, improvements of over \textbf{31x} and \textbf{16x}, respectively. Similarly, for \textit{orkut} and \textit{wiki}, our algorithm improves the approximation factors by \textbf{6.6x} and \textbf{1.9x}, respectively.
Compared to $k$-CoreRR, our algorithm achieves consistently better performance across all datasets—often by \textbf{two to three orders of magnitude}. Notably, on the \textit{dblp} graph, \kcorealgo{} improves the approximation factor by over \textbf{47x}. These results underscore the limitations of naive RR-based methods, which suffer from inflated graph density due to edge perturbation.

However, for many graphs, the difference between $k$-Core~\cite{DLRSSY22} and our approach is less pronounced. This is due to the $k$-Core algorithm's inability to move nodes up levels in the LDS, which results in an approximate core number of $2.5$ for most nodes. Given that real-world graphs often exhibit small core numbers~(\cref{table:kcore}), the average approximation factor for $k$-Core becomes skewed, particularly for graphs with a significant proportion of low-core nodes. In contrast, for graphs with larger core numbers, such as \emph{gplus}, \emph{brain}, and \emph{orkut}, our method demonstrates significant improvements. The advantages of our algorithm become more evident when examining the $80^{\text{th}}$ and $95^{\text{th}}$ percentile approximation factors. As shown in~\cref{fig:kcore_80_approx}, our method consistently achieves a notable reduction in the $80^{\text{th}}$ percentile error compared to $k$-Core, with reductions of up to \textbf{42x}, \textbf{40x}, and \textbf{7.7x} for graphs with large core numbers, such as \textit{brain}, \textit{gplus}, and \textit{orkut}, respectively. Similarly, in the $95^{\text{th}}$ percentile error (\cref{fig:kcore_95_approx}), our method achieves reductions of nearly \textbf{49x}, \textbf{81x}, and \textbf{7x} for the same graphs. These results underscore the robustness and scalability of our algorithm in handling diverse graph structures with varying core number distributions.
}

\noindent\textbf{Ablation Study} We analyze the effect of varying the privacy parameter, $\eps$, and the privacy split fraction, $\privatefraction$, on the utility of the $k$-core decomposition algorithm by plotting the average approximation factor across different datasets: \textit{gplus}, \textit{wiki}, and \textit{livejournal}, for our algorithm and the baseline. From the results shown in~\cref{fig:kcore_ablation}, we observe that the approximation factor improves as $\eps$ increases, which aligns with the theoretical expectations of differential privacy where higher $\eps$ allows for less noise and greater utility.

\eat{When varying the privacy split fraction, $\privatefraction$~(\cref{fig:kcore_ablation}), we note that an optimal value of $0.8$ consistently minimizes the approximation factor across all datasets. 
This is the case since degree thresholding affects the amount of noise that will be added per level for later computations.
This demonstrates that allocating $0.8 \cdot \eps$ for degree thresholding and $0.2 \cdot \eps$ for the level moving step strikes a balance between the two steps, ensuring better overall performance. Consequently, we use $\privatefraction=0.8$ for all other experiments.}

\revision{Further, we note that an optimal value of $0.8$ consistently minimizes the approximation factor across all datasets~(\cref{fig:kcore_ablation}). This is the case since degree thresholding affects the amount of noise that will be added per level for later computations. Thus, we use $\privatefraction=0.8$ for all other experiments, as it strikes a balance between the two steps, ensuring better overall performance.
}
\subsection{Triangle Counting}
\label{sec:tcount_exp}

\noindent\textbf{Response Time} \cref{fig:tcount_runtime} shows the average response times of our LEDP triangle counting algorithm, \tcountalgo{}, implemented in our distributed (LEDP-DS) framework. We compare our algorithm to \textbf{ARROneNS}$\mathbf{_\Delta\left(\textbf{Lap}\right)}$ from~\cite{IMC21communication}. While ARROneNS$_\Delta\left(\text{Lap}\right)$ is implemented in C++, our Golang implementation demonstrates comparable performance across most datasets, with notable speedups for large graphs. Specifically, for \emph{gplus}, our algorithm achieves a \textbf{speedup of $\mathbf{3.45}$x}, while for other graphs such as \emph{email-eu-core, stanford}, and \emph{wiki}, our performance is comparable despite the communication overhead. However, for the \emph{enron, brightkite}, and \emph{dblp} dataset, our algorithm is slower. This discrepancy is likely due to the smaller sizes of the graphs so a centralized algorithm will perform better than a distributed algorithm. 
These results emphasize the scalability and practicality of \tcountalgo{} for large-scale graph analysis, highlighting its ability to handle diverse graphs.

\noindent\textbf{Accuracy} Following the evaluation methodology \eat{outlined} in~\cite{IMC21communication}, we compute relative error for a graph using $\frac{\lvert \widetilde{\Delta} - \Delta \rvert}{\Delta}$, where $\widetilde{\Delta}$ represents the approximated triangle count and $\Delta$ the true triangle count. Additionally, we apply the theoretical bounds from~\cref{thm:triangle-counting} to our analysis. According to \cref{fig:tcount_rel_error}, our algorithm, \tcountalgo{}, consistently achieves relative errors ranging from $\mathbf{10^{-1}}$ to $\mathbf{10^{-2}}$ across all datasets, remaining well within the theoretical bounds.

Compared to ARROneNS$_\Delta\left(\text{Lap}\right)$, our algorithm gives better relative 
errors by \textbf{53x - 89x}, 
for all graphs except \emph{gplus}, 
where we achieve slightly better but comparable accuracy\eat{for \emph{gplus}}. In contrast to GroupRR~\cite{hillebrand2023communication}, which we could only run on the \emph{wiki} dataset due to the $4$-hour timeout limit, our algorithm not only matches accuracy but also has a response time \defn{two orders of magnitude faster}.

\revision{


On small or dense graphs like \textit{wiki} or \textit{email-eu-core}, the TCountRR—which incurs a dominant error term growing proportional to $\frac{n^{3/2}}{\eps^2}$—performs similarly to ours because $n$ is small~\cite{edenICALPS}. However, in larger graphs, $n^{3/2}$ explodes, whereas our error depends instead on $\sqrt{n}\,d$ (with $d\ll n$). TCountRR must also consider all $O(n^4)$ $4$-cycles, but our method only needs $\widetilde{O}(n^2 d^2)$ $4$-cycles. Consequently, on larger datasets such as \emph{enron} and \emph{brightkite}, we reduce the relative error (\cref{fig:tcount_rel_error}) by roughly \textbf{17x} and \textbf{45.5x}, respectively.
Compared to using 60 worker cores, we achieve up to a $1.41$x speedup.

}

To further analyze the limitations of prior approaches, we compute the multiplicative approximation factor of the triangle count, defined as $\frac{\max\left(\widetilde{\Delta}, \Delta\right)}{\max\left(1, \min\left(\widetilde{\Delta}, \Delta\right)\right)}$. Unlike relative error, this metric explicitly accounts for cases where algorithms, such as the one in~\cite{IMC21communication}, produce negative triangle counts, which severely undermines their utility in real-world scenarios. As shown in~\cref{fig:tcount_approx}, our algorithm achieves consistently small approximation factors across all graphs, remaining within $\mathbf{[1.01,1.93]}$, and reducing the factor by \textbf{six orders of magnitude} compared to~\cite{IMC21communication} \revision{and TCountRR}. \revision{This highlights the robustness of our approach, \tcountalgo{}, in maintaining low and stable approximation factors across diverse graph structures.}\eat{, demonstrating superior performance compared to previous methods that fail to deliver consistent utility across different datasets.}


\noindent\textbf{Ablation Study} To evaluate the impact of the privacy parameter \(\eps\) on utility, we analyze and plot the approximation factors for varying values of \(\eps\) across three representative graphs: \textit{gplus}, \textit{dblp}, and \textit{stanford}\revision{, comparing} against those reported in~\cite{IMC21communication}. Our results demonstrate that for \textit{stanford} and \textit{dblp}, our algorithm achieves a significant reduction in approximation factor by up to \textbf{six orders of magnitude}. For \textit{gplus}, while the approximation factors are comparable for higher values of \(\eps\), our algorithm achieves better utility for smaller values of \(\eps\). Specifically, at \(\eps=0.25\), our algorithm achieves an approximation factor of $1.025$, compared to $1.126$ for ARROneNS$_\Delta\left(\text{Lap}\right)$, an improvement of \textbf{9.8\%}. Similarly, at \(\eps=0.50\), our algorithm achieves a factor of $1.011$, compared to $1.055$, an improvement of \textbf{4.2\%}
This highlights the ability of our approach to offer better utility even under stricter privacy constraints, underscoring its advantage over~\cite{IMC21communication}. Additionally, we observe that the utility of our algorithm consistently improves as the privacy parameter \(\eps\) increases, \revision{aligning} with the theoretical expectations of differential privacy, where higher values of \(\eps\) results in less noise.
\section{Conclusion}
\label{sec:conclusion}

\eat{Large-scale network analysis (social, biological, etc.) often involves privacy concerns when analyzing sensitive data. This paper addresses this using local edge differential privacy (LEDP), where nodes protect their private information without sending \eat{the information}it to 
a trusted central authority. We introduce novel LEDP algorithms for key graph statistics, $k$-core decomposition and triangle counting, improving upon prior work, and present the first distributed implementation framework to simulate distributed LEDP algorithms on one machine and evaluate them on real-world data. Experiments show significant improvements, where the $k$-core decomposition approximations are bounded by the theoretical guarantees of non-private algorithms on average and our triangle count errors are reduced by nearly two orders of magnitude over previous work, with comparable runtimes\eat{comparable to existing centralized implementations}. In ongoing work, we are extending this framework to multi-machine testing with network communication, acknowledging additional latency and overhead, and to support a broader range of algorithms. \revision{We are also exploring variants of our algorithms that utilize multiple coordinators to eliminate the current single point of failure and alleviate bottlenecks associated with centralized coordination.} We encourage the community to utilize our open-source framework (available at~\cite{githubCode}\eat{\qq{change this citation to the github link})} for evaluating their LEDP graph algorithms, fostering further advancements in this exciting field.}

\revision{

Large-scale network analysis often raises privacy concerns for sensitive data. We employ local edge differential privacy (LEDP), letting nodes protect their edges without a trusted authority. We propose novel LEDP algorithms for $k$-core decomposition and triangle counting that surpass prior work in accuracy and theoretical guarantees and introduce the first distributed framework to simulate these algorithms on a single machine. Experiments show our $k$-core approximations meet non-private theoretical bounds on average, while triangle counting errors are nearly two orders of magnitude lower than previous LEDP methods, with similar runtimes. Ongoing work extends this framework to multi-machine testing with real communication to capture added latency and explores multi-coordinator variants to eliminate single points of failure and reduce bottlenecks. These enhancements will support a broader range of LEDP algorithms. Our open-source framework (at~\cite{githubCode}) invites the community to build on LEDP graph algorithms.

}








\bibliographystyle{ACM-Reference-Format}
\bibliography{ref}






\end{document}